
\documentclass[12pt]{article}
\usepackage{xcolor}
\usepackage[utf8]{inputenc}
\usepackage{algorithm,bm,algorithmic}
\usepackage{lmodern}
\usepackage{amsmath,amssymb,amsthm}
\usepackage{comment}
\usepackage{rotating}

\usepackage{natbib}
\usepackage{indentfirst}
\usepackage[margin=1in,top=1in,
            nohead,
            pdftex]{geometry}
\usepackage{url}
\usepackage{graphicx}
\usepackage{subcaption}
\usepackage[colorlinks=true,citecolor=blue,urlcolor=blue,linkcolor=blue]{hyperref}
\usepackage{booktabs}

\def\u{{\boldsymbol{u}}}
\def\e{{\boldsymbol{e}}}

\def\bmu{{\boldsymbol{\mu}}}

\def\A{{\mathbf{A}}}
\def\I{{\mathbf{I}}}
\def\S{{\mathbf{S}}}
\def\V{{\mathbf{V}}}
\def\X{{\mathbf{X}}}
\def\W{{\mathbf{W}}}
\def\J{{\mathbf{J}}}

\def\bLambda{{\mathbf{\Lambda}}}
\def\bSigma{{\mathbf{\Sigma}}}
\def\bOmega{{\mathbf{\Omega}}}
\def\bGamma{{\mathbf{\Gamma}}}
\def\bDelta{{\mathbf{\Delta}}}

\def\T{{\mathcal{T}}}

\def\tr{{\rm tr}}
\def\pr{\mathbb{P}}
\def\E{\mathbb{E}}
\def\var{\mathrm{Var}}

\def\vec{\mathrm{vec}}

\DeclareMathOperator*{\argmin}{arg\,min}

\newcommand{\red}[1]

\newtheorem{theorem}{Theorem}
\newtheorem{definition}{Definition}

\newtheorem{assumption}{Assumption}
\newtheorem{lemma}{Lemma}
\numberwithin{equation}{section}

\newtheorem{remark}{Remark}

\title{Tensor Elliptical Graphic Model}
\author{Jixuan Liu$^1$, Zhengke Lu$^1$, Le Zhou$^2$ and Long Feng$^1$ and Zhaojun Wang$^1$\\
$^1$Nankai University and $^2$Hongkong Baptist University}
\date{\today}

\begin{document}

\maketitle
\begin{abstract}
     We address the problem of robust estimation of sparse high dimensional tensor elliptical graphical model. Most of the research focus on tensor graphical model under normality. To extend the tensor graphical model to more heavy-tailed scenarios, motivated by the fact that up to a constant, the spatial-sign covariance matrix can approximate the true covariance matrix when the dimension turns to infinity under tensor elliptical distribution, we proposed a spatial-sign-based estimator to robustly estimate tensor elliptical graphical model, the rate of which matches the existing rate under normality for a wider family of distribution, i.e. elliptical distribution. We also conducted extensive simulations and real data applications to illustrate the practical utility of the proposed methods, especially under heavy-tailed distribution.

{\it Keywords:} graphic lasso, tensor elliptical distribution, tensor graphic model, spatial-sign.
\end{abstract}
\section{Introduction}
Tensor-valued data, also known as multi-dimensional data, arise in a wide range of scientific fields, such as neuroimaging\citep{zhou2013tensor,karahan2015tensor}, genomics\citep{hore2016tensor} and economics\citep{chen2023statistical}. As the complexity of data increases in modern applications, tensor-based methods are receiving increasing attention for their ability to preserve and utilize the inherent multi-directional structure of the data, rather than collapsing it into vectors and risking the loss of structural information and the development of grossly imbalanced sample sizes and data dimensions. A variety of statistical techniques have been proposed for analyzing tensor-valued data, for example, Gaussian graphical models for capturing conditional dependencies across modes\citep{sun+wang+liu+cheng-2015-Tlasso,lyu2019tensor,Min02012022},  {tensor-based classification and clustering} \citep{lyu2017discriminating,pan2019covariate,Luo2022,han2022exact,Mai02102022,wang2024,hou2024}, low-rank tensor decompositions and dimension reduction\citep{li2017parsimonious,zhang2018tensorsvd,zhang2019optimalavd,wangning2022,Han2023,Li2025}, and tensor regression models\citep{zhou2013tensor,hao2021sparsereg,zhou2023partially,Luo2024}.

However, most of the research focus on the tensor-valued data under normality, particularly in  estimating precision matrices that characterize conditional dependence structures across modes. In recent years, the estimation of sparse precision matrices under Kronecker-structured covariance has attracted increasing interest, with developments extending from matrix-valued data to tensor-valued data. For Gaussian graphical models, \cite{sun+wang+liu+cheng-2015-Tlasso} proposed a cyclic estimation method based on a non-convex optimization problem. \cite{Min02012022} developed a fast and separable estimation approach, while \cite{xu2017gd} introduced a gradient descent based method to further improve computational efficiency and estimation accuracy by simulations.

While the normality assumption contributes to theoretical analysis and methodological development, it is hard to satisfy in practice. For instance, neuroimaging data frequently exhibit heavy tails or other forms of distributional heterogeneity that deviate from normality\citep{wang2023highreg_tdist}. As a result, methods that rely heavily on the normality assumption may yield misleading conclusions or suffer from reduced robustness in real-data applications.

In this paper, we focus on the problem of estimating the precision matrices given a sequence of random $K$th-order tensors,  {following} the tensor elliptical distribution\citep{arashi-2023-tensor_elliptical}. Tensor elliptical distributions are a extension of the elliptical distribution in multivariate case and offer a flexible extension of the multivariate normal family. As a class of symmetric distributions that allow for heavier or lighter tails, they provide a more realistic framework for modeling complex data structures encountered in practice. In Gaussian graphical model, the sparse precision matrix of each way characterizes the conditional independence among the unfolded tensor data. In non-Gaussian graphical models, the precision matrices similarly capture conditional dependencies within each mode, although the interpretation may vary depending on the underlying distributional assumptions. Beyond graphical modeling and structure recovery, precision matrices also play a key role in tensor classification\citep{pan2019covariate,chen2024high}. Developing more robust estimators of precision matrices—particularly under non-Gaussian or heavy-tailed distributions—can significantly enhance classification accuracy and stability in high-dimensional applications\citep{lu2025robust}.

The spatial sign is a commonly used nonparametric tool for robust inference and has demonstrated excellent performance under heavy-tailed distributions. For vector-valued data, spatial sign–based methods have been extensively studied and applied to high-dimensional one-sample and two-sample testing\citep{zou2014multivariate,feng2016multivariate,cheng2023statistical}, principal component analysis\citep{feng-2024-SSPCA}, elliptical graphical model recovery and discriminant analysis\citep{lu2025robust}. These approaches exhibit favorable theoretical and empirical properties under non-Gaussian noise. In this paper, we extend the spatial sign to the estimation of precision matrices for tensor-valued data, and develop a spatial-sign based fast and efficient estimation procedure that is robust to heavy-tailed data while preserving the inherent tensor structure. The contributions of this paper are outlined as follows:
\begin{itemize}
    \item[(i)]We propose a spatial-sign based method  {for estimating precision matrices of tensor data}, leveraging spatial signs which are well-established tools for robust analysis  {of} heavy-tailed data. Our new method not only provides superior  performance under heavy-tailed distributions but also remains competitive with mean-based methods when the data are normally distributed.
    \item[(ii)]The proposed method estimates the precision matrix for each mode separately, without iterative updates across modes. The entire estimation procedure is completed in a single step, which substantially reduces computational cost, especially in high-dimensional settings. This leads to the method being faster and more efficient.
\end{itemize}

The remainder of the paper is organized as follows. Section 2 introduces the tensor graphical model along with the corresponding estimation methodology. Section 3 presents simulation studies to evaluate the performance of the proposed approach. Section 4 demonstrates the application of the method to a real dataset. All theoretical proofs are provided in the Appendix.

\textbf{Notations:} For vector $\boldsymbol x$, we use the notation $\Vert \boldsymbol x\Vert_2$ and $\Vert \boldsymbol x\Vert_\infty$ to denote its Euclidean norm and maximum-norm respectively. For matrix $\A=(A_{i,j})\in \mathbb R^{d\times d }$, we denote $\Vert\A\Vert_\infty$, $\Vert\A\Vert_2$ and $\Vert\A\Vert_F$ as its max, spectral and Frobenius norm respectively. We define $\Vert \A\Vert_{L_1}=\max_{1\leq j\leq d}\sum_{i=1}^d\vert A_{i,j}\vert$ and $\|\A\|_{L_\infty}=\max_{1\leq i\leq d}\sum_{j=1}^d\vert A_{i,j}\vert$ as its $L_1$ and $L_\infty$ norm respectively and $\Vert\A\Vert_{1,\text{off}}=\sum_{i\not= j}\vert A_{i,j}\vert$ as its off-diagonal $l_1$ norm. Denote $\vec(\A)$ as the vectorization of $\A$ which stacks the columns of $\A$, $\tr(\A)$ be the trace of $\A$ and $[\A]_{i,j}$ be the element in $(i,j)$, $i,j\in\{1,\cdots,d\}$. We follow the tensor notations in \cite{kolda+Bader-2009-tensor}. The norm of a tensor $\T$ is defined as $\Vert\T\Vert=\{\sum_{i_1=1}^{I_1}\cdots\sum_{i_K=1}^{I_K} t_{i_1,\ldots,i_K}^2\}^{1/2}$. In addition, for a list of matrices $\{\A_1,\ldots,\A_K\}$ with $\A_k\in \mathbb R^{m_k\times m_k}$, $k=1,\ldots,K$, we define $\T\times \{\A_1,\ldots,\A_K\}:=\T\times_1\A_1\times_2\ldots\times_K \A_K$. Let $\T^{\boldsymbol p}$ denote the space of all tensors $ \mathcal{X}$, where $\mathcal{X}$ is a tensor of order $k$, $\boldsymbol p = (p_1,\ldots,p_k)$. Let the space of matrix $\T^{\boldsymbol p}_\otimes=\{\X:\X=\X_1\otimes\ldots\otimes\X_k,\X_i\in \mathbb R^{p_i\times p_i} \}$. For any random variable $x \in \mathbb{R}$, we define the sub-Gaussian and sub-exponential norms of $x$ as $\|x\|_{\psi_2} = \sup_{k \geq 1} k^{-1/2} ( \mathbb{E}|x|^k )^{1/k}$ and $\|x\|_{\psi_1} = \sup_{k \geq 1} k^{-1} ( \mathbb{E}|x|^k )^{1/k}$, respectively. 

\section{Tensor Elliptical Graphic Model}
Section~\ref{subsec:model} introduces the tensor elliptical graphical model. In Section~\ref{subsec:algorithm}, we present an estimation method for precision matrices across modes, and the theoretical property of the proposed estimators is established in Section~\ref{subsec:theorem}.

\subsection{Model settings}\label{subsec:model}

In this paper, we consider a sequence of random $K$th-order tensors, that is, $\T_i\in \mathbb R^{p_1\times\cdots p_K},i=1,\ldots,n$,  {each following} the tensor elliptical distribution. The tensor elliptical (TE) distribution is assumed as follows. 

Let $\u^{(p^*)}$, $p^*=\prod_{i=1}^K p_i$ denote a random vector distributed uniformly on the unit sphere surface in $\mathbb R^{p^*}$
. A random tensor $\boldsymbol{\mathcal { X }}$ is TE of order $K$, denoted by $\boldsymbol{\mathcal { X }} \sim \mathcal{E}_{\boldsymbol{p}}(\boldsymbol{\mu}, \mathcal{S}^*_{\Sigma}, \psi)$, if
\begin{equation*}
\boldsymbol{x}=\operatorname{vec}(\boldsymbol{\mathcal { X }})=\bmu+v (\bSigma^*)^{\frac{1}{2}} \u^{\left(p^*\right)},
\end{equation*}
where $\boldsymbol{x}, \boldsymbol{\mu} \in \mathcal{T}^{\boldsymbol{p}}$, $\mathcal{S}^*_{\Sigma}=\{\bSigma_1^*,\cdots,\bSigma_K^*\}$, $\bSigma^*=\bSigma_1^*\otimes\cdots\otimes\bSigma_K^*$, $\bSigma_k^*$, $k=1,\ldots,K$ are symmetric and invertible, $\boldsymbol{p}=\left(p_1, \ldots, p_K\right)$, a random variable $v \geq 0$ is independent of $\boldsymbol{u}^{\left(p^*\right)}$, and $v \sim F$,  {where $F(\cdot)$ is some cumulative distribution function (cdf) over $[0, \infty)$.}

The pdf of the TE distribution is given by
$$
f_{\mathcal{X}}(\boldsymbol{x})=|\bSigma^*|^{-\frac{1}{2}} g\left((\boldsymbol{x}-\boldsymbol{\mu})^{\prime} (\bSigma^*)^{-1}(\boldsymbol{x}-\boldsymbol{\mu})\right),
$$
where $g(\cdot)$ is a non-negative function(density generator) satisfying

$$
\int_{\mathbb{R}^{+}} y^{\frac{1}{2} p^*-1} g(y) \mathrm{d} y<\infty.
$$

For further discussions and properties of TE distributions, see \citet{arashi-2023-tensor_elliptical}.




\subsection{Algorithm}\label{subsec:algorithm}
 We aim to estimate the true precision matrices $\left(\boldsymbol{\Omega}_1^*, \ldots, \boldsymbol{\Omega}_K^*\right)$ where $\boldsymbol{\Omega}_k^*=p_k^{-1/2}\bSigma_k^{*-1}$, $k=1, \ldots, K$. To address the identifiability issue in the parameterization of the tensor elliptical distribution, we assume that $\left\|\bOmega_k^*\right\|_F=$ 1 for $k=1, \ldots, K$. This renormalization does not change the graph structure of the original precision matrix.

A standard approach to estimating sparse precision matrices in tensor Gaussian graphical model is to minimize the penalized negative log-likelihood function \citep{sun+wang+liu+cheng-2015-Tlasso, xu2017gd, Min02012022},
 \begin{equation*}
     \tilde{q}_n(\bOmega_1,\ldots,\bOmega_K):=\frac{1}{p^*}\tr\{\hat\bSigma(\bOmega_K\otimes\cdots\otimes\bOmega_1)\}-\sum_{k=1}^K\frac{1}{p_k}\log\vert \bOmega_k\vert+\sum_{k=1}^K P_{\lambda_k}(\bOmega_k),
 \end{equation*}
 where $\hat{\bSigma}=n^{-1}\sum_{i=1}^n \vec(\T_i-\bmu)\vec(\T_i-\bmu)^\top$ 
 and $P_{\lambda_k}(\cdot)$ is a penalty function indexed by the tuning parameter $\lambda_k$. It builts on the sample covariance matrix and utilizes the framework of the famous Glasso method\citep{friedman+Hastie+Tibshirani-2008-glasso} originally developed for vector-valued data. However, the sample covariance matrix perform poorly when the data deviate from Gaussianity. 
 
 Spatial sign is an extension of the univariate sign to vectors. The spatial sign function and spatial-sign covariance matrix are defined as 
 \begin{equation*}
     U(\boldsymbol{x})=\|\boldsymbol{x}\|^{-1}\boldsymbol{x}\mathbb{I}(\boldsymbol{x}\neq \boldsymbol{0}) \text{~~and~~}\S=\E \left\{U(\boldsymbol x-\bmu)U(\boldsymbol x-\bmu)^\top\right\},
 \end{equation*}
  The spatial-sign covariance matrix retains the same eigenvectors of the sample covariance matrix\citep{oja2010multivariate}, and it is a robust and nonparametric alternatives to shape matrix estimation.

{Motivated by the robustness of spatial-sign covariance matrix, we define the spatial sign for tensor $\T$ as $U(\T)=\T/\Vert \T\Vert\mathbb{I}(\T\neq \boldsymbol{0})$, and we consider a spatial-sign based modified negative log-likelihood}
 \begin{equation*}
     l(\bOmega_1,\ldots,\bOmega_K):=\frac{1}{p^*}\tr\{\hat\S(\bOmega_K\otimes\cdots\otimes\bOmega_1)\}-\sum_{k=1}^K\frac{1}{p_k}\log\vert \bOmega_k\vert,
 \end{equation*}
where {$\hat\S=p^* n^{-1}\sum_{i=1}^n\vec\big(U(\T_i-{\bmu})\big)\vec\big(U(\T_i-{\bmu})\big)^\top$}. If $\bmu$ is unknown, we replace it with the sample spatial median $\hat{\bmu}$, that is,
\begin{equation}\label{eq:mu}
\hat{\bmu}=\argmin_{\bmu\in \T^{\boldsymbol p}} \sum_{i=1}^n \Vert \T_i-\bmu\Vert.
\end{equation}
 To encourage the sparsity of each precision matrix in the high dimensional scenario, we consider a penalized estimator which minimizes 
\begin{equation}\label{model:SSTGM}
    q_n(\bOmega_1,\ldots,\bOmega_K):=\frac{1}{p^*}\tr\{\hat\S(\bOmega_K\otimes\cdots\otimes\bOmega_1)\}-\sum_{k=1}^K\frac{1}{p_k}\log\vert \bOmega_k\vert+\sum_{k=1}^K P_{\lambda_k}(\bOmega_k),
\end{equation}
where $P_{\lambda_k}(\cdot)$ is a penalty function indexed by the tuning parameter $\lambda_k$ and we focus on the lasso penalty\citep{tibshirani-1996-lasso} $P_{\lambda_k}\left(\boldsymbol{\Omega}_k\right)=\lambda_k \sum_{i \neq j}\left|\left[\boldsymbol{\Omega}_k\right]_{i, j}\right|$. The estimation procedure is also suitable for a wide range of penalty functions, such as the SCAD penalty, the adaptive lasso penalty, the MCP penalty and the truncated $\ell_1$ penalty.

We call the model defined by Equation \eqref{model:SSTGM} as Sparse Spatial-sign Tensor Elliptical Graphical Model. The model  {is reduced to} the sparse tensor graphical model\citep{sun+wang+liu+cheng-2015-Tlasso} if the spatial-sign matrix $\hat\S$ 
is replaced by the vectorized sample covariance matrix $\hat{\bSigma}=n^{-1}\sum_{i=1}^n \vec(\T_i-\bmu)\vec(\T_i-\bmu)^\top$ and the data are assumed to follow a tensor normal distribution. Our framework is based on elliptical distributions, which include the tensor normal distribution as a special case and further allow for heavy-tailed distributions, and captures the graphical structure of higher-order tensor-valued data.

A natural approach to solving Equation \eqref{model:SSTGM} is to utilize a cyclic optimization scheme by leveraging the biconvexity of the objective function, as in \citet{sun+wang+liu+cheng-2015-Tlasso}, where  precision matrix for each mode is updated based on the estimates of the others. However,  {with} each update depending on the most recent estimates of all other modes,  {such approach} prevents parallelization and necessitates iterative procedures.  As a result, the computational cost can be substantial in high-dimensional settings. To address these issues, \citet{Min02012022} proposed a fast estimation procedure for the same  {optimization problem} that avoids iterative updates and  {demonstrates great performance} comparing with \cite{sun+wang+liu+cheng-2015-Tlasso}. It  {suggests} that directly solving the joint optimization problem may not be the  {most} efficient strategy,  {with the consideration of both computational efficiency and estimation accuracy}.

Motivated by \cite{Min02012022}, we propose an algorithm by estimating  {each $\bOmega_k$} separately. For estimating the $k$th precision matrix $ \bOmega_k $, we fix the remaining $ K-1 $ precision matrices $\{\tilde\bOmega_1,\ldots,\tilde\bOmega_{k-1},\tilde\bOmega_{k+1},\ldots,\tilde\bOmega_K\}$ and minimize the following loss function  {over $\bOmega_k$,} \begin{equation}\label{eq:bi-SSTGM}
    L(\bOmega_k):=\frac{1}{p_k}\tr(\hat\S_k\bOmega_k)-\frac{1}{p_k}\log \vert \bOmega_k\vert+\lambda_k\Vert\bOmega_k\Vert_{1,\text{off}},
\end{equation}
where $ \hat\S_k:=p_k n^{-1} \sum_{i=1}^n \V_i^k \V_i^{k \top}$, $\V_i^k:=\left[U(\T_i-{\bmu}) \times\left\{\tilde\bOmega_1^{1 / 2},\ldots, \tilde\bOmega_{k-1}^{1 / 2}, \I_{p_k}, \tilde\bOmega_{k+1}^{1 / 2}, \ldots, \tilde\bOmega_K^{1 / 2}\right\}\right]_{(k)}$ with $\times$ the tensor product operation and $[\cdot]_{(k)}$ the mode-$k$ matricization operation. \eqref{eq:bi-SSTGM}  {comes from the fact that} $\V_i^k=\left[U(\T_i-{\bmu})\right]_{(k)}\left(\tilde\bOmega_K^{1 / 2} \otimes \cdots \otimes \tilde\bOmega_{k+1}^{1 / 2} \otimes \tilde\bOmega_{k-1}^{1 / 2} \otimes \cdots \otimes \tilde\bOmega_1^{1 / 2}\right)^{\top}$, according to the properties of mode-$k$ matricization shown by \cite{kolda+Bader-2009-tensor}. Note that minimizing \eqref{eq:bi-SSTGM} corresponds to estimating vector-valued graphical model and can be solved efficiently via the glasso algorithm\citep{friedman+Hastie+Tibshirani-2008-glasso}. For any $\tilde{\bOmega}_\ell$ in $\V_i^k$, $i=1,\ldots,n,k=1,\ldots,K$, we restrict them  {such} that $\|\tilde{\bOmega}_\ell\|_2\lesssim p_\ell^{-1/2}$. This condition ensures that the eigenvalues of each initial precision matrix $\tilde{\bOmega}_\ell$ do not vary too widely.  {For example, one may choose} $\tilde{\bOmega}_\ell=p_\ell^{-1/2}\I_{\ell}$.  {We hereby adopt the same choice as in} \cite{Min02012022},  {i.e.}
\begin{equation}\label{eq:init_Omega}
    \begin{aligned}
        \tilde{\bOmega}_{\ell}=\begin{cases}
\left\{p_\ell n^{-1}\sum_{i=1}^n \left[U(\T_i-{\bmu})\right]_{(\ell)} \left[U(\T_i-{\bmu})\right]_{(\ell)}^\top\right\}^{-1}, & \quad np^* > p_\ell^2(p_\ell-1)/2; \\
\I_{p_\ell}, & \quad np^* \leq p_\ell^2(p_{\ell}-1)/2;
\end{cases}
    \end{aligned}
\end{equation}
 and normalized them to achieve $\Vert\tilde{\bOmega}_\ell\Vert_F=\Vert\hat{\bOmega}_\ell\Vert_F=1$, $\ell\in\{1,\ldots,K\}$. The details for solving precision matrix $\hat{\bOmega}_1,\ldots,\hat{\bOmega}_K$ are summarized in Algorithm \ref{alg:1}.
\begin{algorithm}[!ht]
    \renewcommand{\algorithmicrequire}{\textbf{Input:}}
	\renewcommand{\algorithmicensure}{\textbf{Output:}}
\caption{Solve sparse spatial-sign tensor elliptical graphical model via Spatial-Sign Separate tensor lasso}    \label{alg:1}
    \begin{algorithmic}[1] 
        \REQUIRE  Tensor $\T_1,\ldots,\T_n$, tuning parameters $\lambda_1,\ldots,\lambda_K$. 
	    \ENSURE $\hat\bOmega_1,\ldots,\hat\bOmega_K$; 
    \STATE Compute and normalize $\tilde{\bOmega}_{\ell}$, $\ell=1,\ldots,K$ by Equation \eqref{eq:init_Omega};
        \FOR {each $k=1,\ldots,K$}
            \STATE Solve Equation \eqref{eq:bi-SSTGM} for $\hat{\bOmega}_k$ via glasso \citep{friedman+Hastie+Tibshirani-2008-glasso};
            \STATE Normalize $\hat{\bOmega}_k$ such that $\Vert \hat{\bOmega}_k\Vert_F=1$.
        \ENDFOR
        \STATE \textbf{return} $\hat{\bOmega}_1^{(t)},\ldots,\hat{\bOmega}_K^{(t)}$.
    \end{algorithmic}
\end{algorithm}
It is clear that the estimation procedure of the precision matrices across different modes is mutually independent, thus our proposed method is separable. This separability allows for the use of parallel computing techniques to further accelerate the algorithm. 

\subsection{Theoretical Results}\label{subsec:theorem}

In this section, we show the estimation error for $\hat\bOmega_k$ and the proof is deferred to the appendix.
The assumptions are as follows:
\begin{assumption}\label{ass:eigenvalues}
    (Bounded eigenvalues) For any $k\in\{1,\ldots,K\}$, there is a constant $C_1>0$ such that,
    \begin{equation*}
        0<C_1\leq \lambda_{\min}(\bSigma_k^*)\leq \lambda_{\max}(\bSigma_k^*)\leq 1/C_1<\infty,
    \end{equation*}
    where $\lambda_{\min}(\bSigma_k^*)$ and $\lambda_{\max}(\bSigma_k^*)$ are the minimal and maximal eigenvalues of $\bSigma_k^*$ respectively.
\end{assumption}
\begin{assumption}\label{ass:lambda_inverse}
    There exists a constant $T>0$, $\max\{\Vert \bSigma^*\Vert_{L_1},\Vert (\bSigma^*)^{-1}\Vert_{L_1}\}\leq T$, for any $k\in \{1,\ldots,K\}$.
\end{assumption}
\begin{remark}
Assumption \ref{ass:eigenvalues} requires that the eigenvalues of the true covariance matrices are uniformly bounded. It implies that $\tr(\bSigma_k^*)\asymp p_k$ and $\Vert\bSigma_k^*\Vert_\infty<\infty$ for any $k\in\{1,\ldots,K\}$. Such an assumption is commonly used to establish the precision matrix estimation consistency in graphical models in the literatures\citep{he2014graphical,sun+wang+liu+cheng-2015-Tlasso,lyu2019tensor,Min02012022}. Assumption \ref{ass:lambda_inverse}  restricts the correlations across dimensions  {to be} not too large, which is also considered in \cite{lu2025robust}. Moreover, the constraint imposed on $(\bSigma^*)^{-1}$ corresponds to a special case of the matrix class introduced in \cite{bickel2008covariance,cai2011constrained}. 
\end{remark}
\begin{assumption}\label{ass:lambda}
    (Tuning) For any $k\in \{1,\ldots,K\}$ and some constant $C_2>0$, the tuning parameter $\lambda_k$ satisfies $1/C_2\{n^{-1/2}p_k^{1/2}(p^*)^{-1}(\log p_k)^{1/2}+ p_k^{-1}(p^*)^{-1/2}\}\leq \lambda_k\leq C_2\{n^{-1/2}p_k^{1/2}(p^*)^{-1}(\log p_k)^{1/2}+ p_k^{-1}(p^*)^{-1/2}\}$.
\end{assumption}

  Before characterizing the statistical error, we define a sparsity parameter for $\bOmega_k^*$, $k=1,\ldots,K$. Let $\mathbb S_k:=\{(i,j):[\bOmega_k^*]_{i,j}\neq 0\}$. Define the sparsity parameter as $s_k:=\vert \mathbb S_k\vert -p_k$, representing the number of nonzero entries in the off-diagonal component of $\bOmega_k^*$. 

\begin{theorem}\label{thm:estimation_error}
    Suppose the Assumption \ref{ass:eigenvalues}-\ref{ass:lambda} hold and $s_k=O(p_k)$ 
     for $k\in\{1,\ldots,K\}$. Then, for any $k\in \{1,\ldots,K\}$ 
    , we have,
    \begin{equation*}
      \Vert \hat{\bOmega}_k-\bOmega_k^*\Vert_F=O_p\left(\sqrt{\frac{(p_k+s_k)p_k\log p_k}{n p^*}}\right)+O\left(\sqrt{\frac{p_k+s_k}{p_k^2}}\right).
    \end{equation*}
\end{theorem}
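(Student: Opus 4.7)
The plan is to adapt the standard graphical-lasso consistency argument to the spatial-sign setting while handling the tensor structure through the initial whitening by $\tilde\bOmega_\ell$. The framework follows \cite{sun+wang+liu+cheng-2015-Tlasso,Min02012022}, with the spatial-sign bias analysis analogous to that in \cite{lu2025robust}.

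First, I would convert the KKT optimality of $\hat\bOmega_k$ in (\ref{eq:bi-SSTGM}) into a basic inequality for $\bDelta_k := \hat\bOmega_k - \bOmega_k^*$. A Taylor expansion of $-p_k^{-1}\log|\bOmega|$ around $\bOmega_k^*$ together with the optimality of $\hat\bOmega_k$ yields
\begin{equation*}
\frac{1}{p_k}\vec(\bDelta_k)^\top H\,\vec(\bDelta_k) \le \langle G_k,\bDelta_k\rangle + \lambda_k\big(\|\bOmega_k^*\|_{1,\text{off}} - \|\hat\bOmega_k\|_{1,\text{off}}\big),
\end{equation*}
where $H$ is a positive-definite remainder that is bounded below by Assumption \ref{ass:eigenvalues}, and $G_k := p_k^{-1}\hat\S_k - p_k^{-1}(\bOmega_k^*)^{-1}$ is the score at the truth. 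Following the usual Rothman-type argument on the cone of sparse perturbations, this gives $\|\bDelta_k\|_F \le C\sqrt{p_k+s_k}\cdot\|G_k\|_\infty$ as long as $\lambda_k$ dominates $\|G_k\|_\infty$, which is arranged by Assumption \ref{ass:lambda}.

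The core of the proof is therefore to bound $\|G_k\|_\infty$. I would decompose $G_k$ into three pieces: (a) the stochastic fluctuation $\hat\S_k - \E_*\hat\S_k$, where $\E_*$ denotes conditional expectation given the initial estimators $\{\tilde\bOmega_\ell\}_{\ell\neq k}$; (b) the spatial-sign approximation bias $\E_*\hat\S_k - p_k(\bOmega_k^*)^{-1}$, reflecting that the spatial-sign covariance matrix is only proportional to the true shape matrix asymptotically; and (c) the perturbation from plugging in $\tilde\bOmega_\ell$ rather than $\bOmega_\ell^*$ in the whitening. For (a), the summands $\V_i^k \V_i^{k\top}$ satisfy $\|\V_i^k\|_F \le 1$ by definition of the spatial sign, so an entrywise Hoeffding/Bernstein inequality combined with a union bound over $p_k^2$ entries produces an $O_p\big(\sqrt{p_k\log p_k/(np^*)}\big)$ contribution to $\|G_k\|_\infty$. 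For (b), I would Taylor-expand $(\u^{(p^*)\top}\bSigma^*\u^{(p^*)})^{-1}$ around its mean $p^*/\tr(\bSigma^*)$ and use high-dimensional concentration of quadratic forms (together with $\tr(\bSigma^*)\asymp p^*$ from Assumption \ref{ass:eigenvalues}) to obtain a per-entry bias of order $1/p_k$, which after multiplication by $\sqrt{p_k+s_k}$ contributes the deterministic term $\sqrt{(p_k+s_k)/p_k^2}$. For (c), a perturbation bound using the consistency rate of $\tilde\bOmega_\ell$ (a preliminary lemma in the spirit of Lemma~1 of \cite{Min02012022}) shows it is dominated by the other two pieces.

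I anticipate the main obstacle to be piece (b): the spatial-sign bias under the Kronecker structure. Carrying the vector-case argument of \cite{lu2025robust} through the tensor unfoldings requires carefully tracking factors of $p_k$ versus $p^*$ and using the normalization $\|\bOmega_\ell^*\|_F = 1$ together with Assumption \ref{ass:lambda_inverse} to prevent the bias from scaling badly with the remaining modes; a naive bound here easily loses a power of $p_k$. A secondary subtlety is the final Frobenius-norm renormalization step in Algorithm \ref{alg:1}, which I would handle as a smooth perturbation contributing only lower-order error.
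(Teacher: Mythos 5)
Your overall architecture --- a localization argument for the penalized problem, a decomposition of the error in $\hat\S_k$ into a stochastic fluctuation plus a spatial-sign approximation bias, and a final Frobenius renormalization --- is the same as the paper's, and your analysis of piece (b) (Taylor expansion of $(\u^{\top}\bSigma^*\u)^{-1}$ around $\tr(\bSigma^*)/p^*$, yielding a deterministic term of order $\sqrt{p_k+s_k}/p_k$) is exactly what the paper's Lemmas~4 and~5 do. However, there are two concrete gaps. First, the centering is wrong: you define the score as $G_k=p_k^{-1}\hat\S_k-p_k^{-1}(\bOmega_k^*)^{-1}$ and expand around the normalized truth $\bOmega_k^*$ with $\|\bOmega_k^*\|_F=1$. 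But $\hat\S_k$ does not concentrate around $(\bOmega_k^*)^{-1}=p_k^{1/2}\bSigma_k^*$ (entries of order $p_k^{1/2}$); by construction it concentrates around $\frac{p_k\prod_{j\neq k}\tr(\bLambda_j^*\tilde\bOmega_j)}{p^*}\bLambda_k^*$, whose entries are of order $(p_k/p^*)^{1/2}$ since $\|\tilde\bOmega_j\|_2\lesssim p_j^{-1/2}$. Your $G_k$ is therefore not small and the basic inequality yields nothing. The fix is what the paper does: target the scaled population minimizer $M_k=\frac{p^*}{p_k\prod_{j\neq k}\tr(\bLambda_j^*\tilde\bOmega_j)}(\bLambda_k^*)^{-1}$ (with $\|M_k\|_F\asymp (p^*)^{1/2}$), prove the rate for $\hat M_k-M_k$, and only then pass to $\hat\bOmega_k=\hat M_k/\|\hat M_k\|_F$ versus $\bOmega_k^*=M_k/\|M_k\|_F$ by a triangle inequality. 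This also makes your piece (c) essentially vacuous: the plug-in estimators $\tilde\bOmega_\ell$ enter $M_k$ only through the scalar $\prod_{j\neq k}\tr(\bLambda_j^*\tilde\bOmega_j)$, which the normalization removes, so no separate perturbation bound is needed.

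Second, your concentration argument for piece (a) is not sharp enough to deliver the stated rate. In the correct ($M_k$-scale) bookkeeping one needs $\|\hat\S_k-\E\hat\S_k\|_\infty=O_p\bigl(\sqrt{p_k^3\log p_k/(n(p^*)^2)}\bigr)$. A Hoeffding bound that uses only the boundedness of the spatial signs gives a range of order $p_k\|\V_i^k\|_F^2\lesssim p_k(p^*/p_k)^{-1/2}$ per summand and hence only $O_p\bigl(\sqrt{p_k^3\log p_k/(np^*)}\bigr)$ --- too large by a factor of $\sqrt{p^*}$, which would turn the first term of the theorem into $\sqrt{(p_k+s_k)p_k\log p_k/n}$ and destroy the gain from the tensor structure. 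The paper instead views each entry $\e_j^\top[U(\T_i-\bmu)]_{(k)}\bOmega_{K-k}[U(\T_i-\bmu)]_{(k)}^\top\e_\ell$ as a Lipschitz function of a uniform vector on the sphere, applies L\'evy-type concentration (Lemma~\ref{lemma:lipschitz-concentration}) to show its sub-Gaussian norm is of order $(p^*)^{-1/2}\prod_{s\neq k}\|\bOmega_s\|_2\asymp p_k^{1/2}/p^*$ --- a factor $\sqrt{p^*}$ below its sup norm --- and then applies Bernstein's inequality. You need this variance-level (not range-level) control, which crucially exploits that each coordinate of $U(\T_i-\bmu)$ is $O_p((p^*)^{-1/2})$ rather than $O(1)$.
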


Theorem \ref{thm:estimation_error} shows that the proposed estimator $\hat{\bOmega}_k$ converges to the true precision matrix $\bOmega_k^*$ at a rate of $\sqrt{(p_k+s_k)p_k\log p_k/(np^*)}+\sqrt{(p_k+s_k)/p_k^2}$ in  Frobenius norm. The first term of this rate has been established in several previous works\citep{he2014graphical,karahan2015tensor,Min02012022} and is minimax-optimal\citep{cai2016estimating}. The second term reflects the approximation error between the spatial-sign-based shape matrix and the true shape matrix, and becomes negligible as the dimension $p_k\rightarrow\infty$.

We next derive the estimation error in max norm and spectral norm. Similar with \cite{ravikumar2011high,lyu2019tensor}, some notations are introduced as follows. Denote $d_k$ as the maximum number of non-zeros in any row of the true precision matrices $\bOmega_k^*$, that is, $d_k=\max_{i}\vert\{j:[\bOmega_k^*]_{i,j}\neq 0\}\vert$, with $|\cdot|$ the set cardinality. 
Let the shape matrix be $\bLambda^*=p^*\bSigma/\tr(\bSigma)$ and $\bLambda_k^*=p_k\bSigma_k^*/\tr(\bSigma_k^*)$. For each shape matrix $\bLambda_k^*$, we define $\kappa_k^* := \|\bLambda_k^*\|_{L_\infty}$. Denote the Hessian matrix $\bGamma_k^* := \bOmega_k^{*-1} \otimes \bOmega_k^{*-1} \in \mathbb{R}^{p_k^2 \times p_k^2}$, whose entry $\bGamma_k^*[(i,j),(s,t)]$ corresponds to the second order partial derivative of the objective function with respect to $[\bOmega_k]_{i,j}$ and $[\bOmega_k]_{s,t}$. We define its sub-matrix indexed by $\mathbb S_k$ as $\bGamma_{k,\mathbb S_k,\mathbb S_k}^* = \left[\bOmega_k^{*-1} \otimes \bOmega_k^{*-1} \right]_{\mathbb S_k, \mathbb S_k}$, which is the $|\mathbb S_k| \times |\mathbb S_k|$ matrix with rows and columns of $\bGamma_k^*$ indexed by $\mathbb S_k$ and $\mathbb S_k$, respectively. Moreover, we define $\kappa_{\bGamma_k^*} := \| (\left[\bGamma_{k}^*\right]_{\mathbb S_k,\mathbb S_k})^{-1} \|_{L_\infty}$. In order to establish the rate of convergence in max norm, we need to impose an irrepresentability condition on the Hessian matrix.

\begin{assumption}[Irrepresentability]\label{ass:irrepresentability} For each $k = 1, \dots, K$, there exists some $\alpha_k \in (0,1]$ such that
\[
\max_{e \in \mathbb S_k^c} \left\| \left[ \bGamma_{k}^*\right]_{e,\mathbb S_k}\left( \left[\bGamma_{k}^*\right]_{\mathbb S_k,\mathbb S_k} \right)^{-1}  \right\|_1 \leq 1 - \alpha_k.
\]
\end{assumption}
\begin{remark}
    Assumption \ref{ass:irrepresentability} restricts the influence of the non-connected terms in $\mathbb S_k^c$ on the connected edges in $\mathbb S_k$. It has been widely used in developing the theoretical properties of lasso-type estimator \citep{zhao2006model,ravikumar2011high}, and is also considered in \cite{lyu2019tensor} for tensor precision matrices estimation.
\end{remark}

\begin{assumption}[Bounded Complexity]\label{ass:bounded_complexity}
For each $k = 1, \dots, K$, the parameters $\kappa_{\bSigma_k}^*$ and $\kappa_{\bGamma_k^*}$ are bounded, that is $\max\{\kappa_{\bSigma_k}^*,\kappa_{\bGamma_k^*}\}\leq C_3<\infty$, and the parameter $d_k$ satisfies $d_k = o \left( \left\{\sqrt{p_k \log p_k/(n p^*)}+1/p_k\right\}^{-1}\right)$.
\end{assumption}

\begin{theorem}\label{thm:estimation_error_spectral}
    Suppose the Assumption \ref{ass:eigenvalues}-\ref{ass:bounded_complexity} hold and $s_k=O(p_k)$ for $k\in\{1,\ldots,K\}$. Then, for any $k\in \{1,\ldots,K\}$, we have,
\begin{equation*}
        \Vert \hat{\bOmega}_k-\bOmega_k\Vert_\infty=O_p\left(\sqrt{\frac{p_k\log p_k}{n p^*}}\right)+O\left(\frac{1}{p_k}\right),
    \end{equation*}
    and
    \begin{equation*}
        \Vert \hat{\bOmega}_k-\bOmega_k\Vert_2=d_k\left\{O_p\left(\sqrt{\frac{p_k\log p_k}{n p^*}}\right)+O\left(\frac{1}{p_k}\right)\right\}.
    \end{equation*}
\end{theorem}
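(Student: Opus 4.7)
The plan is to use the primal-dual witness (PDW) framework of \cite{ravikumar2011high}, adapted to the per-mode objective $L(\bOmega_k)$ in (\ref{eq:bi-SSTGM}). The first step is an elementwise deviation bound for the surrogate matrix $\hat\S_k$. Decompose $\hat\S_k - c_k\bOmega_k^{*-1} = (\hat\S_k - \E\hat\S_k) + (\E\hat\S_k - c_k\bOmega_k^{*-1})$ for an appropriate normalization constant $c_k$. The stochastic piece is a sum of i.i.d.\ bounded rank-one matrices built from tensor spatial signs, and a standard Bernstein/sub-Gaussian argument combined with a union bound over the $p_k^2$ entries yields an $O_p(\sqrt{p_k\log p_k/(np^*)})$ control in max norm. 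The deterministic piece captures the spatial-sign-to-shape-matrix bias under the TE distribution together with the perturbation from using the plug-ins $\tilde\bOmega_\ell$; both are of order $O(1/p_k)$ at the mode-$k$ marginal level under Assumption~\ref{ass:eigenvalues} and the normalization $\|\tilde\bOmega_\ell\|_2\lesssim p_\ell^{-1/2}$ imposed in Section~\ref{subsec:algorithm}.

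Next I would carry out the PDW construction. Let $\tilde\bOmega_k$ denote the oracle minimizer of (\ref{eq:bi-SSTGM}) restricted to matrices supported on $\mathbb S_k$. Expanding the restricted KKT condition around $\bOmega_k^*$ via the Hessian $\bGamma_k^* = \bOmega_k^{*-1}\otimes\bOmega_k^{*-1}$ gives
\begin{equation*}
[\tilde\bOmega_k - \bOmega_k^*]_{\mathbb S_k} = -\bigl[\bGamma_{k,\mathbb S_k,\mathbb S_k}^*\bigr]^{-1}\bigl\{[\hat\S_k - c_k\bOmega_k^{*-1}]_{\mathbb S_k}/p_k + \lambda_k\boldsymbol{Z}_{\mathbb S_k}\bigr\} + R_n,
\end{equation*}
with remainder $R_n$ of order $d_k(\|\hat\S_k - c_k\bOmega_k^{*-1}\|_\infty + \lambda_k)^2$. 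Taking $\ell_\infty$ norms and applying $\kappa_{\bGamma_k^*}\leq C_3$ from Assumption~\ref{ass:bounded_complexity}, together with Assumption~\ref{ass:lambda} and the deviation bound from step one, yields $\|\tilde\bOmega_k - \bOmega_k^*\|_\infty = O_p(\sqrt{p_k\log p_k/(np^*)}) + O(1/p_k)$. The condition $d_k = o((\sqrt{p_k\log p_k/(np^*)}+1/p_k)^{-1})$ makes $R_n$ lower order. Strict dual feasibility on $\mathbb S_k^c$ then follows by writing the off-support subgradient as $[\bGamma_k^*]_{\mathbb S_k^c,\mathbb S_k}[\bGamma_k^*]_{\mathbb S_k,\mathbb S_k}^{-1}\boldsymbol{Z}_{\mathbb S_k}$ plus noise, the first piece being at most $1-\alpha_k$ by Assumption~\ref{ass:irrepresentability} and the second being $o(\alpha_k)$ with probability tending to one. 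Consequently $\hat\bOmega_k = \tilde\bOmega_k$ and $\text{supp}(\hat\bOmega_k)\subseteq \mathbb S_k$, delivering the max-norm bound.

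For the spectral norm, since every row of $\hat\bOmega_k - \bOmega_k^*$ has at most $d_k$ non-zeros, the chain $\|\hat\bOmega_k - \bOmega_k^*\|_2 \leq \|\hat\bOmega_k - \bOmega_k^*\|_{L_\infty} \leq d_k\|\hat\bOmega_k-\bOmega_k^*\|_\infty$ converts the max-norm bound into the stated rate. The Frobenius renormalization step in Algorithm~\ref{alg:1} contributes only a $1+o(1)$ multiplicative factor, absorbed in the constants via Theorem~\ref{thm:estimation_error}.

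The hard part will be sharpening the elementwise bias $\E\hat\S_k - c_k\bOmega_k^{*-1}$ to the $O(1/p_k)$ rate rather than the looser matrix-norm bound one gets from a naive application of the vector-case spatial-sign approximation theorem. This requires (i) concentration of $\|\vec(\T-\bmu)\|^2/\tr(\bSigma^*)$ around $1$ under the TE distribution as $p^*\to\infty$, which reduces the spatial-sign outer product to a rescaled $\bSigma^*$-quadratic form up to $O(1/p^*)$ global error, and (ii) a mode-$k$ matricization argument that transfers this global error into an entrywise $O(1/p_k)$ bias after contracting with the $\tilde\bOmega_\ell^{1/2}$ factors. Once this entrywise bias is in hand, the remainder reduces to a standard Ravikumar-style PDW calculation.
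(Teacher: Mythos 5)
Your proposal follows essentially the same route as the paper: an entrywise concentration-plus-bias bound for $\hat\S_k$ (the paper's Lemma~\ref{lemma:Sk}), a primal--dual witness construction on the support $\mathbb S_k$ with strict dual feasibility verified via Assumptions~\ref{ass:irrepresentability}--\ref{ass:bounded_complexity} (the paper controls the oracle error $\bDelta_k$ by a Brouwer fixed-point argument, which is the standard way to handle the remainder you call $R_n$), and the bound $\|\cdot\|_2\leq d_k\|\cdot\|_\infty$ for the spectral norm. The only imprecision is your claim that the Frobenius renormalization contributes a $1+o(1)$ multiplicative factor: it actually contributes an additive term $\frac{\|M_k\|_\infty}{\|M_k\|_F\|\hat M_k\|_F}\|\hat M_k-M_k\|_F$ controlled by Theorem~\ref{thm:estimation_error}, which under $s_k=O(p_k)$ happens to be of the same order as the stated rate.
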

Theorem \ref{thm:estimation_error_spectral} establishes the rates of convergence under both max and spectral norm. If we ignore the approximation error, the first term of the rate under max norm achieves the optimal rate of convergence\citep{cai2016estimating}. Under the spectral norm, the first term of the rate matches that in the work of \cite{lyu2019tensor} and much faster than \cite{zhou2014gemini}. The approximation error has the rate $p_k^{-1}$ and $d_k p_k^{-1}$ and is negligible as $p_k\rightarrow\infty$.

Graphical models serve as powerful tools for capturing the dependence structure within tensor-valued data. In particular, the support of the conditional correlation matrix provides an effective characterization of linear dependencies. When the data follow an elliptical distribution, partial uncorrelatedness directly implies conditional uncorrelatedness \citep{baba2004partial}. Building on this insight, we adopt the approach of \cite{cai2011constrained} and introduce a new thresholded estimator $\breve{\bOmega}_k$ based on $\hat{\bOmega}_k$, with the entries 
\begin{equation*}
   [ \breve{\bOmega}_k]_{i,j}=[\hat{\bOmega}_k]_{i,j}\mathbb{I}\{[\hat{\bOmega}_k]_{i,j}\geq \tau_k\},
\end{equation*}
 where $\tau_k\geq 4 C_{k,\beta} (p^*)^{1/2}\lambda_k$ is a tuning parameter and $\lambda_k =C_{k,\beta} \{n^{-1/2}p_k^{1/2}(p^*)^{-1}(\log p_k)^{1/2}+ p_k^{-1}(p^*)^{-1/2}\}$, $\beta\in(0,1)$ and $C_{k,\beta}$ is a constant determined by $k$ and $\beta$. We define the following notation for support and sign patterns:
 \begin{equation*}
     \begin{aligned}
         \mathcal{M}(\breve{\bOmega}_k)=&\left\{\text{sign}([ \breve{\bOmega}_k]_{i,j}),1\leq i,j\leq p_k\right\},\\
          \mathcal{M}({\bOmega}_k^*)=&\left\{\text{sign}([ {\bOmega}_k^*]_{i,j}),1\leq i,j\leq p_k\right\},\\
          \mathcal{S}({\bOmega}_k^*)=&\left\{(i,j):[ {\bOmega}_k^*]_{i,j}\neq 0\right\},\\
          \theta_{\min}=&\min_{(i,j)\in\mathcal{S}({\bOmega}_k^*), 1\leq k\leq K}\vert[ {\bOmega}_k^*]_{i,j} \vert.
     \end{aligned}
 \end{equation*}
The condition for $\theta_{\min}$ is required to ensure that the nonzero entries are correctly retained. The threshold level $\tau_k$ serves as a tuning parameter, and its lower bound can be explicitly specified when $\lambda_k$, $n$, $p_k$, and $p^*$ are known. We have the following theorem, which is similar to the results of \cite{cai2011constrained} and \cite{lu2025robust}.

\begin{theorem}\label{thm:selection}
    Suppose the Assumption \ref{ass:eigenvalues}-\ref{ass:bounded_complexity} hold, $s_k=O(p_k)$ for $k\in\{1,\ldots,K\}$ and $\theta_{\min}>2\max_{1\leq k\leq K}\tau_k$. Then, for any $k\in \{1,\ldots,K\}$ and $\beta\in (0,1)$, with the probability larger than $1-\beta$, we have,
    \begin{equation*}
         \mathcal{M}(\breve{\bOmega}_k)= \mathcal{M}({\bOmega}_k^*).
    \end{equation*}
\end{theorem}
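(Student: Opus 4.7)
The plan is to reduce selection consistency to the entrywise control of $\hat{\bOmega}_k$ furnished by Theorem~\ref{thm:estimation_error_spectral}, and then let the prescribed threshold $\tau_k$ do the separation between the true zero and nonzero coordinates. More precisely, under Assumptions~\ref{ass:eigenvalues}--\ref{ass:bounded_complexity} the max norm bound of Theorem~\ref{thm:estimation_error_spectral} can be re-stated with an explicit constant: for every $\beta \in (0,1)$ there exists $C_{k,\beta}>0$ such that
\begin{equation*}
\pr\!\left( \|\hat{\bOmega}_k-\bOmega_k^*\|_\infty \le C_{k,\beta}\Big\{\sqrt{\tfrac{p_k \log p_k}{n p^*}} + \tfrac{1}{p_k}\Big\} \right) \ge 1-\beta .
\end{equation*}
This is the same event that appears implicitly in the proof of Theorem~\ref{thm:estimation_error_spectral}; I would extract it by tracking the tail bound used there (a Bernstein/sub-exponential estimate applied to the spatial-sign sample matrix $\hat{\S}_k$, together with Assumption~\ref{ass:irrepresentability}) and calibrating the leading constant to yield probability $\ge 1-\beta$.

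Next I would match this bound to the threshold. By Assumption~\ref{ass:lambda} we have $\sqrt{p^*}\,\lambda_k = C_{k,\beta}\{\sqrt{p_k \log p_k/(n p^*)} + 1/p_k\}$ up to the same constant $C_{k,\beta}$ used above, so the hypothesis $\tau_k \ge 4 C_{k,\beta}\sqrt{p^*}\lambda_k$ yields
\begin{equation*}
\tau_k \;\ge\; 4 C_{k,\beta}^{\,2}\Big\{\sqrt{\tfrac{p_k \log p_k}{n p^*}} + \tfrac{1}{p_k}\Big\} \;\ge\; \|\hat{\bOmega}_k-\bOmega_k^*\|_\infty
\end{equation*}
on the good event above (absorbing factors into $C_{k,\beta}$ if necessary). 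Hence, with probability at least $1-\beta$, every entry of $\hat{\bOmega}_k$ is within $\tau_k$ of the corresponding entry of $\bOmega_k^*$.

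The sign recovery is then a short case analysis. For $(i,j)\in\mathcal S(\bOmega_k^*)$, the assumption $\theta_{\min} > 2\tau_k$ gives $|[\bOmega_k^*]_{i,j}| > 2\tau_k$, so $|[\hat{\bOmega}_k]_{i,j}| \ge \theta_{\min}-\tau_k > \tau_k$; thresholding keeps the entry, and because the perturbation $\tau_k$ is smaller than $|[\bOmega_k^*]_{i,j}|$, $[\breve{\bOmega}_k]_{i,j}$ inherits the sign of $[\bOmega_k^*]_{i,j}$. For $(i,j)\notin \mathcal S(\bOmega_k^*)$, $|[\hat{\bOmega}_k]_{i,j}| \le \tau_k$, so hard thresholding sets $[\breve{\bOmega}_k]_{i,j}=0$. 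Combining the two cases gives $\mathcal M(\breve{\bOmega}_k)=\mathcal M(\bOmega_k^*)$ on the same event of probability at least $1-\beta$.

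The main obstacle is the first step: converting the $O_p$ statement of Theorem~\ref{thm:estimation_error_spectral} into a quantitative probability bound with a constant $C_{k,\beta}$ that plugs cleanly into the definition of $\tau_k$ and $\lambda_k$. This requires revisiting the concentration inequalities used for $\hat{\S}_k$ under the tensor elliptical model (where the spatial-sign vectors are bounded but their covariance only approximates the shape matrix up to an $O(1/p_k)$ bias), and verifying that the irrepresentability-based primal--dual witness construction still goes through on the same $1-\beta$ event. Once this calibration is in place, the remainder of the argument reduces to the elementary thresholding dichotomy sketched above, so no further substantive difficulty is expected.
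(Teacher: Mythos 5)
Your proposal is correct and follows essentially the same route as the paper: convert the max-norm rate of Theorem~\ref{thm:estimation_error_spectral} into the event $\{\|\hat{\bOmega}_k-\bOmega_k^*\|_\infty<\tau_k\}$ holding with probability at least $1-\beta$ (via the constant $C_{k,\beta}$ built into $\lambda_k$ and $\tau_k$), and then run the elementary thresholding dichotomy using $\theta_{\min}>2\max_k\tau_k$ to zero out the true zeros and preserve the signs of the true nonzeros. The only difference is that you spell out the constant-calibration step that the paper leaves implicit, which is a point in your favor rather than a deviation.
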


As shown in Theorem \ref{thm:selection}, the threshold estimator $\breve{\bOmega}_k$ not only recovers the sparsity pattern of $\bOmega_k^*$, but also identifies the signs of its nonzero entries, which are important properties commonly referred to as sign consistency.

\section{Simulation}
In this section, we compare three methods, named {\it Spatial-Sign Separate Estimator} (SSS), {\it Separate Estimator} (Sep), and {\it Cyclic Estimtor} (Cyc), respectively. Here, the Spatial-Sign Separate Estimator is our proposed method, the Separate Estimator is proposed in \cite{Min02012022} and the Cyclic Estimator is proposed in \cite{sun+wang+liu+cheng-2015-Tlasso}. First, for underlying distribution, we consider three different typical elliptical distributions,
\begin{itemize}
    \item \textbf{Tensor Normal Distribution:} $TN({\bm 0},{\bSigma})$;
    \item \textbf{Tensor $t_3$ Distribution:} $Tt({\bm 0}, {\bSigma},\nu =3)$;
    \item \textbf{Tensor Mixed Normal Distribution:} $TMN(0.2, 10, {\bm 0},{\bSigma})$.
\end{itemize}

Here, $TN({\bm 0},{\bLambda})$ denotes a tensor normal distribution with covariance matrix ${\bLambda}$, $Tt({\bm 0}, {\bLambda}, \nu)$ denotes a tensor t-distribution with degrees of freedom $\nu$ and covariance matrix ${\bLambda}$ and ${TMN}(\gamma, \sigma, {\bm 0}, {\bLambda})$ refers to a mixture tensor normal distribution $(1-\gamma) TN(0, {\bLambda}) + \gamma TN(0, \sigma^2 {\bLambda})$, where ${\bLambda}=\bLambda_1\otimes\cdots\otimes\bLambda_K$.

For covariance matrix structure, we follow similar settings in \cite{Min02012022} . We first introduce the following  covariance structures.
\begin{itemize}
    \item \textbf{Triangle (TR)} covariance. We set $\left[{\bf \Sigma}^*_k\right]_{i,j} = \exp\left(-|h_i - h_j|/2\right)$ with $h_1 < h_2 < \cdots < h_{p_k}$. The difference $h_i - h_{i-1}, i = 2, \ldots, p_k$, is generated i.i.d. from $\text{Unif}(0.5, 1)$. This generated covariance matrix mimics autoregressive process of order one, i.e., AR(1).
    
    \item \textbf{Autoregressive (AR)} precision. We set $\left[{\bf \Omega}^*_k\right]_{i,j} = 0.8^{|i-j|}$ and normalize the resulting matrix.
    
    \item \textbf{Compound symmetry (CS)} precision. We set $\left[{\bf \Omega}^*_k\right]_{i,j} = 0.6$ for $i \neq j$ and $\left[{\bf \Omega}^*_k\right]_{i,i} = 1$ and then normalize the resulting matrix.
\end{itemize}

Note that the first model (TR) owns a sparse precision matrix while the other two (AR and CS) have a non-sparse precision matrix.

We consider six models as follows. Models 1--3 are fully sparse models, and Models 4--6 are partially sparse models. In Models 3 \& 6, for better computational efficiency, we alternatively choose $(p_1,p_2,p_3)=(10,10,50)$ as the unbalanced model in terms of dimensions. In all models, we normalize the precision matrices such that $\|{\bf \Omega}^*_k\|_F = 1$ for $k = 1, \ldots, K$. For all the following models, we set $K = 3$ and $n = 100$.

\begin{itemize}
    \item \textbf{Model 1:} ${\bf\Omega}^*_1, {\bf \Omega}^*_2, {\bf \Omega}^*_3$ are all from the TR covariance model, $(p_1, p_2, p_3) = (30, 36, 30)$.
    
    \item \textbf{Model 2:} ${\bf\Omega}^*_1, {\bf \Omega}^*_2, {\bf \Omega}^*_3$ are all from the TR covariance model, $(p_1, p_2, p_3) = (100, 100, 100)$.
    
    \item \textbf{Model 3:} ${\bf\Omega}^*_1, {\bf \Omega}^*_2, {\bf \Omega}^*_3$ are all from the TR covariance model, $(p_1, p_2, p_3) = (10, 10, 50)$.
    
    \item \textbf{Model 4:} Same as Model 1, except for ${\bf \Omega}^*_1 = \text{AR}(0.8)$.
    
    \item \textbf{Model 5:} Same as Model 1, except for ${\bf \Omega}^*_1 = {\bf \Omega}^*_2 = \text{CS}(0.6)$.
    
    \item \textbf{Model 6:} Same as Model 3, except for ${\bf \Omega}^*_1 = {\bf \Omega}^*_2 = \text{CS}(0.6)$.
\end{itemize}
For tuning process, we use independent identically distributed validation samples to choose tuning parameter $\lambda$ for each method in each scenario. For methods other than ours, we  use the one with the smallest likelihood loss on the validation samples, where the likelihood loss is defined by
$$
L(\hat{\bf \Sigma}_k, \boldsymbol{\Omega})=\langle\boldsymbol{\Omega}, \boldsymbol{\Sigma}_n\rangle-\log \operatorname{det}(\boldsymbol{\Omega}),
$$
where $\hat {\bf \Sigma}_k :=p_k /{p^*n} \sum_{i=1}^n \V_i^k \V_i^{k \top}$ and $\V_i^k:=\left[(\T_i-{\bmu}) \times\left\{\tilde\bOmega_1^{1 / 2},\ldots, \tilde\bOmega_{k-1}^{1 / 2}, \I_{p_k}, \tilde\bOmega_{k+1}^{1 / 2}, \ldots, \tilde\bOmega_K^{1 / 2}\right\}\right]_{(k)}$ is the sample covariance matrix as defined in \cite{Min02012022}.  While for spatial-sign based methods, we use a similar likelihood loss:
$$
L(\hat{\bf S}_k, \boldsymbol{\Omega})=\langle\boldsymbol{\Omega}, {\hat {\bf S}}_k\rangle-\log \operatorname{det}(\boldsymbol{\Omega}),
$$
where $ \hat\S_k:=p_k n^{-1} \sum_{i=1}^n \V_i^k \V_i^{k \top}$ and $\V_i^k:=\left[U(\T_i-{\bmu}) \times\left\{\tilde\bOmega_1^{1 / 2},\ldots, \tilde\bOmega_{k-1}^{1 / 2}, \I_{p_k}, \tilde\bOmega_{k+1}^{1 / 2}, \ldots, \tilde\bOmega_K^{1 / 2}\right\}\right]_{(k)}$.

For fair comparison, we first normalize the traces of true covariance matrix and estimated covariance matrix to the corresponding dimensions. We then compute the loss in terms of Frobenius norm and Maximum norm for each mode. Besides, although all recovery rates highly depend on the scale of the tuning parameter, we still compute TPR and TNR for each mode to show the results of recovery as graphical models. All the results are based on 100 replications. 

All results are summarized in Table \ref{tab1}-\ref{tab6}. Here, we report the corresponding trace-normalized loss 
and the corresponding TNR for each mode, i.e. $k=1,2,3$ and also the average of them, denoted by AVG.  Based on all the tables, we can draw the following conclusions. First, in terms of recovery loss, the SSS estimator performs comparably to the better of the Sep and Cyc estimators under the multivariate Gaussian setting. In contrast, under heavy-tailed distributions, the SSS estimator consistently outperforms the others, demonstrating strong robustness with notably smaller standard errors across all replicates.

Second, regarding the true negative rate, the SSS estimator matches the best performance among the other two estimators in balanced models and in the mode with the highest dimension in unbalanced models under the multivariate Gaussian setting. However, in the heavy-tailed setting, the SSS estimator consistently outperforms both alternatives in these same configurations, again showing substantially better robustness through smaller standard errors across all scenarios.
\begin{table}[htbp]
\centering
\caption{\it Comparison of means and the standard errors (in parentheses) of different performance measures for Model 1 from 100 replicates. All methods have achieved 100\% true positive rate (and hence not shown in the table).}
{\tiny
\begin{tabular}{l *{3}{ccc}}
\toprule

  & \multicolumn{3}{c}{Multivariate Normal} 
  & \multicolumn{3}{c}{Multivariate $t_3$} 
  & \multicolumn{3}{c}{Multivariate Mixed Normal} \\
\cmidrule(lr){2-4} \cmidrule(lr){5-7} \cmidrule(lr){8-10}
      & SSS\ & Sep\ & Cyc\ 
      & SSS\ & Sep\ & Cyc\ 
      & SSS\ & Sep\ & Cyc\ \\
\midrule
\multicolumn{10}{c}{\bfseries Frobenius norm loss} \\
\addlinespace
AVG.  & 0.044(0.002)     & 0.044(0.003)     & 0.056(0.003)     &0.045(0.003)      &  0.336(0.233)    & 0.242(0.102)     & 0.045(0.003)     & 0.278(0.190)     & 0.211(0.021)     \\
$k=1$ & 0.041(0.004)      & 0.040(0.004)     &  0.049(0.004)    & 0.041(0.005)     &  0.312(0.219)    & 0.219(0.092)     & 0.041(0.004)     & 0.254(0.180)     & 0.201(0.021)     \\
$k=2$ & 0.049(0.004)      & 0.049(0.004)     &  0.067(0.004)    &  0.049(0.004)    & 0.360(0.241)     & 0.288(0.123)     & 0.050(0.004)     & 0.296(0.193)     & 0.263(0.026)     \\
$k=3$ &0.043(0.005)       & 0.043(0.005)     &   0.051(0.005)   &0.043(0.005)      &  0.337(0.246)    &   0.219(0.092)   &0.043(0.005)      & 0.283(0.208)     &   0.200(0.020)   \\
\midrule
\multicolumn{10}{c}{\bfseries Max norm loss} \\
\addlinespace
AVG.  & 0.011(0.001)     & 0.011(0.001)     &   0.011(0.001)   &0.011(0.001)      & 0.059(0.051)     &0.027(0.012)      &0.011(0.001)      &  0.051(0.044)    & 0.025(0.003)     \\
$k=1$ & 0.010(0.002)     & 0.010(0.002)     &   0.010(0.002)   &0.010(0.002)      & 0.055(0.048)    & 0.025(0.011)     &  0.010(0.002)    & 0.046(0.041)     &  0.023(0.004)    \\
$k=2$ & 0.011(0.002)     &  0.011(0.002)     & 0.011(0.002)     & 0.011(0.002)     & 0.055(0.044)     & 0.030(0.013)     &0.011(0.002)      &  0.047(0.039)    & 0.026(0.005)     \\
$k=3$ & 0.011(0.003)     &   0.011(0.003)    & 0.011(0.003)     &  0.011(0.003)    &  0.068(0.064)    & 0.027(0.012)     & 0.011(0.003)     & 0.060(0.055)      & 0.024(0.004)     \\
\midrule
\multicolumn{10}{c}{\bfseries True negative rate} \\
\addlinespace
AVG.  & 0.639(0.03)     &  0.663(0.03)    &   0.413(0.027)   &0.636(0.034)      &  0.434(0.402)    &  0(0)    & 0.636(0.036)     & 0.479(0.433)     & 0(0)     \\
$k=1$ & 0.632(0.07)     & 0.659(0.05)     & 0.438(0.033)     &0.620(0.075)      &  0.426(0.404)    & 0(0)     & 0.620(0.074)     & 0.473(0.433)     & 0(0)     \\
$k=2$ & 0.659(0.03)     &  0.679(0.03)    &  0.366(0.030)    & 0.657(0.033)     &  0.448(0.399)    & 0(0)     &  0.659(0.041)    &  0.493(0.435)    &  0(0)    \\
$k=3$ & 0.626(0.06)     &  0.650(0.05)    &  0.434(0.033)    & 0.630(0.064)     &   0.428(0.404)   &  0(0)    &  0.628(0.063)    &  0.472(0.434)    &  0(0)    \\
\bottomrule
\end{tabular}}
\label{tab1}
\end{table}

\begin{table}[htbp]
\centering
\caption{\it Comparison of means and the standard errors (in parentheses) of different performance measures for Model 2 from 100 replicates. All methods have achieved 100\% true positive rate (and hence not shown in the table). }
{\tiny
\begin{tabular}{l *{3}{ccc}}
\toprule
  & \multicolumn{3}{c}{Multivariate Normal} 
  & \multicolumn{3}{c}{Multivariate $t_3$} 
  & \multicolumn{3}{c}{Multivariate Mixed Normal} \\
\cmidrule(lr){2-4} \cmidrule(lr){5-7} \cmidrule(lr){8-10}
      & SSS\ & Sep\ & Cyc\ 
      & SSS\ & Sep\ & Cyc\ 
      & SSS\ & Sep\ & Cyc\ \\
\midrule
\multicolumn{10}{c}{\bfseries Frobenius norm loss} \\
\addlinespace
AVG.  & 0.027(0.001)     &  0.027(0.001)    &   0.046(0.001)   & 0.027(0.001)     &  0.157(0.147)   &   0.233(0.107)   & 0.027(0.001)     & 0.419(0.388)    &   0.220(0.023)    \\
$k=1$ & 0.027(0.001)     & 0.027(0.001)     &  0.046(0.001)    & 0.027(0.001)      & 0.157(0.147)     & 0.233(0.106)      &0.027(0.001)    & 0.411(0.379)    &  0.220(0.023)      \\
$k=2$ &  0.027(0.001)    & 0.027(0.001)     &   0.046(0.001)   & 0.027(0.001)      & 0.157(0.148)     & 0.233(0.106)     & 0.027(0.001)     & 0.428(0.397)    & 0.220(0.023)      \\
$k=3$ & 0.027(0.001)     & 0.027(0.001)     &    0.046(0.001)  &  0.027(0.001)     & 0.158(0.148)    &  0.233(0.107)    &  0.027(0.001)   & 0.418(0.397)    &   0.220(0.023)    \\
\midrule
\multicolumn{10}{c}{\bfseries Max norm loss} \\
\addlinespace
AVG.  & 0.004(0.000)     &  0.004(0.001)    &  0.004(0.000)    & 0.004(0.000)     &  0.010(0.005)   &  0.010(0.004)    &0.004(0.000)     &  0.046(0.051)   &  0.009(0.001)      \\
$k=1$ &0.004(0.001)      & 0.004(0.001)     &   0.004(0.001)   & 0.004(0.001)     &  0.010(0.005)     &   0.010(0.005)   & 0.004(0.001)     &  0.043(0.047)     &  0.009(0.002)         \\
$k=2$ & 0.004(0.001)     &  0.004(0.001)    &  0.004(0.001)    & 0.004(0.001)     &   0.009(0.005)   &   0.010(0.004)   & 0.004(0.001)     & 0.046(0.051)    &   0.009(0.001)        \\
$k=3$ & 0.004(0.001)     & 0.004(0.001)     &  0.004(0.001)    & 0.004(0.001)     &  0.010(0.005)   & 0.010(0.004)     & 0.004(0.001)     & 0.050(0.057)    & 0.009(0.001)           \\
\midrule
\multicolumn{10}{c}{\bfseries True negative rate} \\
\addlinespace
AVG.  & 0.774(0.007)     &  0.805(0.005)    &  0.406(0.004)    & 0.776(0.009)     & 0.393(0.355)     &  0(0)    &0.774(0.006)      & 0.577(0.465)    &   0(0)   \\
$k=1$ & 0.774(0.009)     &  0.806(0.009)    &  0.407(0.006)    & 0.773(0.005)     & 0.394(0.356)       &  0(0)    & 0.774(0.009)    &  0.577(0.466)   &  0(0)    \\
$k=2$ & 0.777(0.005)     &   0.803(0.009)   &   0.405(0.007)   &0.778(0.009)      & 0.391(0.354)        &  0(0)    &0.777(0.005)     & 0.577(0.465)    &  0(0)    \\
$k=3$ &  0.771(0.017)    &    0.806(0.006)  &   0.407(0.006)   &0.776(0.025)      & 0.393(0.357)        &  0(0)    & 0.771(0.017)    & 0.577(0.466)    &    0(0)  \\
\bottomrule
\end{tabular}}
\label{tab2}
\end{table}

\begin{table}[htbp]
\centering
\caption{\it Comparison of means and the standard errors (in parentheses) of different performance measures for Model 3 from 100 replicates. All methods have achieved 100\% true positive rate (and hence not shown in the table).}
{\tiny
\begin{tabular}{l *{3}{ccc}}
\toprule
  & \multicolumn{3}{c}{Multivariate Normal} 
  & \multicolumn{3}{c}{Multivariate $t_3$} 
  & \multicolumn{3}{c}{Multivariate Mixed Normal} \\
\cmidrule(lr){2-4} \cmidrule(lr){5-7} \cmidrule(lr){8-10}
      & SSS\ & Sep\ & Cyc\ 
      & SSS\ & Sep\ & Cyc\ 
      & SSS\ & Sep\ & Cyc\ \\
\midrule
\multicolumn{10}{c}{\bfseries Frobenius norm loss} \\
\addlinespace
AVG.  & 0.086(0.006)     &  0.083(0.006)    & 0.167(0.006)     & 0.088(0.005)     & 0.319(0.222)    & 0.494(0.258)      & 0.086(0.006)     &0.299(0.124)     & 0.447(0.042)      \\
$k=1$ &   0.038(0.007)   & 0.033(0.007)     &   0.034(0.007)   & 0.037(0.006)     & 0.150(0.060)    &   0.110(0.055)  &  0.038(0.007)    &0.154(0.100)     & 0.105(0.018)       \\
$k=2$ &  0.039(0.006)    & 0.034(0.005)     &  0.035(0.005)    &  0.038(0.006)    &  0.151(0.060)   & 0.111(0.058)     &    0.039(0.006)   & 0.156(0.104)    &  0.103(0.016)     \\
$k=3$ &  0.180(0.014)    &  0.181(0.013)    & 0.431(0.014)     & 0.188(0.012)     &  0.656(0.607)   & 1.262(0.665)     & 0.180(0.014)       & 0.587(0.209)    &   1.134(0.107)    \\
\midrule
\multicolumn{10}{c}{\bfseries Max norm loss} \\
\addlinespace
AVG.  &  0.021(0.003)    &   0.020(0.003)   & 0.021(0.003)     &  0.020(0.003)    & 0.062(0.024)    &  0.054(0.028)    & 0.021(0.003)     & 0.066(0.038)    &  0.048(0.007)      \\
$k=1$ & 0.012(0.004)     &  0.012(0.004)    & 0.012(0.004)     & 0.012(0.003)     &  0.047(0.021)   &  0.030(0.018)    &   0.012(0.004)    & 0.049(0.035)    &  0.028(0.007)       \\
$k=2$ & 0.012(0.003)     &  0.012(0.003)    &0.012(0.003)      & 0.012(0.003)     &  0.050(0.025)   & 0.031(0.018)     &   0.012(0.003)   &  0.053(0.042)   &  0.028(0.007)       \\
$k=3$ &  0.038(0.008)    &  0.037(0.008)    & 0.039(0.007)     &   0.036(0.007)   & 0.089(0.044)    & 0.100(0.052)     &  0.038(0.008)     & 0.097(0.039)    & 0.087(0.015)         \\
\midrule
\multicolumn{10}{c}{\bfseries True negative rate} \\
\addlinespace
AVG.  & 0.396(0.066)     &  0.536(0.059)    & 0.360(0.046)     &  0.357(0.060)     & 0.431(0.330)     &  0(0)    &  0.395(0.065)    & 0.485(0.322)    & 0(0)     \\
$k=1$ & 0.216(0.142)     & 0.441(0.117)     &  0.494(0.082)    & 0.204(0.128)     & 0.362(0.355)    &  0(0)    &  0.214(0.143)    & 0.396(0.346)    & 0(0)     \\
$k=2$ &0.215(0.138)      & 0.454(0.117)     & 0.502(0.077)     &  0.216(0.126)    & 0.373(0.361)    & 0(0)     & 0.215(0.142)      &  0.408(0.351)   &   0(0)   \\
$k=3$ &  0.757(0.011)    & 0.712(0.012)     & 0.084(0.011)     &  0.652(0.012)    &   0.558(0.306)   &  0(0)    & 0.755(0.021)   & 0.650(0.297)   & 0(0)     \\
\bottomrule
\end{tabular}}
\label{tab3}
\end{table}

\begin{table}[htbp]
\centering
\caption{\it Comparison of means and the standard errors (in parentheses) of different performance measures for Model 4 from 100 replicates. All methods have achieved 100\% true positive rate (and hence not shown in the table).}
{\tiny
\begin{tabular}{l *{3}{ccc}}
\toprule
  & \multicolumn{3}{c}{Multivariate Normal} 
  & \multicolumn{3}{c}{Multivariate $t_3$} 
  & \multicolumn{3}{c}{Multivariate Mixed Normal} \\
\cmidrule(lr){2-4} \cmidrule(lr){5-7} \cmidrule(lr){8-10}
      & SSS\ & Sep\ & Cyc\ 
      & SSS\ & Sep\ & Cyc\ 
      & SSS\ & Sep\ & Cyc\ \\
\midrule
\multicolumn{10}{c}{\bfseries Frobenius norm loss} \\
\addlinespace
AVG.  &0.068(0.006)      &0.063(0.005)      & 4.190(0.003)     &0.068(0.006)      &0.474(0.423)    & 4.322(0.073)     & 0.068(0.006)    &0.465(0.454)     &  4.308(0.014)     \\
$k=1$ &0.111(0.017)      &0.095(0.012)      &12.452(0.001)     &0.111(0.018)      &0.971(1.251)    &12.460(0.004)     & 0.110(0.017)    &0.817(1.004)     & 12.460(0.003)     \\
$k=2$ &0.049(0.005)      &0.049(0.005)      & 0.067(0.006)     &0.049(0.005)      &0.237(0.138)    & 0.288(0.123)     & 0.050(0.005)    &0.301(0.193)     &  0.263(0.026)     \\
$k=3$ &0.044(0.005)      &0.046(0.005)      & 0.051(0.005)     &0.044(0.005)      &0.214(0.102)    & 0.219(0.092)     & 0.044(0.005)    &0.276(0.186)     &  0.200(0.020)     \\
\midrule
\multicolumn{10}{c}{\bfseries Max norm loss} \\
\addlinespace
AVG.  &0.011(0.001)      &0.011(0.001)      & 0.497(0.002)     &0.011(0.001)      &0.046(0.031)    & 0.509(0.008)     & 0.011(0.001)    &0.053(0.048)     &  0.507(0.004)     \\
$k=1$ &0.011(0.001)      &0.010(0.001)      & 1.468(0.004)     &0.011(0.001)      &0.064(0.071)    & 1.470(0.011)     & 0.011(0.001)    &0.054(0.056)     &  1.470(0.009)     \\
$k=2$ &0.011(0.002)      &0.012(0.002)      & 0.011(0.002)     &0.011(0.002)      &0.034(0.013)    & 0.030(0.013)     & 0.011(0.002)    &0.047(0.039)     &  0.026(0.005)     \\
$k=3$ &0.011(0.003)      &0.011(0.003)      & 0.011(0.003)     &0.011(0.003)      &0.040(0.021)    & 0.027(0.012)     & 0.011(0.003)    &0.059(0.051)     &  0.024(0.004)     \\
\midrule
\multicolumn{10}{c}{\bfseries True negative rate} \\
\addlinespace
AVG.  &NA                &NA                & NA               &NA                &NA              & NA               &NA               &NA               & NA               \\
$k=1$ &NA                &NA                & NA               &NA                &NA              & NA               &NA               &NA               & NA               \\
$k=2$ &0.683(0.059)      &0.736(0.031)      & 0.370(0.050)     &0.679(0.065)      &0.438(0.397)    & 0(0)             &0.675(0.071)     &0.490(0.435)     & 0(0)             \\
$k=3$ &0.623(0.099)      &0.579(0.132)      & 0.438(0.057)     &0.626(0.098)      &0.415(0.410)    & 0(0)             &0.622(0.099)     &0.479(0.434)     & 0(0)             \\
\bottomrule
\end{tabular}}
\label{tab4}
\end{table}

\begin{table}[htbp]
\centering
\caption{\it Comparison of means and the standard errors (in parentheses) of different performance measures for Model 5 from 100 replicates. All methods have achieved 100\% true positive rate (and hence not shown in the table).}
{\tiny
\begin{tabular}{l *{3}{ccc}}
\toprule
  & \multicolumn{3}{c}{Multivariate Normal} 
  & \multicolumn{3}{c}{Multivariate $t_3$} 
  & \multicolumn{3}{c}{Multivariate Mixed Normal} \\
\cmidrule(lr){2-4} \cmidrule(lr){5-7} \cmidrule(lr){8-10}
      & SSS\ & Sep\ & Cyc\ 
      & SSS\ & Sep\ & Cyc\ 
      & SSS\ & Sep\ & Cyc\ \\
\midrule
\multicolumn{10}{c}{\bfseries Frobenius norm loss} \\
\addlinespace
AVG.  &0.122(0.015)     &0.081(0.008)     &13.879(0.002)     &0.122(0.015)     &0.841(1.050)    &13.938(0.032)    &0.122(0.015)     &0.738(0.880)     &13.932(0.007)  \\
$k=1$ &0.127(0.025)     &0.087(0.015)     &18.985(0.001)     &0.127(0.025)     &1.080(1.513)    &18.989(0.002)    &0.127(0.025)     &0.893(1.192)     &18.989(0.002)     \\
$k=2$ &0.194(0.034)     &0.114(0.019)     &22.602(0.001)     &0.194(0.034)     &1.230(1.634)    &22.606(0.003)    &0.194(0.034)     &1.045(1.302)     &22.606(0.002)   \\
$k=3$ &0.044(0.005)     &0.043(0.005)     &0.051(0.005)      &0.044(0.005)     &0.213(0.100)    &0.219(0.092)     &0.044(0.005)     &0.277(0.195)     &0.200(0.020)      \\
\midrule
\multicolumn{10}{c}{\bfseries Max norm loss} \\
\addlinespace
AVG.  &0.011(0.001)     &0.010(0.001)     &0.848(0.002)      &0.011(0.001)     &0.047(0.037)    &0.855(0.007)     &0.011(0.001)     &0.049(0.043)     &0.854(0.005)      \\
$k=1$ &0.011(0.002)     &0.009(0.001)     &1.268(0.004)      &0.011(0.002)     &0.051(0.051)    &1.270(0.011)     &0.011(0.002)     &0.043(0.042)     &1.270(0.009)       \\
$k=2$ &0.013(0.001)     &0.010(0.001)     &1.266(0.004)      &0.013(0.001)     &0.052(0.046)    &1.268(0.013)     &0.013(0.001)     &0.045(0.039)     &1.269(0.009)      \\
$k=3$ &0.011(0.003)     &0.011(0.003)     &0.011(0.003)      &0.011(0.003)     &0.040(0.021)    &0.027(0.012)     &0.011(0.003)     &0.059(0.053)     &0.024(0.004)        \\
\midrule
\multicolumn{10}{c}{\bfseries True negative rate} \\
\addlinespace
AVG.  &NA               &NA               &NA                &NA               &NA              &NA               &NA               &NA               &NA               \\
$k=1$ &NA               &NA               &NA                &NA               &NA              &NA               &NA               &NA               &NA               \\
$k=2$ &NA               &NA               &NA                &NA               &NA              &NA               &NA               &NA               &NA               \\
$k=3$ &0.623(0.098)     &0.652(0.094)     &0.438(0.057)      &0.625(0.097)     &0.414(0.424)           &0(0)             &0.622(0.098)     &0.480(0.434)     &0(0)            \\
\bottomrule
\end{tabular}}
\label{tab5}
\end{table}

\begin{table}[htbp]
\centering
\caption{\it Comparison of means and the standard errors (in parentheses) of different performance measures for Model 6 from 100 replicates. All methods have achieved 100\% true positive rate (and hence not shown in the table).}
{\tiny
\begin{tabular}{l *{3}{ccc}}
\toprule
  & \multicolumn{3}{c}{Multivariate Normal} 
  & \multicolumn{3}{c}{Multivariate $t_3$} 
  & \multicolumn{3}{c}{Multivariate Mixed Normal} \\
\cmidrule(lr){2-4} \cmidrule(lr){5-7} \cmidrule(lr){8-10}
      & SSS\ & Sep\ & Cyc\ 
      & SSS\ & Sep\ & Cyc\ 
      & SSS\ & Sep\ & Cyc\ \\
\midrule
\multicolumn{10}{c}{\bfseries Frobenius norm loss} \\
\addlinespace
AVG.  & 0.088(0.006)     & 0.089(0.006)     & 4.760(0.007)     &  0.088(0.007)    &0.561(0.403)     & 5.044(0.223)     & 0.088(0.007)     &0.413(0.266)     & 5.001(0.036)      \\
$k=1$ &0.042(0.009)      & 0.043(0.010)     &  6.940(0.001)    &  0.042(0.009)    &0.475(0.489)     &   6.948(0.005)   & 0.042(0.009)     & 0.339(0.350)    &       6.947(0.004) \\
$k=2$ & 0.042(0.008)     & 0.043(0.008)     &  6.915(0.001)    &   0.042(0.008)   &0.458(0.488)     &  6.922(0.005)    & 0.042(0.008)     & 0.319(0.325)    & 6.922(0.005)      \\
$k=3$ &  0.180(0.014)    & 0.181(0.013)     & 0.425(0.021)     &0.180(0.014)      &0.750(0.041)     &  1.262(0.665)    & 0.180(0.014)     & 0.580(0.216)    &   1.134(0.108)    \\
\midrule
\multicolumn{10}{c}{\bfseries Max norm loss} \\
\addlinespace
AVG.  & 0.019(0.003)     &  0.019(0.003)    &0.870(0.003)      & 0.019(0.003)     & 0.075(0.041)    & 0.894(0.018)     & 0.020(0.003)     & 0.065(0.037)    &   0.8895(0.0073)     \\
$k=1$ &  0.010(0.002)    &  0.010(0.002)    & 1.267(0.006)     &0.010(0.002)      & 0.065(0.052)    &   1.272(0.014)   & 0.010(0.002)     & 0.050(0.039)    &     1.272(0.013)    \\
$k=2$ & 0.010(0.002)     & 0.010(0.002)     &  1.303(0.005)    & 0.010(0.002)     &0.063(0.053)     & 1.309(0.017)     & 0.010(0.002)     & 0.048(0.035)    &   1.310(0.012)      \\
$k=3$ &  0.038(0.008)    & 0.037(0.008)     & 0.039(0.007)     &0.038(0.008)      &0.097(0.052)     & 0.100(0.052)     & 0.038(0.008)     &  0.097(0.039)   &   0.087(0.015)       \\
\midrule
\multicolumn{10}{c}{\bfseries True negative rate} \\
\addlinespace
AVG.  &  NA    &NA      & NA     &  NA    & NA    &  NA    &    NA  &  NA   &    NA  \\
$k=1$ & NA     & NA     & NA     &  NA    &    NA &  NA    &    NA  &  NA   &   NA   \\
$k=2$ &   NA   & NA     & NA     & NA     &  NA   &  NA    &    NA  &  NA   &  NA    \\
$k=3$ &0.757(0.012)      & 0.712(0.013)     & 0.093(0.022)     & 0.755(0.024)     & 0.538(0.322)     & 0(0)     & 0.755(0.023)   & 0.656(0.294)   &0(0)\\
\bottomrule
\end{tabular}}
\label{tab6}
\end{table}

\section{Real Data Application}
In this section, we apply the proposed method on the electroencephalography (EEG) data from a study to examine EEG correlates of genetic predisposition to alcoholism as in \cite{Min02012022} and \cite{li2010dimension}. The dataset is available at \url{http://kdd.ics.uci.edu/databases/eeg/}. There were 122 subjects among which 77 were alcohol individuals and 45 were nonalcoholic individuals. Each subject had 120 trials under exposure to different picture stimuli and the measurements from 64 electrodes are placed on subject's scalps which were sampled at 256 Hz (3.9-msec epoch) for 1 second.  Roughly speaking, the original data can be viewed as 122 tensor samples with dimension 256(sample time points)$\times$ 64(electrodes/channels) $\times$ 120(trials), which could be partitioned into two groups, the alcoholic one and the non-alcoholic one. Similar as \cite{li2010dimension}, we focus on  average results over trials under exposure to single stimulus. To save computational cost, we further downgrade the random matrix from 256 $\times$ 64 to 64$\times$ 64 as in \cite{Min02012022}. We divided the dataset into an alcoholic group and a nonalcoholic group. Before analyzing the estimated precision matrix, we first test the underlying data distribution of each group. 
\begin{figure}[htbp]
    \caption{\it QQ plots of different groups }
    \centering
    \begin{minipage}{0.4\textwidth}
        \includegraphics[width=\textwidth]{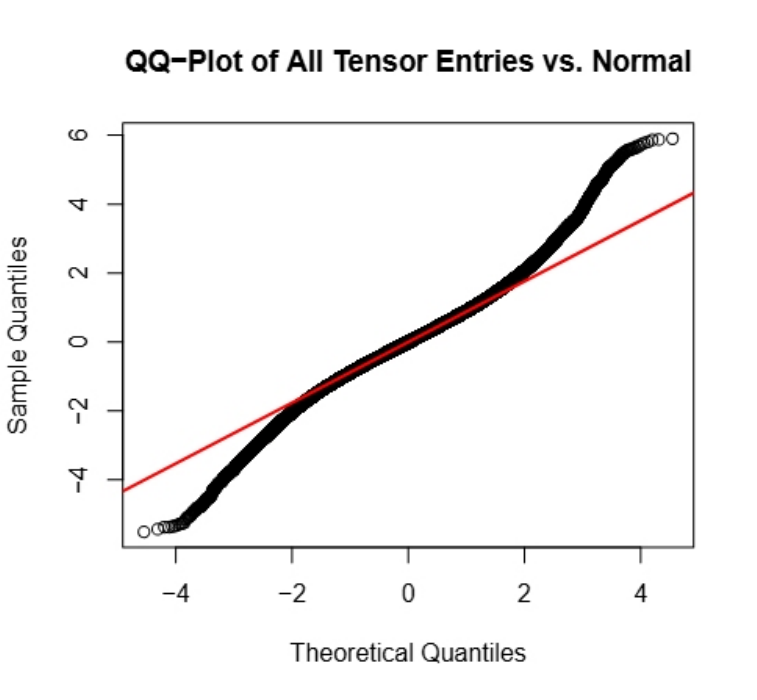}
        \parbox[t]{\textwidth}{\centering (a) Non-alcoholic group}
        \label{fig:noqq}
    \end{minipage}
    \begin{minipage}{0.4\textwidth}
        \includegraphics[width=\textwidth]{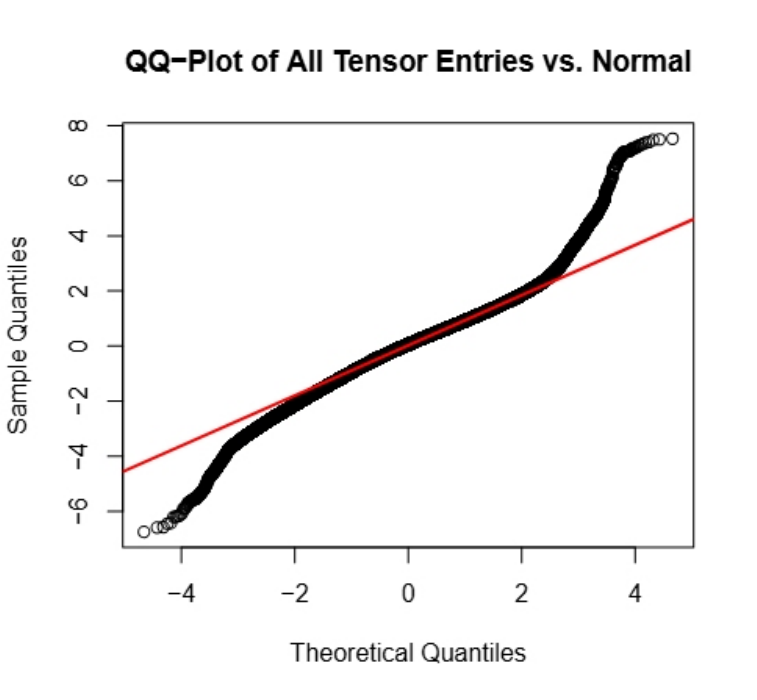}
       \parbox[t]{\textwidth}{\centering (b) Alcoholic group}
        \label{fig:alcoqq}
    \end{minipage}
    
    \label{fig:QQ}
\end{figure}

 Figure \ref{fig:QQ} illustrates the results, from which we can find that the underlying distributions of EEG data in both groups are far away from normal distribution and exhibit heavy-tailed characteristics. Therefore, it is more meaningful to use spatial-sign based methods to gain reliable and robust analysis of graphical models. To analyze the data, for each group, we standardized the data and applied the proposed method to estimate the precision matrix for channels. Tuning parameter was chosen by five-fold cross-validation.

 \begin{figure}[htbp]
    \caption{ \it Correlation networks constructed by the proposed method using the estimated precision matrices among channels for alcoholic group and nonalcoholic group. Only the first 100 strongest correlations are displayed. Nodes are labeled with EEG electrode identifiers. The blue nodes are placed on the middle of the scalp whose left and right nodes are placed on the left and right of the scalp respectively.  The thickness of edges represents the magnitude of correlations. }
    \centering
    \begin{minipage}{0.4\textwidth}
        \includegraphics[width=\textwidth]{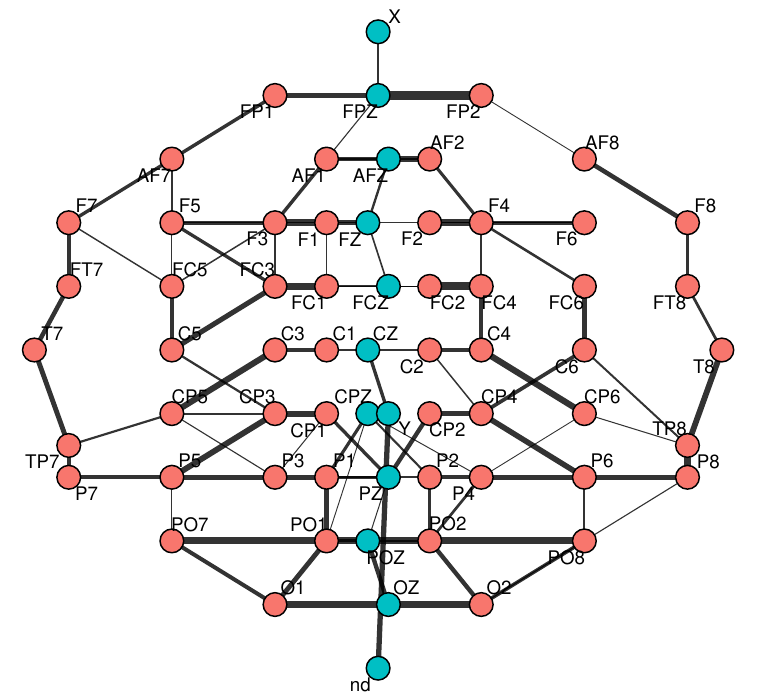}
        \parbox[t]{\textwidth}{\centering (a) Non-alcoholic group}
        \label{fig:no}
    \end{minipage}
    \begin{minipage}{0.4\textwidth}
        \includegraphics[width=\textwidth]{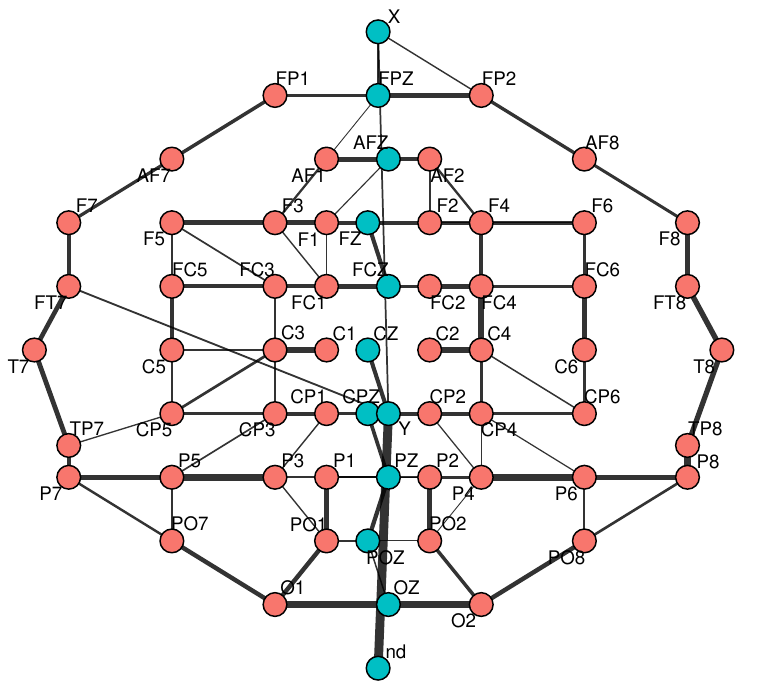}
       \parbox[t]{\textwidth}{\centering (b) Alcoholic group}
        \label{fig:alco}
    \end{minipage}
    
    \label{fig:cor}
\end{figure}

Figure \ref{fig:cor} illustrates the corresponding correlation networks. We follow the similar way of plotting as in \cite{Min02012022}. From the figure, some new different patterns that the analysis in \cite{Min02012022} does not exhibit can be recognized. For example, the connection pattern between 'F4', 'FC4', 'F6','FC6' is different between two groups. Besides, 'FC5' and 'FC3' correlates less strong in the  Non-alcoholic group, which is not detected in the old analysis. Overall, those differences may offer new insights for biologists to understand genetic predisposition to alcoholism better. 
\section{Conclusion}

In this paper, we proposed a fast and robust estimation procedure for the tensor precision matrix by introducing a novel tensor elliptical graphical model built upon the spatial-sign methodology. The proposed estimator is computationally efficient and exhibits strong robustness properties, particularly under heavy-tailed distributions. We rigorously established its theoretical properties, including convergence rates, which demonstrate the reliability of the method in high-dimensional tensor settings.

Beyond precision matrix estimation, our framework holds great potential for broader applications. For example, it can be effectively integrated into downstream tasks such as tensor discriminant analysis \citep{min2023optimality,wang2024parsimonious}, tensor regression \citep{li2017parsimonious,wang2024robust} and tensor classification \citep{pan2019covariate}. These directions are particularly relevant in modern applications involving neuroimaging, genomics, and signal processing, where data naturally exhibit tensor structures and heavy-tailed behaviors.

\section{Appendix}

In appendix, we provide the proofs of all the theorems presented in the paper, along with the lemmas required for their proofs. 


\subsection{Proof of Theorem \ref{thm:estimation_error}}

Based on the Equation \eqref{model:SSTGM}, we define the population function as
\begin{equation}\label{eq:Population}
    \begin{aligned}
        &q(\bOmega_1,\ldots,\bOmega_K)\\
        &~~:=\frac{1}{p^{*}}\E \left[ \tr\{p^*/\tr(\bSigma)\vec(\T_i-\bmu)\vec(\T_i-\bmu)^\top(\bOmega_K\otimes\cdots\otimes\bOmega_1)\} \right]-\sum_{k=1}^K\frac{1}{p_k}\log\vert \bOmega_k\vert.
    \end{aligned}
\end{equation}

By minimizing the Equation \eqref{eq:Population} with respect to $\bOmega_1,\ldots\bOmega_K$ respectively, we obtain the population results for $\bOmega_1,\ldots,\bOmega_K$, denoting as 
\begin{equation*}
    \begin{aligned}
        M_k:=\argmin_{\bOmega_k} q(\bOmega_1,\ldots,\bOmega_K).
    \end{aligned}
\end{equation*}

From the Theorem 3.1 in \cite{sun+wang+liu+cheng-2015-Tlasso}, we have the explicit form 
\begin{equation}\label{eq:Mk}
    M_k=\frac{p^*}{p_k\prod_{j\neq k}\tr(\bLambda_j^*\bOmega_j)}(\bLambda_k^*)^{-1},
\end{equation}
where $\bLambda_k^*$ denotes the shape matrix of the $k$-th mode, $\bLambda_k^*=p_k\bSigma_k^*/\tr(\bSigma_k^*)$.

We first consider the statistical error for the sample minimization function $\hat{M}_k$, where 
\begin{equation*}
    \begin{aligned}
        &\hat{M}_{k}:=\hat{M}_{k}(\tilde\bOmega_1,\ldots,\tilde\bOmega_{k-1},\tilde\bOmega_{k+1},\ldots,\tilde\bOmega_K)\\
        =&\argmin_{\bOmega_k} q_n(\tilde\bOmega_1,\ldots,\tilde\bOmega_{k-1},\bOmega_{k},\tilde\bOmega_{k+1},\ldots,\tilde\bOmega_K)=\argmin_{\bOmega_k} L(\bOmega_k).
    \end{aligned}
\end{equation*}

For some constant $H,C>0$, we define the set of convergence
\begin{equation*}
    \mathbb{A}:=\left\{ \mathbf{\Delta}\in \mathbb{R}^{p_k\times p_k}:\mathbf{\Delta}=\mathbf{\Delta}^\top,
    \|\mathbf{\Delta}\|_F=(p^*/p_k)\left\{H\sqrt{\frac{(p_k+s_k)p_k^3\log p_k}{n (p^*)^2}}+C\sqrt{\frac{p_k+s_k}{p^*}}\right\}\right\}.
\end{equation*}

The key idea is to show that for any $k\in\{1,\ldots,K\}$,
\begin{equation}\label{eq:thm1_Mk}
    \inf_{\mathbf{\Delta}\in \mathbb{A}}\left\{L(M_k+\mathbf{\Delta})-L(M_k)\right\}>0,
\end{equation}
with high probability. Note that the function $L(M_k+\mathbf{\Delta})-L(M_k)$ is convex in $\mathbf{\Delta}$ and since $\hat{M}_k$ minimizes $L(\bOmega_1)$, we have $L(\hat{M}_k)-L(M_k)\leq 0$. If we show Equation \eqref{eq:thm1_Mk}, then the minimizer $\hat{\mathbf{\Delta}}=\hat{M}_k-M_k$ must be within the interior of the ball define by $\mathbb{A}$ and hence $\|\hat{\mathbf{\Delta}}\|_F\leq (p^*/p_k)\left\{H\sqrt{(p_k+s_k)p_k^3\log p_k/\{n (p^*)^2\}}+C\sqrt{(p_k+s_k)/p^*}\right\}$. Similar technique is applied in \cite{fan2009SCAD,sun+wang+liu+cheng-2015-Tlasso,lyu2019tensor}. 

To show Equation \eqref{eq:thm1_Mk}, we decompose $\inf_{\mathbf{\Delta}\in \mathbb{A}}\left\{L(M_k+\mathbf{\Delta})-L(M_k)\right\}$ as three parts $I_1$, $I_2$ and $I_3$ with 
\begin{equation*}
    \begin{aligned}
        I_1&:=\frac{1}{p_k} \tr(\mathbf\Delta\hat{\S}_k)-\frac{1}{p_k}\left\{ \log \vert M_k+\mathbf\Delta\vert -\log \vert M_k\vert \right\},\\
        I_2&:=\lambda_k\left\{ \| [M_k+\mathbf\Delta]_{\mathbb S_k}\|_{1,\text{off}}-\| [M_k]_{\mathbb S_k}\|_{1,\text{off}} \right\},\\
        I_3&:=\lambda_k\left\{ \| [M_k+\mathbf\Delta]_{\mathbb S_k^c}\|_{1}-\| [M_k]_{\mathbb S_k^c}\|_{1} \right\}.\\
    \end{aligned}
\end{equation*}
It suffice to show that $I_1+I_2+I_3>0$ with high probability. To simplify the term $I_1$, we employ the Taylor expansion of $f(t)=\log \vert M_k+t\mathbf\Delta\vert$ at $t=0$ and obtain
\begin{equation*}
    \begin{aligned}
        \log\vert M_k+\mathbf\Delta\vert-\log\vert\mathbf\Delta\vert=&\tr(M_k^{-1}\mathbf\Delta)-\{\vec(\Delta)\}^\top \left\{ \int_0^1 (1-\nu)\mathbf M_\nu^{-1}\otimes \mathbf M_\nu^{-1}\text{d}\nu\right\}\vec(\Delta).
    \end{aligned}
\end{equation*}
where $\mathbf M_\nu:=M_k+\nu\mathbf\Delta\in \mathbb{R}^{p_k\times p_k}$. Consequently , We compose $I_1$ as
\begin{equation*}
    \begin{aligned}
        I_1=& \frac{1}{p_k}\tr\{(\hat{\S}_k-M_k^{-1})\mathbf\Delta \}+\frac{1}{p_k}\{\vec(\Delta)\}^\top \left\{ \int_0^1 (1-\nu)\mathbf M_\nu^{-1}\otimes \mathbf M_\nu^{-1}\text{d}\nu\right\}\vec(\Delta)\\
        :=&I_{11}+I_{12}.
    \end{aligned}
\end{equation*}
and $I_{11}$
\begin{equation*}
\begin{aligned}
    I_{11}\leq &\frac{1}{p_k}\left\vert \sum_{(i,j)\in\mathbb S_1}(\hat{S}_k-M_k^{-1})_{i,j}\mathbf\Delta_{i,j} \right\vert+\frac{1}{p_k}\left\vert \sum_{(i,j)\in\mathbb S_1}(\hat{S}_k^c-M_k^{-1})_{i,j}\mathbf\Delta_{i,j} \right\vert\\
    :=&I_{111}+I_{112}.
    \end{aligned}
\end{equation*}

Notice that, for two matrices $\A$, $\mathbf B$ and an index set $\mathbb S$, we have 
\begin{equation*}
    \left\vert \sum_{(i,j)\in\mathbb S}[\A]_{i,j}[\mathbf B]_{i,j}\right\vert\leq \max_{i,j}\left\vert [\A]_{i,j}\right\vert \left\vert \sum_{(i,j)\in\mathbb S}[\mathbf B]_{i,j}\right\vert \leq \sqrt{\vert \mathbb S\vert} \max_{i,j}\left\vert [\A]_{i,j}\right\vert \|\mathbf B\|_F,
\end{equation*}

For $I_{111}$, the condition for $\bOmega_\ell,\ell\neq k$ in Lemma \ref{lemma:Sk} is obviously holds.
By Lemma \ref{lemma:Sk},
\begin{equation*}
    \begin{aligned}
        I_{111}\leq& \frac{\sqrt{p_k+s_k}}{p_k}\max_{i,j}\left\vert[\hat{\S}_k-M_k^{-1}]_{i,j}\right\vert \|\mathbf\Delta\|_F\\
        \lesssim & \frac{\sqrt{p_k+s_k}}{p_k}(\sqrt{p_k^3\log p_k/\{n (p^*)^2\}}+\tilde{C}/\sqrt{p^*})\|\mathbf\Delta\|_F\\
        =&(p^*/p_k)p_k^{-1}(\sqrt{ (p_k+s_k)p_k^3\log p_k/\{n (p^*)^2\}}+\tilde{C}\sqrt{(p_k+s_k)/p^*})\cdot\\
        &~~~~~~~~~~~~~~~~~~(H\sqrt{ (p_k+s_k)p_k^3\log p_k/\{n (p^*)^2\}}+C\sqrt{(p_k+s_k)/p^*}),
    \end{aligned}
\end{equation*}
with high probability.

For $I_{12}$, by the property of the minimum eigenvalue, we have
\begin{equation*}
\begin{aligned}
        I_{12}\geq &\frac{1}{p_k}\|\vec(\mathbf\Delta)\|_2^2\int_0^1 (1-\nu)\lambda_{\min}(\mathbf M_\nu^{-1}\times\mathbf M_\nu^{-1})\text{d}\nu\\
        =&\frac{1}{p_k}\|\vec(\mathbf\Delta)\|_2^2\int_0^1 (1-\nu)\{\lambda_{\max}(M_k)+\nu\lambda_{\max}(\mathbf\Delta)\}^{-2}\text{d}\nu\\
        \geq &\frac{1}{2 p_k}\|\vec(\mathbf\Delta)\|_2^2\min_{0\leq \nu\leq 1}\{\lambda_{\max}(M_k)+\nu\lambda_{\max}(\mathbf\Delta)\}^{-2}\\
                \geq &\frac{1}{2 p_k}\|\vec(\mathbf\Delta)\|_2^2\{\|M_k\|_2+\|\mathbf\Delta\|_2\}^{-2}\\
\end{aligned}
\end{equation*}

By Assumption~\ref{ass:eigenvalues} and Equation~\eqref{eq:Mk}, $\|M_k\|_2\asymp (p^*/p_k)^{1/2}$ and $\|\mathbf\Delta\|_2$ satisfies $\|\mathbf\Delta\|_2\leq \|\mathbf\Delta\|_F=(p^*/p_k) \left\{H\sqrt{(p_k+s_k)p_k^3\log p_k/\{n (p^*)^2\}}+C\sqrt{(p_k+s_k)/p^*}\right\}$. We have,
\begin{equation*}
    I_{12}\gtrsim \frac{1}{p_k(p^*/p_k)} \|\vec(\mathbf\Delta)\|_2^2=(p^*/p_k)p_k^{-1}(H\sqrt{ (p_k+s_k)p_k^3\log p_k/\{n (p^*)^2\}}+C\sqrt{(p_k+s_k)/p^*})^2,
\end{equation*}
which dominates the term $I_{111}$ for sufficiently large $H$.

To bound $I_2$, we have,
\begin{equation*}
    \vert I_2\vert \leq \lambda_1 \| [\mathbf{\Delta}]_{\mathbb S_1}\|_{1,\text{off}}\leq \lambda_1 \sqrt{p_k+s_k}\|\mathbf{\Delta}\|_F.
\end{equation*}

By Assumption \ref{ass:lambda}, $\lambda_1\lesssim \{n^{-1/2}p_k^{1/2}(p^*)^{-1}(\log p_k)^{1/2}+ p_k^{-1}(p^*)^{-1/2}\}$. Therefore,
\begin{equation*}
    \begin{aligned}
       \vert I_2\vert\lesssim &(p^*/p_k)p_k^{-1}(\sqrt{ (p_k+s_k)p_k^3\log p_k/\{n (p^*)^2\}}+\sqrt{(p_k+s_k)/p^*})\cdot\\
        &~~~~~~~~~~~~~~~~~~(H\sqrt{ (p_k+s_k)p_k^3\log p_k/\{n (p^*)^2\}}+C\sqrt{(p_k+s_k)/p^*}),
    \end{aligned}
\end{equation*}
which is dominated by the term $I_{12}$ for sufficiently large $H$.

We next show that $I_3-\vert I_{112}\vert>0$ with high probability. For $I_3$, we have
\begin{equation*}
    I_3=\lambda_1\sum_{(i,j)\in\mathbb S_1^c}\left\{ \vert [M_k]_{i,j}+[\mathbf\Delta]_{i,j} \vert-\vert [\mathbf\Delta]_{i,j} \vert\right\}=\lambda_1\sum_{(i,j)\in\mathbb S_1^c} \vert [\mathbf\Delta]_{i,j} \vert 
\end{equation*}

Together with the expression of $I_{112}$ and the bound in Lemma \ref{lemma:Sk}, we have
\begin{equation*}
    \begin{aligned}
        I_{3}-I_{112}=\sum_{(i,j)\in\mathbb S_1^c}\left\{ 
\lambda_1- p_k^{-1}\left(\breve{C}\sqrt{\frac{p_k^3\log p_k}{n (p^*)^2}}+\frac{\tilde{C}}{\sqrt{p^*}}\right)\right\}\sum_{(i,j)\in\mathbb S_1^c} \vert [\mathbf\Delta]_{i,j} \vert >0,
    \end{aligned}
\end{equation*}
as long as $1/C_1>\min(\breve{C},\tilde{C})$ for some constant $\breve{C}$ and $\tilde{C}$,which is valid for sufficient small $C_1$ in Assumption \ref{ass:lambda}. Combing all these bounds together, we have, for any $\mathbf\Delta\in \mathbb A$, with high probability,
\begin{equation*}
    L(M_k+\mathbf\Delta)-L(M_k)\geq I_2-I_{111}-\vert I_2\vert+I_3-I_{112}>0.
\end{equation*}

The proof for $\hat{M}_k$ is completed, that is,
\begin{equation*}
    \|\hat{M}_k-M_k\|_F=p^*/p_k\left\{O_p\left(\sqrt{\frac{(p_k+s_k)p_k^3\log p_k}{n (p^*)^2}}\right)+O\left(\sqrt{\frac{p_k+s_k}{p^*}}\right)\right\}.
\end{equation*}

Then, for $\hat{\bOmega}_k$, by triangle inequality,
\begin{equation*}
    \begin{aligned}
        \|\hat{\bOmega}_k-\bOmega_k^*\|_F=&\left\|\frac{\hat{M}_k}{\|\hat{M}_k\|_F}-\frac{M_k}{\|M_k\|_F}\right\|_F\\
        \leq&\left\|\frac{\hat{M}_k}{\|\hat{M}_k\|_F}-\frac{M_k}{\|\hat{M}_k\|_F}\right\|_F+\left\|\frac{M_k}{\|\hat{M}_k\|_F}-\frac{M_k}{\|M_k\|_F}\right\|_F\\
        \leq &\frac{2}{\|\hat{M}_k\|_F}\left\| \hat{M}_k-M_k\right\|_F.
    \end{aligned}
\end{equation*}

By Assumption~\ref{ass:eigenvalues} and Equation~\eqref{eq:Mk}, $\|M_k\|_F\asymp (p^*)^{1/2}$. Then we bound $\hat{M}_k$ as 
\begin{equation*}
\begin{aligned}
    \|\hat{M}_k\|_F\gtrsim &\|M_k\|_F-\|\hat{M}_k-M_k\|_F\\
    =&(p^*)^{1/2}-(p^*)^{1/2}\left\{O_p\left(\sqrt{\frac{(p_k+s_k)p_k\log p_k}{n p^*}}\right)+O\left(\sqrt{\frac{p_k+s_k}{p_k^2}}\right)\right\}=(p^{*})^{1/2}(1+o_p(1)).
    \end{aligned}
\end{equation*}

Hence,
\begin{equation*}
    \|\hat{\bOmega}_k-\bOmega_k^*\|_F\leq O_p\left(\sqrt{\frac{(p_k+s_k)p_k\log p_k}{n p^*}}\right)+O\left(\sqrt{\frac{p_k+s_k}{p_k^2}}\right).
\end{equation*}

The proof is completed.

\subsection{Proof of Theorem \ref{thm:estimation_error_spectral}}

Before the proof, we introduce some  useful notations. 
Recall that
\begin{equation*}
    M_k:=\argmin_{\bOmega_k} q(\bOmega_1,\ldots,\bOmega_K)=\frac{p^*}{p_k\prod_{j\neq k}\tr(\bLambda_j^*\bOmega_j)}(\bLambda_k^*)^{-1}.
\end{equation*}
and
\begin{equation*}
      L(\bOmega_k):=\frac{1}{p_k}\tr(\hat\S_k\bOmega_k)-\frac{1}{p_k}\log \vert \bOmega_k\vert+\lambda_k\Vert\bOmega_k\Vert_{1,\text{off}}.
\end{equation*}
Let
\begin{equation*}
    \breve{M}_k=\argmin_{\bOmega_k\succ 0,[\bOmega_k]_{\mathbb S^c_k}=0} L(\bOmega_k),   
\end{equation*}
 $\bDelta_k=\breve{M}_k-M_k$, $\W_k= \S_k-\frac{p_k\prod_{j\neq k}\tr(\bLambda_j^*\bOmega_j)}{p^*}\bLambda_k^*$ and $R_k(\bDelta_k)=(M_k+\bDelta_k)^{-1}-M_k^{-1}+M_k^{-1}\bDelta_k M_k^{-1}$. We first show the convergence rate of $\breve{M}_k-M_k$.

    According to the matrix differentiation rule, \[
    \nabla^2_{\bOmega}(\tr(\hat{\S}_k {\bOmega})-\log\vert \bOmega\vert) = {\bOmega}^{-1}\otimes {\bOmega}^{-1}\succ 0,
    \]
   thus we have,
    \[
        \nabla^2_{{[\bOmega_k]}_{\mathbb S_k}}(\tr(\hat {\S}_k {\bOmega_k})-\log\vert \bOmega_k\vert) = [{\bOmega_k}^{-1}\otimes {\bOmega_k}^{-1}]_{\mathbb S_k \mathbb S_k}\succ 0
    \]
    By Lagrangian duality, we know $\breve{M}_k=\argmin _{{\bOmega_k}\succ 0, {[\bOmega_k]}_{\mathbb S^c}=0, \|{\bOmega_k}\|_1\leq C(\lambda_n)}\{\tr(\hat{\S}_k {\bOmega_k})-\log\vert \bOmega_k\vert\}$. Hence the strict convexity of the objective function in ${[\bOmega_k]}_{\mathbb S_k}$ implies the uniqueness of $\breve{M}_k$. Besides, by Karush-Kuhn-Tucker (KKT) conditions, we know that 
    \[
        {[\bOmega_k]}_{\mathbb S_k} = [\breve{M}_k]_{\mathbb S_k} \Longleftrightarrow G({[\bOmega_k]}_{\mathbb S_k}):=[\hat {\S}_k]_{\mathbb S_k}-{[\bOmega_k]}_{\mathbb S_k}^{-1}+p_k\lambda_k {[{\bf Z}_k]}_{\mathbb S_k}=0
    \]
    where ${[{\bf Z}_k]}$ belongs to the sub-differential of $\|\bOmega_k\|_{1,\text{off}}$, that is,
\begin{equation}\label{eq:Z_k}
[\mathbf{Z}_k]_{i,j} := 
\begin{cases}
0, & \text{if } i = j, \\
\operatorname{sign}([\bOmega_k]_{i,j}), & \text{if } i \neq j \text{ and } [\bOmega_k]_{i,j} \neq 0, \\
\in [-1, +1], & \text{if } i \neq j \text{ and } [\bOmega_k]_{i,j} = 0.
\end{cases}
\end{equation}

    Define another map $F:\mathbb R^{\vert\mathbb S_k\vert}\rightarrow\mathbb R^{\vert\mathbb S_k\vert}$, such that $F(\Vec({[\bDelta_k^\prime]}_{\mathbb S_k})):=-(p^*/p_k)({\bGamma}^*_{k,\mathbb S_k,\mathbb S_k})^{-1}{\Vec}(G([M_k]_{\mathbb S_k}+{[\bDelta_k^\prime]}_{\mathbb S_k}))+{\Vec}({[\bDelta_k^\prime]}_{\mathbb S_k})$. By definition, we know that $F(\Vec({[\bDelta_k^\prime]}_{\mathbb S_k}))=\Vec({[\bDelta_k^\prime]}_{\mathbb S_k})$ if and only if $\Vec(G([M_k]_{\mathbb S_k}+{[\bDelta_k^\prime]}_{\mathbb S_k}))=0$, i.e. ${[\bDelta_k^\prime]}_{\mathbb S_k}={[\bDelta_k]}_{\mathbb S_k}$.

   Denote $2C_3(p^*/p_k)(\|{\W_k}\|_\infty+p_k\lambda_k)$ by $r$ and $\{\Vec({[\bDelta_k^\prime]}_{\mathbb S_k}): \|{[\bDelta_k^\prime]}_{\mathbb S_k}\|_\infty \leq r\}$ by $B(r)$, we now \textbf{claim} that 
    \begin{equation}\label{eq:FBr}
        F(B(r))\subseteq B(r).
    \end{equation}
    Since $B(r)$ is a nonempty compact convex set, then by Brouwer fixed-point theorem, we can know that $\Vec({[\bDelta_k]}_{\mathbb S_k}) \in B(r)$, and hence
    \begin{equation}\label{eq:Delta_k}
        \|{\bDelta_k}\|_\infty =\|{[\bDelta_k]}_{\mathbb S_k}\|_\infty \leq  r
    \end{equation}

By Lemma \ref{lemma:Sk}, we have, 
\begin{equation*}
    \|\W_k\|_\infty = O_p\left(\sqrt{\frac{p_k^3\log p_k}{n (p^*)^2}}\right)+O\left(\frac{1}{\sqrt{p^*}}\right).
\end{equation*}
Combining Equation \eqref{eq:Delta_k}, we get 
\begin{equation*}
    \| \breve{M}_k-M_k\|_\infty\leq (p^*/p_k)\left\{O_p\left(\sqrt{\frac{p_k^3\log p_k}{n (p^*)^2}}\right)+O\left(\frac{1}{\sqrt{p^*}}\right)\right\}.
\end{equation*}

We next show that the strict dual feasibility $\breve{M}_k=\hat{M}_k$ holds. It sufficient to prove that  ${[\mathbf Z_k]}_{\mathbb S^c}:=\frac{1}{p_k\lambda_k}(-[\hat{\S}_k]_{\mathbb S^c}+[\breve{M}_k^{-1}]_{\mathbb S^c})$ satisfies $\|{[\mathbf Z_k]}_{\mathbb S^c}\|_\infty<1$.

We rewrite $G([\breve{M}_k]_{\mathbb S_k})=0$ as 
\begin{equation*}
    M_k^{-1}\bDelta_k M_k^{-1}+\W_k-R_k(\bDelta_k)+p_k\lambda_k\mathbf Z_k=0.
\end{equation*}

Then we vectorize the above equation, we have,
\begin{equation*}
    \bGamma^*_k\vec(\bDelta_k)+\vec(\W_k)-\vec(R_k(\bDelta_k))+p_k\lambda_k\vec(\mathbf Z_k)=0.
\end{equation*}
where $\breve\bGamma^*_k:=M_k^{-1}\otimes M_k^{-1}$. Since $[\bDelta_k]_{\mathbb S_k^c}=0$, we separate it as two parts,
\begin{equation}\label{eq:G1}
[\bGamma^*_k]_{\mathbb S_k,\mathbb S_k}\vec([\bDelta_k]_{\mathbb S_k})+\vec([\W_k]_{\mathbb S_k})-\vec([R_k(\bDelta_k)]_{\mathbb S_k})+p_k\lambda_k\vec([\mathbf Z_k]_{\mathbb S_k})=0,
\end{equation}
and
\begin{equation}\label{eq:G2}
        [\breve\bGamma^*_k]_{\mathbb S_k^c,\mathbb S_k}\vec([\bDelta_k]_{\mathbb S_k})+\vec([\W_k]_{\mathbb S_k^c})-\vec([R_k(\bDelta_k)]_{\mathbb S_k^c})+p_k\lambda_k\vec([\mathbf Z_k]_{\mathbb S_k^c})=0.
\end{equation}

We rewrite Equation \eqref{eq:G1} as 
\begin{equation*}
    \vec([\bDelta_k]_{\mathbb S_k})=([\breve\bGamma^*_k]_{\mathbb S_k,\mathbb S_k})^{-1} \left\{ -\vec([\W_k]_{\mathbb S_k})+\vec([R_k(\bDelta_k)]_{\mathbb S_k})-p_k\lambda_k\vec([\mathbf Z_k]_{\mathbb S_k})\right\},
\end{equation*}
and then plug it in Equation \eqref{eq:G2},
\begin{equation*}
    \begin{aligned}
        \vec([\mathbf Z_k]_{\mathbb S_k^c})=&-\frac{1}{p_k\lambda_k}[\breve\bGamma^*_k]_{\mathbb S_k^c,\mathbb S_k}\vec([\bDelta_k]_{\mathbb S_k})+\frac{1}{p_k\lambda_k}\vec([\W_k]_{\mathbb S_k^c})-\frac{1}{p_k\lambda_k}\vec([R_k(\bDelta_k)]_{\mathbb S_k^c})\\
        =&\frac{1}{p_k\lambda_k}[\bGamma^*_k]_{\mathbb S_k^c,\mathbb S_k}([\bGamma^*_k]_{\mathbb S_k,\mathbb S_k})^{-1} \left\{ \vec([\W_k]_{\mathbb S_k})-\vec([R_k(\bDelta_k)]_{\mathbb S_k})+p_k\lambda_k\vec([\mathbf Z_k]_{\mathbb S_k})\right\}\\
        &~~~~~~~~+\frac{1}{p_k\lambda_k}\vec([\W_k]_{\mathbb S_k^c})-\frac{1}{p_k\lambda_k}\vec([R_k(\bDelta_k)]_{\mathbb S_k^c}).
    \end{aligned}
\end{equation*}
where the last equation holds by the definition of $M_k$.

Taking $l_\infty$ norm, we have,
\begin{equation*}
    \begin{aligned}
        \|\vec([\mathbf Z_k]_{\mathbb S_k^c})\|_\infty
        \leq&\frac{1}{p_k\lambda_k}\|[\bGamma^*_k]_{\mathbb S_k^c,\mathbb S_k}([\bGamma^*_k]_{\mathbb S_k,\mathbb S_k})^{-1} \|_{L_{\infty}}\left\{ \|\vec([\W_k]_{\mathbb S_k})\|_{\infty}+\|\vec([R_k(\bDelta_k)]_{\mathbb S_k})\|_{\infty}\right\}\\
        &+\|[\bGamma^*_k]_{\mathbb S_k^c,\mathbb S_k}([\bGamma^*_k]_{\mathbb S_k,\mathbb S_k})^{-1} \|_{L_{\infty}}\|\vec([\mathbf Z_k]_{\mathbb S_k})\|_{\infty}\\
        &+\frac{1}{p_k\lambda_k}\|\vec([\W_k]_{\mathbb S_k^c})\|_{\infty}+\frac{1}{p_k\lambda_k}\|\vec([R_k(\bDelta_k)]_{\mathbb S_k^c})\|_{\infty}\\
        \leq &\frac{2-\alpha_k}{p_k\lambda_k}\left\{ \|\W_k\|_\infty+\|R_k(\bDelta_k)\|_\infty \right\}+1-\alpha_k.
    \end{aligned}
\end{equation*}
 where the last inequality holds by Assumption \ref{ass:bounded_complexity} and Equation \eqref{eq:Z_k}.

Since $r\leq (p^*/p_k)^{1/2}/\{3 d_k(C_3)^4(1+8/\alpha_k)\}$, we know that, $\|R_k(\bDelta_k)\|_\infty\leq 3/2 (C_3)^3d_k(p^*/p_k)^{-3/2}r^2\leq p_k\lambda_k\alpha_k/8$, we have,
\begin{equation*}
    \|[\mathbf Z_k]_{\mathbb S_k^c}\|_\infty\leq \frac{2-\alpha_k}{p_k\lambda_k}\frac{2\alpha_k p_k\lambda_k}{8}+1-\alpha_k\leq 1-\alpha_k/2<1,
\end{equation*}
with high probability. Therefore, we prove that $\hat{M}_k=\breve{M}_k$. Thus, we have,
\begin{equation*}
    \| \hat{M}_k-M_k\|_\infty\leq (p^*/p_k)\left\{O_p\left(\sqrt{\frac{p_k^3\log p_k}{n (p^*)^2}}\right)+O\left(\frac{1}{\sqrt{p^*}}\right)\right\}.
\end{equation*}
Furthermore, since the support of $\hat{M}_k$ is the same as that of $M_k$, we have,
\begin{equation*}
    \| \hat{M}_k-M_k\|_2\leq \| \hat{M}_k-M_k\|_{L_1}\leq d_k(p^*/p_k)\left\{O_p\left(\sqrt{\frac{p_k^3\log p_k}{n (p^*)^2}}\right)+O\left(\frac{1}{\sqrt{p^*}}\right)\right\}.
\end{equation*}

Then, for $\hat{\bOmega}_k$, by triangle inequality,
\begin{equation*}
    \begin{aligned}
        \|\hat{\bOmega}_k-\bOmega_k^*\|_\infty=&\left\|\frac{\hat{M}_k}{\|\hat{M}_k\|_F}-\frac{M_k}{\|M_k\|_F}\right\|_\infty\\
        \leq&\left\|\frac{\hat{M}_k}{\|\hat{M}_k\|_F}-\frac{M_k}{\|\hat{M}_k\|_F}\right\|_\infty+\left\|\frac{M_k}{\|\hat{M}_k\|_F}-\frac{M_k}{\|M_k\|_F}\right\|_\infty\\
        \leq &\frac{1}{\|\hat{M}_k\|_F}\left\| \hat{M}_k-M_k\right\|_\infty+\frac{\|{M}_k\|_\infty}{\|{M}_k\|_F\|\hat{M}_k\|_F}\left\| \hat{M}_k-M_k\right\|_F\\
        \leq &\frac{1}{(p^*)^{1/2}}(p^*/p_k)\left\{O_p\left(\sqrt{\frac{p_k^3\log p_k}{n (p^*)^2}}\right)+O\left(\frac{1}{\sqrt{p^*}}\right)\right\}\\
        &~~~~~~~~+\frac{(p^*/p_k)^{1/2}}{(p^*)^{1/2}}\left\{O_p\left(\sqrt{\frac{(p_k+s_k)p_k\log p_k}{n p^*}}\right)+O\left(\sqrt{\frac{p_k+s_k}{p_k^2}}\right)\right\}\\
        =&O_p\left(\sqrt{\frac{p_k\log p_k}{n p^*}}\right)+O\left(\frac{1}{p_k}\right).\\
    \end{aligned}
\end{equation*}
and
\begin{equation*}
    \begin{aligned}
        \|\hat{\bOmega}_k-\bOmega_k^*\|_2=&\left\|\frac{\hat{M}_k}{\|\hat{M}_k\|_F}-\frac{M_k}{\|M_k\|_F}\right\|_2\\
        \leq&\left\|\frac{\hat{M}_k}{\|\hat{M}_k\|_F}-\frac{M_k}{\|\hat{M}_k\|_F}\right\|_2+\left\|\frac{M_k}{\|\hat{M}_k\|_F}-\frac{M_k}{\|M_k\|_F}\right\|_2\\
        \leq &\frac{1}{\|\hat{M}_k\|_F}\left\| \hat{M}_k-M_k\right\|_2+\frac{\|{M}_k\|_2}{\|{M}_k\|_F\|\hat{M}_k\|_F}\left\| \hat{M}_k-M_k\right\|_F\\
        \leq &\frac{1}{(p^*)^{1/2}}d_k(p^*/p_k)\left\{O_p\left(\sqrt{\frac{p_k^3\log p_k}{n (p^*)^2}}\right)+O\left(\frac{1}{\sqrt{p^*}}\right)\right\}\\
        &~~~~~~~~+\frac{(p^*/p_k)^{1/2}}{(p^*)^{1/2}}\left\{O_p\left(\sqrt{\frac{(p_k+s_k)p_k\log p_k}{n p^*}}\right)+O\left(\sqrt{\frac{p_k+s_k}{p_k^2}}\right)\right\}\\
        =&d_k\left\{O_p\left(\sqrt{\frac{p_k\log p_k}{n p^*}}\right)+O\left(\frac{1}{p_k}\right)\right\}.\\
    \end{aligned}
\end{equation*}

\underline{Proof of Equation \eqref{eq:FBr}:}

  For ${[\bDelta_k^\prime]}_{\mathbb S_k}\in B(r)$, since ${[\bDelta_k]}_{\mathbb S_k^c}=0$, we have
    \begin{align*}
        &~~~~F(\Vec({[\bDelta_k^\prime]}_{\mathbb S_k}))\\
        &=-(p^*/p_k)({\bGamma}^*_{k,\mathbb S_k,\mathbb S_k})^{-1}{\Vec}(G([M_k]_{\mathbb S_k}+{[\bDelta_k^\prime]}_{\mathbb S_k}))+{\Vec}({[\bDelta_k^\prime]}_{\mathbb S_k})\\
        &=-(p^*/p_k)({\bGamma}^*_{k,\mathbb S_k,\mathbb S_k})^{-1}{\Vec}(-[(M_k+\bDelta_k^\prime)^{-1}-M_k^{-1}]_{\mathbb S_k}+{[{\bf W}_k]}_{\mathbb S_k}+p_k\lambda_k {[{\bf Z}_k]}_{\mathbb S_k})+{\Vec}([\bDelta_k^\prime]_{\mathbb S_k})\\
        &=-(p^*/p_k)({\bGamma}^*_{k,\mathbb S_k,\mathbb S_k})^{-1}\left\{{\Vec}(-[(M_k+\bDelta_k^\prime)^{-1}-M_k^{-1}]_{\mathbb S_k})+({\bGamma}^*_{k,\mathbb S_k,\mathbb S_k})\Vec([\bDelta_k^\prime]_{\mathbb S_k})\right\}\\
        &~~~~~~~~~~~~~-(p^*/p_k)({\bGamma}^*_{k,\mathbb S_k,\mathbb S_k})^{-1}\Vec({[\bf W_k]}_{\mathbb S_k}+p_k\lambda_k {[{\bf Z}_k]}_{\mathbb S_k})\\
        &=-(p^*/p_k)({\bGamma}^*_{k,\mathbb S_k,\mathbb S_k})^{-1}{\Vec}(\left[-\{(M_k+\bDelta_k^\prime)^{-1}-M_k^{-1}\}+M_k^{-1}\bDelta_k^\prime M_k^{-1}\right]_{\mathbb S_k})\\
        &~~~~~~~~~~~~~-(p^*/p_k)({\bGamma}^*_{k,\mathbb S_k,\mathbb S_k})^{-1}\Vec({[\bf W_k]}_{\mathbb S_k}+p_k\lambda_k {[\bf Z]}_{\mathbb S_k})\\
        &= (p^*/p_k)({\bGamma}^*_{k,\mathbb S_k,\mathbb S_k})^{-1} \Vec([R(\bDelta_k^\prime)]_{\mathbb S_k})-(p^*/p_k)({\bGamma}^*_{k,\mathbb S_k,\mathbb S_k})^{-1}\Vec({[\bf W_k]}_{\mathbb S_k}+\lambda_k {[\bf Z_k]}_{\mathbb S_k})
    \end{align*}
    Hence, since $r\leq \min\{\frac{(p^*/p_k)^{1/2}}{3(C_3)^4 d_k}, \frac{(p^*/p_k)^{1/2}}{3 C_3 d_k}\}$, which can be proved by Assumption \ref{ass:bounded_complexity}
    \begin{align*}
    \|F(\Vec([{\bDelta_k}^\prime]_{\mathbb S_k}))\|_\infty
    &\leq C_3(p^*/p_k)\|R({\bDelta_k}^\prime)\|_\infty+C_3(p^*/p_k)(\|{\bf W}_k\|_\infty +p_k\lambda_k)\\
    &\leq 3/2 (C_3)^4 d_k(p^*/p_k)^{-1/2}\|\bDelta_k^\prime\|_\infty^2+r/2\\
    &\leq 3/2 (C_3)^4 d_k(p^*/p_k)^{-1/2}r^2+r/2\\
    &\leq r.
    \end{align*}

The proof is completed.
\subsection{Proof of Theorem \ref{thm:selection}}

By Theorem \ref{thm:estimation_error_spectral}, we see that 
\begin{equation*}
    \Vert \hat{\bOmega}_k-\bOmega_k\Vert_\infty=O_p\left(\sqrt{\frac{p_k\log p_k}{n p^*}}\right)+O\left(\frac{1}{p_k}\right),
\end{equation*}
that is, with the probability larger than $1-\beta$,
\begin{equation*}
    \Vert \hat{\bOmega}_k-\bOmega_k\Vert_\infty<\tau_k,
\end{equation*}

For $(i,j)\notin \mathcal{S}(\bOmega_k^*)$, we have $\vert[\breve{\bOmega}_k]_{i,j}\vert=\vert[\hat{\bOmega}_k]_{i,j}-[{\bOmega}_k^*]_{i,j}\vert\leq \tau_k$, which implies $[\breve{\bOmega}_k]_{i,j}=0$. For $(i,j)\in \mathcal{S}(\bOmega_k^*)$, since $\theta_{\min}>2\max_{1\leq k\leq  K}\tau_k$, we have, $\vert[\hat{\bOmega}_k]_{i,j}\vert\geq \vert [{\bOmega}_k]_{i,j}^*\vert-\vert [\hat{\bOmega}_k]_{i,j}-[{\bOmega}_k]_{i,j}^*\vert>\theta_{\min}-\tau_k>\tau_k$, which implies $[\breve{\bOmega}_k]_{i,j}=[\hat{\bOmega}_k]_{i,j}$. When $[{\bOmega}_k]_{i,j}^*>0$, $[\breve{\bOmega}_k]_{i,j}=[\hat{\bOmega}_k]_{i,j}\geq  [{\bOmega}_k]_{i,j}^*-\vert [\hat{\bOmega}_k]_{i,j}-[{\bOmega}_k]_{i,j}^*\vert>\tau_k$, which implies $\text{sign}([{\bOmega}_k]_{i,j}^*)=\text{sign}([{\breve\bOmega}_k]_{i,j})$. Similarly, when $[{\bOmega}_k]_{i,j}^*<0$, $[\breve{\bOmega}_k]_{i,j}=[\hat{\bOmega}_k]_{i,j}\leq  [{\bOmega}_k]_{i,j}^*+\vert [\hat{\bOmega}_k]_{i,j}-[{\bOmega}_k]_{i,j}^*\vert<-\tau_k$, which implies $\text{sign}([{\bOmega}_k]_{i,j}^*)=\text{sign}([{\breve\bOmega}_k]_{i,j})$. The proof is completed.

\subsection{Preliminary lemmas}

\begin{lemma}{(General moments of spherically symmetric distribution)}\label{lemma:moments}Suppose ${\boldsymbol u}=(u_1,u_2,\cdots,u_p)^\top\in\mathbb R^{p}$ is uniformly distributed on $\mathbb{S}^{p-1}$, then for any integers $m_1, \ldots, m_p$, with $m = \sum_{i=1}^p m_i$, the mixed moments of $\boldsymbol u$ can be expressed as:
\[
\E \left( \prod_{i=1}^p u_i^{m_i} \right) =
\begin{cases}
\frac{1}{(p/2)^{[l]}} \frac{(2l_i)!}{\prod_{i=1}^n 4^{l_i}(l_i)!}, & \text{if } m_i = 2l_i \text{ are even}, \, i = 1, \ldots, p, \, m = 2l; \\
0, & \text{if at least one of the } m_i \text{ is odd.}
\end{cases}
\]
where as above $x^{[l]} = x(x+1)\cdots(x+l-1)$.

Specifically,
\[\E(u_i u_j u_k u_l)=\frac{\delta_{i j}\delta_{kl}+\delta_{i k}\delta_{j l}+\delta_{i l}\delta_{j k}}{p(p+2)}\]
where $\delta_{ij}=\mathbb I(i=j)$.
\end{lemma}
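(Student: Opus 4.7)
The plan is to exploit the classical representation of the uniform distribution on $\mathbb{S}^{p-1}$ as the normalization of an isotropic Gaussian vector. Concretely, take $\boldsymbol{z}=(z_1,\dots,z_p)^\top\sim N(\boldsymbol{0},\I_p)$; then $\boldsymbol{u}\stackrel{d}{=}\boldsymbol{z}/\|\boldsymbol{z}\|$ is uniform on $\mathbb{S}^{p-1}$, and, by rotational invariance of the isotropic Gaussian, the direction $\boldsymbol{z}/\|\boldsymbol{z}\|$ is independent of the radius $\|\boldsymbol{z}\|$. This independence yields the factorization
\begin{equation*}
\E\!\Bigl(\prod_{i=1}^p z_i^{m_i}\Bigr)=\E\bigl(\|\boldsymbol{z}\|^{m}\bigr)\cdot\E\!\Bigl(\prod_{i=1}^p u_i^{m_i}\Bigr),\qquad m=\sum_{i=1}^p m_i,
\end{equation*}
which decouples the spherical mixed moment from a one-dimensional radial moment.

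Next I would dispose of the odd case. By coordinate independence of $\boldsymbol{z}$, the left-hand side factors as $\prod_i \E(z_i^{m_i})$, and whenever some $m_i$ is odd, the symmetry $z_i\stackrel{d}{=}-z_i$ forces $\E(z_i^{m_i})=0$. Since $\E(\|\boldsymbol{z}\|^m)>0$, this immediately gives $\E\prod_i u_i^{m_i}=0$, matching the second branch of the claim.

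For the even case $m_i=2l_i$, I would substitute the standard Gaussian moment $\E(z_i^{2l_i})=(2l_i)!/(2^{l_i} l_i!)$ into the factorization and use the $\chi^2_p$ moment formula $\E(\|\boldsymbol{z}\|^{2l})=2^{l}\,\Gamma(p/2+l)/\Gamma(p/2)=2^{l}(p/2)^{[l]}$ for $l=\sum_i l_i$. Solving for the spherical moment and using $2^l=\prod_i 2^{l_i}$ to combine the factors of two as $2^l\cdot\prod_i 2^{l_i}=\prod_i 4^{l_i}$, the expression collapses to the claimed closed form $\{(p/2)^{[l]}\}^{-1}\prod_i (2l_i)!/(4^{l_i} l_i!)$.

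Finally, for the fourth-order specialization I would specialize to $m=4$ and enumerate the three admissible pairing patterns (all four indices equal; two disjoint pairs; one index unpaired hence zero). A direct check shows $\E(u_i^4)=3/\{p(p+2)\}$ and $\E(u_i^2 u_j^2)=1/\{p(p+2)\}$ for $i\neq j$, which together are encoded compactly by the three-term pairing $\delta_{ij}\delta_{kl}+\delta_{ik}\delta_{jl}+\delta_{il}\delta_{jk}$. I expect no real obstacle here; the main subtlety is bookkeeping the factors of two and $l_i=0$ components so that the product formula remains valid when some coordinates do not appear, and verifying that the enumeration of pairings produces the correct combinatorial constants in the quartic case.
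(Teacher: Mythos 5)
Your proposal is correct. The argument via the Gaussian representation $\boldsymbol{u}\stackrel{d}{=}\boldsymbol{z}/\|\boldsymbol{z}\|$ with $\boldsymbol{z}\sim N(\boldsymbol{0},\I_p)$, the independence of direction and radius, the factorization $\E\prod_i z_i^{m_i}=\E(\|\boldsymbol{z}\|^m)\,\E\prod_i u_i^{m_i}$, and the substitution of the Gaussian moments $(2l_i)!/(2^{l_i}l_i!)$ together with the $\chi^2_p$ moment $2^l(p/2)^{[l]}$ all check out, as does the quartic specialization ($3/\{p(p+2)\}$ on the diagonal, $1/\{p(p+2)\}$ for distinct pairs, zero otherwise, which is exactly what the three-term delta expression encodes). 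The paper does not prove this lemma at all: it simply cites Theorem 3.3 of Fang, Kotz and Ng, whose derivation proceeds via the stochastic representation of spherical distributions and the Dirichlet law of $(u_1^2,\dots,u_p^2)$. Your route is therefore genuinely different and arguably more self-contained and elementary — it needs only the rotational invariance of the isotropic Gaussian and two standard one-dimensional moment formulas — at the cost of being specific to the uniform spherical law rather than delivering the general spherical-distribution machinery of the cited reference (which is not needed here). One small point worth making explicit in a write-up: the identity $\prod_i z_i^{m_i}=\|\boldsymbol{z}\|^{m}\prod_i u_i^{m_i}$ should be stated before invoking independence, and the convention $(2\cdot 0)!/(4^{0}\,0!)=1$ handles coordinates with $m_i=0$, as you note. (The displayed formula in the lemma has a typo, $\prod_{i=1}^{n}$ for $\prod_{i=1}^{p}$ and the product sign misplaced relative to $(2l_i)!$; your derivation recovers the intended expression.)
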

\begin{proof}
  See the Theorem 3.3 in Section 3.1.2. of \cite{fang-2018-symmetric}  
\end{proof}

Before the Lemma 2, we give the definition of Lipschitz function and norm.
\begin{definition}[Lipschitz functions]
\label{def:lipschitz}
Let $(X, d_X)$ and $(Y, d_Y)$ be metric spaces.  
A function $f \colon X \to Y$ is called \emph{Lipschitz} if there exists $L \in \mathbb{R}$ such that  
\[
d_Y(f(u), f(v)) \leq L \cdot d_X(u, v) \quad \text{for every } u, v \in X.
\]
The infimum of all $L$ in this definition is called the \emph{Lipschitz norm} of $f$ and is denoted $\|f\|_{\text{Lip}}$.
\end{definition}

The property for spherical distributions are shown as follows.
\begin{lemma}
    \label{lemma:lipschitz-concentration}
Consider a random vector \( \boldsymbol{X} \sim \sqrt{p}\text{Unif}(\mathbb{S}^{p-1})\), i.e.\ \( \boldsymbol{X} \) is uniformly distributed on the Euclidean sphere of radius \(\sqrt{p}\). Consider a Lipschitz function\footnote{Here we mean a function satisfying the Lipschitz condition with respect to the Euclidean distance on the sphere.} \( f : \sqrt{p}\mathbb{R}^{p} \to \mathbb{R} \). Then
\[
\|f(\boldsymbol{X}) - \E \{f(\boldsymbol{X})\}\|_{\psi_2} \leq C_{Lip}\|f\|_{\text{Lip}},
\]
for some constant $C_{Lip}$.
\end{lemma}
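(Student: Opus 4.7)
The plan is to reduce the statement to the classical Lévy concentration inequality on the unit sphere and absorb the rescaling into the Lipschitz constant. First I would introduce the change of variables $\boldsymbol{Y} = \boldsymbol{X}/\sqrt{p}$, so that $\boldsymbol{Y}$ is uniform on $\mathbb{S}^{p-1}$, and define $g(\boldsymbol{y}) := f(\sqrt{p}\,\boldsymbol{y})$. The Lipschitz norm of $g$ with respect to the Euclidean (equivalently geodesic, up to a universal factor) distance on $\mathbb{S}^{p-1}$ is bounded by $\sqrt{p}\,\|f\|_{\mathrm{Lip}}$.

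Next I would invoke the concentration of measure on the sphere (e.g.\ Theorem 5.1.4 of Vershynin's \emph{High-Dimensional Probability}, or Ledoux's inequality): for any $1$-Lipschitz $h$ on $\mathbb{S}^{p-1}$ and $\boldsymbol{Y}$ uniform on the sphere,
\begin{equation*}
\pr\bigl(|h(\boldsymbol{Y}) - \E h(\boldsymbol{Y})| \geq t\bigr) \leq 2\exp(-c p t^{2})
\end{equation*}
for a universal constant $c>0$. Applying this to $h = g/(\sqrt{p}\,\|f\|_{\mathrm{Lip}})$ and using $f(\boldsymbol{X}) = g(\boldsymbol{Y})$ yields
\begin{equation*}
\pr\bigl(|f(\boldsymbol{X}) - \E f(\boldsymbol{X})| \geq t\bigr) \leq 2\exp\!\left(-\frac{c\, t^{2}}{\|f\|_{\mathrm{Lip}}^{2}}\right),
\end{equation*}
where the $p$ in the exponent cancels against the $p$ from the scaling of the Lipschitz norm — this is the key point that makes the bound dimension-free.

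Finally, I would translate this sub-Gaussian tail bound into the $\psi_2$-norm statement by the standard equivalence between sub-Gaussian tails and $\psi_2$ moment bounds (e.g.\ Proposition 2.5.2 of Vershynin): a random variable $Z$ with $\pr(|Z| \geq t) \leq 2\exp(-c t^{2}/\sigma^{2})$ satisfies $\|Z\|_{\psi_2} \leq C \sigma$ for some absolute $C$. Setting $\sigma = \|f\|_{\mathrm{Lip}}$ gives the claim with $C_{\mathrm{Lip}} = C/\sqrt{c}$.

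The only substantive ingredient is Lévy's isoperimetric/concentration inequality on the sphere, which I would cite rather than reprove; everything else is bookkeeping around the $\sqrt{p}$ rescaling. The subtle point to be careful about is the choice of metric on the sphere in stating Lévy's inequality (geodesic vs.\ chordal distance) and verifying that the Lipschitz hypothesis on $f$, which is stated for the ambient Euclidean metric, transfers to whichever metric the cited concentration result uses — the two differ by at most a factor of $\pi/2$, so this contributes only an absolute constant absorbed into $C_{\mathrm{Lip}}$.
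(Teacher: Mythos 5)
Your argument is correct and is essentially the same as the paper's, which simply cites Theorem 5.1.4 of Vershynin's \emph{High-Dimensional Probability}; your sketch is precisely the standard proof of that theorem (Lévy concentration on the unit sphere, the $\sqrt{p}$ rescaling cancelling the dimension in the exponent, and the tail-to-$\psi_2$ equivalence). The metric caveat you raise is handled the same way there, so nothing further is needed.
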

\begin{proof}
    See the Theorem 5.1.4 in Section 5.1.2 of \cite{vershynin2018high}.
\end{proof}

\begin{lemma}[Bernstein's inequality]\label{lemma:bernstein}
Let \( X_1, \ldots, X_N \) be independent, mean zero, sub-exponential random variables. Then, for every \( t \geq 0 \), we have
\[
\mathbb{P} \left( \left| \sum_{i=1}^N X_i \right| \geq t \right) 
\leq 2 \exp \left\{ -c \min \left( \frac{t^2}{\sum_{i=1}^N \| X_i \|_{\psi_1}^2}, \frac{t}{\max_i \| X_i \|_{\psi_1}} \right) \right\},
\]
where \( c > 0 \) is an absolute constant.
\end{lemma}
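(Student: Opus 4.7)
The plan is to prove Bernstein's inequality via the standard Chernoff (moment-generating-function) argument, optimized over the dual parameter $\lambda$. First I would bound $\pr(\sum_{i=1}^N X_i \geq t)$ by Markov's inequality applied to $\exp(\lambda \sum X_i)$ for $\lambda > 0$, giving
\[
\pr\left(\sum_{i=1}^N X_i \geq t\right) \leq e^{-\lambda t} \prod_{i=1}^N \E \exp(\lambda X_i),
\]
where the factorization uses independence. The same argument applied to $-X_i$ yields the matching lower-tail bound, and a factor-of-two union bound handles $|\sum X_i|$ at the end.

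The key technical step is a one-variable MGF estimate: for any mean-zero sub-exponential $X$ with $\|X\|_{\psi_1} = K$, there exist absolute constants $c_0, C_0 > 0$ such that
\[
\E \exp(\lambda X) \leq \exp(C_0 \lambda^2 K^2) \quad \text{whenever } |\lambda| \leq c_0/K.
\]
I would derive this by Taylor-expanding $\exp(\lambda X) = 1 + \lambda X + \sum_{k \geq 2} (\lambda X)^k/k!$, using $\E X = 0$ on the linear term, and controlling $\E|X|^k$ via the sub-exponential norm: the definition $\|X\|_{\psi_1} = \sup_{k \geq 1} k^{-1}(\E|X|^k)^{1/k}$ gives $\E|X|^k \leq (kK)^k$, so $\E|X|^k/k! \leq (eK)^k$ by Stirling. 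Summing the geometric series gives $\E \exp(\lambda X) \leq 1 + C_0 \lambda^2 K^2$ for small enough $|\lambda| K$, and $1 + x \leq e^x$ closes the estimate.

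Plugging this into the product with $K_i := \|X_i\|_{\psi_1}$, I obtain, for all $\lambda$ with $|\lambda| \leq c_0/\max_i K_i$,
\[
\pr\left(\sum_{i=1}^N X_i \geq t\right) \leq \exp\!\left(-\lambda t + C_0 \lambda^2 \sum_{i=1}^N K_i^2\right).
\]
The final step is to optimize in $\lambda$. The unconstrained minimizer is $\lambda^\star = t/(2 C_0 \sum_i K_i^2)$, which plugged in gives the sub-Gaussian piece $\exp\{-t^2/(4 C_0 \sum_i K_i^2)\}$. If $\lambda^\star$ violates the constraint $|\lambda| \leq c_0/\max_i K_i$, I would instead take $\lambda = c_0/\max_i K_i$, which forces the linear term $-\lambda t$ to dominate the quadratic, yielding the sub-exponential piece $\exp\{-c_1 t/\max_i K_i\}$ for a suitable absolute constant $c_1$. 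Taking the worse of the two regimes produces the $\min$ inside the exponent, and combining with the two-sided union bound yields the factor $2$ and the stated inequality.

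The only delicate step is the MGF bound for a single sub-exponential variable; once that is in hand, the rest is bookkeeping on the optimization in $\lambda$. The main obstacle is tracking the absolute constants cleanly through the case split so that a single constant $c$ works in both regimes, which is standard and can be handled by absorbing factors into $c$ at the end.
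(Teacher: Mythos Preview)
Your proposal is correct and is precisely the standard Chernoff/MGF argument (Taylor-expand, use $\E X=0$ and the moment growth $\E|X|^k\le (kK)^k$, then optimize $\lambda$ over the two regimes) that underlies Theorem~2.8.1 in Vershynin's \emph{High-Dimensional Probability}, which is exactly what the paper cites as its proof. There is nothing to add.
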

\begin{proof}
    See the Theorem 2.8.1 in Section 2.8 of \cite{vershynin2018high}.
\end{proof}

\begin{lemma}
    Suppose ${\boldsymbol u}=(u_1,\cdots, u_p)^\top\in \mathbb R^{p^*}$ is uniformly distributed on $\mathbb{S}^{p^*-1}$, then under Assumption \ref{ass:eigenvalues}-\ref{ass:lambda_inverse}, we have 
    \[
    \left|\E\left(\frac{u_iu_j}{{\boldsymbol u}^\top {\bSigma^*}{\boldsymbol u}}\right)-\frac{1}{\tr({\bSigma^*})}\delta_{ij}\right|=O\left\{(p^*)^{-\frac{3}{2}}\right\}
    \]
\end{lemma}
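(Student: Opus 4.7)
Denote $Q:=\boldsymbol u^\top\bSigma^*\boldsymbol u$. The plan is to expand $1/Q$ about its mean $\E(Q)=\tr(\bSigma^*)/p^*$ and control the remainder by second moments. Starting from the identity
\[
\frac{1}{Q}=\frac{1}{\E(Q)}-\frac{Q-\E(Q)}{Q\,\E(Q)},
\]
I multiply by $u_iu_j$ and take expectations, yielding
\[
\E\!\left(\frac{u_iu_j}{Q}\right)=\frac{\E(u_iu_j)}{\E(Q)}-\frac{1}{\E(Q)}\E\!\left(\frac{u_iu_j(Q-\E(Q))}{Q}\right).
\]
By Lemma~\ref{lemma:moments}, $\E(u_iu_j)=\delta_{ij}/p^*$, so the first term equals $\delta_{ij}/\tr(\bSigma^*)$, matching the stated leading term exactly.

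Next, I bound the remainder. Since $\|\boldsymbol u\|=1$, Assumption~\ref{ass:eigenvalues} (applied to the Kronecker product, whose eigenvalues are products of eigenvalues of the $\bSigma_k^*$'s) gives a deterministic lower bound $Q\geq \lambda_{\min}(\bSigma^*)\geq C_1^K>0$, and also $\E(Q)=\tr(\bSigma^*)/p^*\asymp 1$. Hence by Cauchy--Schwarz,
\[
\left|\E\!\left(\frac{u_iu_j(Q-\E(Q))}{Q}\right)\right|\leq \frac{1}{C_1^K}\sqrt{\E(u_i^2u_j^2)\,\var(Q)}.
\]
From Lemma~\ref{lemma:moments}, $\E(u_i^2u_j^2)\leq 3/(p^*(p^*+2))=O((p^*)^{-2})$. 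It remains to show $\var(Q)=O((p^*)^{-1})$, which, combined with the bound above, will deliver the claimed $O((p^*)^{-3/2})$ rate after dividing by $\E(Q)=O(1)$.

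For the variance bound, I will compute $\var(Q)=\sum_{k,l,s,t}\bSigma^*_{kl}\bSigma^*_{st}\{\E(u_ku_lu_su_t)-\E(u_ku_l)\E(u_su_t)\}$ using the fourth-moment formula in Lemma~\ref{lemma:moments}. A direct calculation yields
\[
\var(Q)=\frac{2}{p^*(p^*+2)}\!\left[\tr\{(\bSigma^*)^2\}-\frac{(\tr\bSigma^*)^2}{p^*}\right],
\]
and by Assumption~\ref{ass:eigenvalues}, $\tr\{(\bSigma^*)^2\}\leq \lambda_{\max}(\bSigma^*)\tr(\bSigma^*)=O(p^*)$, so $\var(Q)=O((p^*)^{-1})$ as required. (An alternative route uses Lemma~\ref{lemma:lipschitz-concentration} applied to the quadratic form $f(\boldsymbol x)=\boldsymbol x^\top\bSigma^*\boldsymbol x$ on the sphere of radius $\sqrt{p^*}$, whose Lipschitz norm is $O(\sqrt{p^*})$, giving $\|Q-\E(Q)\|_{\psi_2}=O((p^*)^{-1/2})$ and hence the same variance bound.)

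The main technical obstacle is the variance computation: the naive bound $\tr\{(\bSigma^*)^2\}=O((p^*)^2)$ would only give $\var(Q)=O(1)$ and destroy the rate, so it is essential that Assumption~\ref{ass:eigenvalues} provides the sharper $\tr\{(\bSigma^*)^2\}=O(p^*)$, and that the leading $(\tr\bSigma^*)^2/p^*$ term cancels the dominant piece of the first sum. Once these algebraic cancellations are checked, assembling the three ingredients (lower bound on $Q$, $\E(u_i^2u_j^2)=O((p^*)^{-2})$, and $\var(Q)=O((p^*)^{-1})$) yields the conclusion.
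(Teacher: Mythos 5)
Your proof is correct, and it follows the same high-level skeleton as the paper's (expand $1/Q$ around $\E(Q)=\tr(\bSigma^*)/p^*$, identify the leading term via $\E(u_iu_j)=\delta_{ij}/p^*$, and control the remainder by Cauchy--Schwarz together with $\var(Q)=O((p^*)^{-1})$), but you handle the factor $1/Q$ in a genuinely simpler way. The paper keeps $u_iu_j/Q$ together inside Cauchy--Schwarz, and therefore must bound $\E(u_iu_j/Q)^2$; to do so it invokes the inequality $(\boldsymbol u^\top\bSigma^*\boldsymbol u)(\boldsymbol u^\top(\bSigma^*)^{-1}\boldsymbol u)\geq 1$, replaces $1/Q$ by the quadratic form in $(\bSigma^*)^{-1}$, and then carries out an eighth-moment computation that requires Assumption~\ref{ass:lambda_inverse} (the $L_1$ bound on $(\bSigma^*)^{-1}$). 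You instead use the deterministic bound $Q\geq\lambda_{\min}(\bSigma^*)\geq C_1^K$ (valid since $\|\boldsymbol u\|_2=1$), pull $1/Q$ out of the expectation, and apply Cauchy--Schwarz only to $u_iu_j$ and $Q-\E(Q)$, so that the fourth-moment bound $\E(u_i^2u_j^2)\leq 3/\{p^*(p^*+2)\}$ suffices. Both routes give $\sqrt{O((p^*)^{-2})\cdot O((p^*)^{-1})}=O((p^*)^{-3/2})$; yours is shorter, avoids the eighth-moment algebra, and shows that Assumption~\ref{ass:lambda_inverse} is not actually needed for this lemma (only Assumption~\ref{ass:eigenvalues} is used). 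As a minor remark, your exact expression for $\var(Q)$, namely $\tfrac{2}{p^*(p^*+2)}[\tr\{(\bSigma^*)^2\}-(\tr\bSigma^*)^2/p^*]$, is the correct one; the paper's displayed version has a sign slip in the second term, though this does not affect its upper bound.
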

\begin{proof}
    By Lemma \ref{lemma:moments} and Assumption \ref{ass:eigenvalues}, we have,
    \begin{equation*}
        \begin{aligned}
           &\left|\E\left(\frac{u_i u_j}{{\boldsymbol u}^\top {\bSigma^*}{\boldsymbol u}}\right)-\frac{1}{\tr({\bSigma^*})}\delta_{ij}\right|\\
           \leq &\E\left|\frac{u_i u_j}{{\boldsymbol u}^\top {\bSigma^*}{\boldsymbol u}}-\frac{u_i u_j}{\E({\boldsymbol u}^\top {\bSigma^*}{\boldsymbol u})}\right|
           = \E\left|\frac{u_i u_j\left\{{\boldsymbol u}^\top {\bSigma^*}{\boldsymbol u}-\E({\boldsymbol u}^\top {\bSigma^*}{\boldsymbol u})\right\}}{{\boldsymbol u}^\top {\bSigma^*}{\boldsymbol u}\E({\boldsymbol u}^\top {\bSigma^*}{\boldsymbol u})}\right|\\
           \leq &\frac{1}{C_1^K}\sqrt{\E\left(\frac{u_i u_j}{{\boldsymbol u}^\top {\bSigma^*}{\boldsymbol u}}\right)^2\var({\boldsymbol u}^\top {\bSigma^*}{\boldsymbol u})}\\
    \end{aligned}
    \end{equation*}
By Cauchy equality, we have $({\boldsymbol u}^\top \bSigma^*{\boldsymbol u} )({\boldsymbol u}^\top(\bSigma^*)^{-1}{\boldsymbol u} )\geq 1$. Then by Assumption \ref{ass:lambda_inverse},
\begin{equation*}
    \begin{aligned}
        \E\left(\frac{u_iu_j}{{\boldsymbol u}^\top \bSigma^*{\boldsymbol u}}\right)^2
        \leq &\E\left(u_iu_j{\boldsymbol u}^\top{(\bSigma^*)}^{-1}{\boldsymbol u}\right)^2=\E\left(\sum_{k_1,l_1}\sum_{k_2,l_2}[(\bSigma^*)^{-1}]_{k_1,l_1}[(\bSigma^*)^{-1}]_{k_2,l_2}u_i^2u_j^2u_{k_1}u_{k_2}u_{l_1}u_{l_2}\right)\\
        = & \frac{105(p^*)^2}{\{\tr(\bSigma)\}^2}\cdot\frac{\sum_{k_1,l_1=1}^{p^*}\sum_{k_2,l_2=1}^{p^*}v_{k_1l_1}v_{k_2l_2}}{p^*(p^*+2)(p^*+4)(p^*+6)}\\
        \leq &\frac{105}{C_1^{2K}}\cdot\frac{{(p^*)}^2}{p^*(p^*+2)(p^*+4)(p^*+6)}\\
        =& O\left\{(p^*)^{-2}\right\}.
    \end{aligned}
\end{equation*}

    By Lemma \ref{lemma:moments} and Assumption \ref{ass:eigenvalues}, we have
    \begin{equation}\label{eq:varuu}
            \begin{aligned}
        \var(\boldsymbol u^\top \bSigma^*\boldsymbol u)
        =&\E({\boldsymbol u}^\top {\bSigma^*}{\boldsymbol u}{\boldsymbol u}^\top {\bSigma^*}{\boldsymbol u})-\{\E({\boldsymbol u}^\top {\bSigma^*}{\boldsymbol u})\}^2\\
         =&\frac{\tr\{\tr(\bSigma^*)\bSigma^*+2(\bSigma^*)^2\}}{p^*(p^*+2)}-\frac{\{\tr(\bSigma^*)\}^2}{(p^*)^2}\\
         =&\frac{2\tr((\bSigma^*)^2)}{p^*(p^*+2)}+2\frac{\{\tr(\bSigma^*)\}^2}{(p^*)^2(p^*+2)}\\
         \leq& \frac{2\sum_{i=1}^{p^*}\lambda_i^2(\bSigma^*)}{p^*(p^*+2)}+\frac{2}{C_1^{2K}(p^*+2)}\\
        \leq& \frac{2}{C_1^{2K}(p^*+2)}+\frac{2}{C_1^{2K}(p^*+2)}\\
        =&O\left\{(p^*)^{-1}\right\}
    \end{aligned}
\end{equation}

    A hidden calculation involved here is
    \begin{equation*}
            \begin{aligned}
          \E[{\boldsymbol u}^\top \bSigma^*{\boldsymbol u} {\boldsymbol u}{\boldsymbol u}^\top]_{ij}
          &=\sum_{k,l=1}^{p^*}[\bSigma^*]_{k,l}(\delta_{ij}\delta_{kl}+\delta_{ik}\delta_{jl}+\delta_{il}\delta_{jk})\\
          &=\{\tr(\bSigma^*)+[\bSigma^*]_{i,i}+[\bSigma^*]_{j,j})\delta_{ij}+([\bSigma^*]_{i,j}+[\bSigma^*]_{j,i}\}(1-\delta_{ij}).
    \end{aligned}
\end{equation*}

    Combine them together, the proof is finished.
\end{proof}

 For each $k=1,\ldots,K$, we define $\mathbb B(\bOmega_k^*)$ as the set containing $\bOmega_k^*$ and its neighborhood for some sufficiently large constant radius $\alpha>0$, i.e. 
\begin{equation*}
    \begin{aligned}
        \mathbb B(\bOmega_k^*):=\left\{ \bOmega\in \mathbb R^{p_k\times p_k}:\bOmega=\bOmega^\top ;\bOmega\succ 0;\Vert \bOmega-\bOmega_k^*\Vert_F\leq \alpha\right\}.
    \end{aligned}
\end{equation*}
\begin{lemma}\label{lemma:Sk}
Suppose the tensor data $\T_1,\ldots,\T_n\in \mathbb R^{p_1\times\cdots\times p_K}$ are sample from a tensor elliptical distribution $\mathcal{E}_{\boldsymbol p}(\bmu,\bSigma,\psi)$ and Assumption \ref{ass:eigenvalues}-\ref{ass:lambda_inverse} hold. For any fixed $\bOmega_j\in \mathbb B(\bOmega_k^*)$ with $\|\bOmega_j\|_2\lesssim p_j^{-1/2},j\neq k$, we have 
\begin{equation}\label{lemmaeq:Elambda_rate}
    \left\Vert \E(\S_k)-\frac{p_k\prod_{j\neq k}\tr(\bLambda_j^* \bOmega_j)}{p^*}\bLambda_k^* \right\Vert_\infty \leq
    O\left(\frac{1}{\sqrt{p^*}}\right),
\end{equation}
where $\S_k=(S_{k,jl})_{j,l\in\{1,\ldots,p_k\}}=p_k n^{-1}\sum_{i=1}^n [U(\T_i-\bmu)]_{(k)}(\bOmega_K\otimes\cdots\otimes\bOmega_{k+1}\otimes\bOmega_{k-1}\otimes\cdots\otimes\bOmega_1)[U(\T_i-\bmu)]_{(k)}^\top$. Moreover, we have
\begin{equation}\label{lemmaeq:Sk_rate}
    \left\Vert \S_k -\frac{p_k\prod_{j\neq k}\tr(\bLambda_j^* \bOmega_j)}{p^*}\bLambda_k^*\right\Vert_\infty=O_p\left(\sqrt{\frac{p_k^3\log p_k}{n (p^*)^2}}\right)+O\left(\frac{1}{\sqrt{p^*}}\right).
\end{equation}
\end{lemma}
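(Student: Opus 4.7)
The plan is to decompose
\[
\S_k - \frac{p_k\prod_{j\neq k}\tr(\bLambda_j^*\bOmega_j)}{p^*}\bLambda_k^* = \bigl\{\S_k - \E[\S_k]\bigr\} + \bigl\{\E[\S_k] - (\text{leading term})\bigr\}
\]
and bound the bias and the stochastic fluctuation by separate techniques. The key structural input is that under the tensor elliptical model $\vec(\T_i-\bmu) = v_i(\bSigma^*)^{1/2}\u_i$, so the spatial sign $\vec(U(\T_i-\bmu)) = (\bSigma^*)^{1/2}\u_i/\|(\bSigma^*)^{1/2}\u_i\|$ does not depend on $v_i$; everything then reduces to functionals of $\u_i$ uniform on $\mathbb S^{p^*-1}$, to which the preliminary lemmas are tailored.

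For the bias bound \eqref{lemmaeq:Elambda_rate}, I would expand each second moment of $\vec U$ as
\[
\E([\vec U]_\alpha[\vec U]_\beta) = \sum_{p,q}(\bSigma^*)^{1/2}_{\alpha,p}(\bSigma^*)^{1/2}_{\beta,q}\E\bigl(u_pu_q/\u^\top\bSigma^*\u\bigr),
\]
apply Lemma~4 entrywise to get $\E(u_pu_q/\u^\top\bSigma^*\u) = \delta_{pq}/\tr(\bSigma^*) + \mathcal{E}_{p,q}$ with $|\mathcal E_{p,q}| = O((p^*)^{-3/2})$, and substitute into $\E[\S_k]_{a,b} = p_k \sum_{s,t}[\otimes_{j\neq k}\bOmega_j]_{s,t}\E\{[\vec U]_{(a,s)}[\vec U]_{(b,t)}\}$. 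Using the Kronecker factorizations $\bSigma^* = \bigotimes_j\bSigma_j^*$ and $\bLambda_j^* = p_j\bSigma_j^*/\tr(\bSigma_j^*)$, the leading contribution collapses exactly to $(p_k/p^*)[\bLambda_k^*]_{a,b}\prod_{j\neq k}\tr(\bLambda_j^*\bOmega_j)$. The residual factorizes mode-by-mode: the sums over the $k$-th mode contract $(\bSigma_k^*)^{1/2}$ against its rows, and the sums over the other modes take the form of entries of $(\bSigma_j^*)^{1/2}\bOmega_j(\bSigma_j^*)^{1/2}$. Bounding these using Assumptions~\ref{ass:eigenvalues}--\ref{ass:lambda_inverse} together with the hypothesis $\|\bOmega_j\|_2 \lesssim p_j^{-1/2}$ should deliver the claimed $O(1/\sqrt{p^*})$.

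For the fluctuation bound \eqref{lemmaeq:Sk_rate}, I would fix an entry $(a,b)$ and view
\[
g_{a,b}(\u) := p_k [U(\T-\bmu)]_{(k),a,:}(\otimes_{j\neq k}\bOmega_j)[U(\T-\bmu)]_{(k),b,:}^\top
\]
as a function on the sphere, so that $[\S_k]_{a,b} - \E[\S_k]_{a,b} = n^{-1}\sum_{i=1}^n\{g_{a,b}(\u_i) - \E g_{a,b}(\u_i)\}$. The Lipschitz norm of $g_{a,b}$ is bounded by composition: $\u \mapsto \vec U$ has Lipschitz constant $O(1/\sqrt{p^*})$ on the scaled sphere, since $\|(\bSigma^*)^{1/2}\u\| \gtrsim \sqrt{p^*}$ under Assumption~\ref{ass:eigenvalues}, and the bilinear map $\vec U \mapsto g_{a,b}$ has Lipschitz constant $O(p_k\|\otimes_{j\neq k}\bOmega_j\|_2) = O(p_k^{3/2}/\sqrt{p^*})$ on $\{\|\vec U\|\leq 1\}$. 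Chaining these gives $\|g_{a,b}\|_{\mathrm{Lip}} = O(p_k^{3/2}/p^*)$, so Lemma~2 yields $\|g_{a,b}(\u_i) - \E g_{a,b}(\u_i)\|_{\psi_2} = O(p_k^{3/2}/p^*)$. Bernstein's inequality (Lemma~3) applied to the $n$ i.i.d.\ sub-exponential summands, together with a union bound over the $p_k^2$ entries, delivers $\|\S_k - \E\S_k\|_\infty = O_p(\sqrt{p_k^3\log p_k/(n(p^*)^2)})$, and a triangle inequality with the bias bound finishes the proof.

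The main obstacle is the bias bound. Although Lemma~4 only supplies an $O((p^*)^{-3/2})$ error per entry of $\E(u_iu_j/\u^\top\bSigma^*\u)$, it is not \emph{a priori} clear that summing this error against $\otimes_{j\neq k}\bOmega_j$ and contracting with $(\bSigma^*)^{1/2}$ does not amplify beyond the target $O(1/\sqrt{p^*})$. The decisive point is that the Kronecker factorization splits the $p^*$-dimensional residual into a product of $p_j$-dimensional sums, each of which can be tamed by Assumption~\ref{ass:lambda_inverse} on $\bSigma^*$ and the assumed scale of $\bOmega_j$; carrying out this bookkeeping cleanly is the most delicate computation in the proof.
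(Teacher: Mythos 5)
Your treatment of the stochastic term \eqref{lemmaeq:Sk_rate} is essentially the paper's argument: the paper also fixes an entry, views it as a Lipschitz function of $\u_i$ on the scaled sphere (computing $\sup_{\u}\|\nabla f_{j\ell}(\u)\|_2\lesssim\prod_{s\neq k}\|\bOmega_s\|_2$ rather than chaining two Lipschitz constants, but arriving at the same $\psi_2$ bound of order $p_k^{1/2}/p^*$ before the $p_k$ prefactor), and then applies Lemma \ref{lemma:bernstein} with a union bound over the $p_k^2$ entries. That half of your proposal is fine.

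The bias bound \eqref{lemmaeq:Elambda_rate} is where there is a genuine gap, and it is exactly at the point you flag as "delicate." Lemma~4 supplies only an entrywise magnitude bound $|\mathcal E_{p,q}|=O((p^*)^{-3/2})$ on the residual of $\E(\u\u^\top/\u^\top\bSigma^*\u)$; it carries no sign information and, crucially, no Kronecker structure. Your proposed rescue --- "the Kronecker factorization splits the $p^*$-dimensional residual into a product of $p_j$-dimensional sums" --- is therefore unavailable: only the leading term $\delta_{pq}/\tr(\bSigma^*)$ factorizes mode-by-mode, not $\mathcal E$. If you bound the contracted residual generically, you get, per entry of $\E(\S_k)$ and before the $p_k$ prefactor, at best $|\tr(A(\bSigma^*)^{1/2}\mathcal E(\bSigma^*)^{1/2})|\leq\|A\|_F\,\|\bSigma^*\|_2\,\|\mathcal E\|_F\lesssim 1\cdot p^*\cdot(p^*)^{-3/2}=(p^*)^{-1/2}$, hence $O(p_k(p^*)^{-1/2})$ after multiplying by $p_k$ --- off the target by a factor $p_k$ (the $L_1$-norm route is worse still, giving $O((p^*)^{1/2})$). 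The paper does not go through Lemma~4 entrywise at all. Instead it compares $\E[\text{self-normalized bilinear form}]$ with the ratio of expectations directly: the difference is written as $\E\bigl[(\u^\top\bSigma^*\u-\E\u^\top\bSigma^*\u)\cdot(\text{normalized bilinear form})\bigr]/\E(\u^\top\bSigma^*\u)$, then Cauchy--Schwarz is applied to the whole scalar product, using $\var(\u^\top\bSigma^*\u)=O(1/p^*)$ together with a fourth-moment computation (via Lemma \ref{lemma:moments} and Assumption \ref{ass:lambda_inverse}, replacing $1/(\u^\top\bSigma^*\u)$ by $\u^\top(\bSigma^*)^{-1}\u$) showing that the second moment of the normalized bilinear form is $O(p_k^{-2})$. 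That extra $p_k^{-1}$ per standard deviation is precisely what is invisible from the entrywise bound of Lemma~4, and it is what delivers $O((p^*)^{-1/2})$ in max norm. To repair your proof you would need to either redo this global Cauchy--Schwarz argument or strengthen Lemma~4 to a structured (e.g.\ weighted-sum) bound on $\mathcal E$; the entrywise version alone does not suffice.
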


\begin{proof}

Note that, by the property of tensor elliptical distribution, $[\T_i-\bmu]_{(k)}\in \mathbb R^{p_k\times(p^*/p_k)}\sim TE(\boldsymbol 0,\bSigma_1^*,\bSigma_K^*\otimes\cdots\otimes\bSigma_{k+1}^*\otimes\bSigma_{k-1}^*\otimes\cdots\otimes\bSigma_1^*)$, that is,
\begin{equation*}
    [\T_i-\bmu]_{(k)}=v_i(\bSigma_k^*)^{1/2}\mathbf u_i (\bSigma_K^*\otimes\cdots\otimes\bSigma_{k+1}^*\otimes\bSigma_{k-1}^*\otimes\cdots\otimes\bSigma_1^*)^{1/2},
\end{equation*}
where $\u_i:=\vec(\mathbf u_i)$ is a random vector distributed uniformly on the unit sphere surface in $\mathbb R^{p^*}$.

By definition and Lemma \ref{lemma:moments},
    \begin{equation*}
        \begin{aligned}
            &p_k^{-1}\left|\E([\S_{k}]_{j,l})-\E\left\{ \frac{v_i^{-2}\e_j^\top[\T_i-\bmu]_{(k)}(\bOmega_K\otimes\cdots\otimes\bOmega_{k+1}\otimes\bOmega_{k-1}\otimes\cdots\otimes\bOmega_1)[\T_i-\bmu]_{(k)}^\top\e_l  }{\E \left(v_i^{-2}\Vert \T_i-\bmu\Vert^{2}\right)}\right\}\right|\\
            =&\left|\E\left\{\e_j^\top[U(\T_i-\bmu)]_{(k)}\bOmega_{K-k}[U(\T_i-\bmu)]_{(k)}^\top\e_l  \right\}-\E \left\{\frac{v_i^{-2}\e_j^\top[\T_i-\bmu]_{(k)}\bOmega_{K-k}[\T_i-\bmu]_{(k)}^\top\e_l  }{\E \left(v_i^{-2}\Vert \T_i-\bmu\Vert^{2}\right)}\right\}\right|\\
            =&\left|\E\left\{\frac{\left(v_i^{-2}\Vert \T_i-\bmu\Vert^{2}-\E v_i^{-2}\Vert \T_i-\bmu\Vert^{2}\right)v_i^{-2}\e_j^\top[\T_i-\bmu]_{(k)}\bOmega_{K-k}[\T_i-\bmu]_{(k)}^\top\e_l }{v_i^{-2}\Vert \T_i-\bmu\Vert^{2}\E \left(v_i^{-2}\Vert \T_i-\bmu\Vert^{2}\right)} \right\}\right|\\
            \leq &\frac{p^*}{\tr(\bSigma^*)}\sqrt{\var(\u_i^\top \bSigma^* \u_i)\E\left\{\frac{\e_j^\top(\bSigma_k^*)^{1/2}  \mathbf u_i \left(\bSigma^{*1/2}\bOmega\bSigma^{*1/2}\right)_{K-k}\mathbf u_i^\top (\bSigma_k^{*})^{1/2}\e_l}{\u_i^\top \bSigma^*\u_i}\right\}^2}
        \end{aligned}
    \end{equation*}
    where $\bOmega_{K-k}=\bOmega_K\otimes\cdots\otimes\bOmega_{k+1}\otimes\bOmega_{k-1}\otimes\cdots\otimes\bOmega_1$ and $\left(\bSigma^{*1/2}\bOmega\bSigma^{*1/2}\right)_{K-k}=(\bSigma_K^*)^{1/2}\bOmega_K(\bSigma_K^*)^{1/2}\otimes\cdots\otimes(\bSigma_{k+1}^*)^{1/2}\bOmega_{k+1}(\bSigma_{k+1}^*)^{1/2}\otimes(\bSigma_{k-1}^*)^{1/2}\bOmega_{k-1}(\bSigma_{k-1}^*)^{1/2}\otimes\cdots\otimes(\bSigma_1^*)^{1/2}\bOmega_1(\bSigma_1^*)^{1/2}$.

    Similar with Equation \eqref{eq:varuu}, $\var(\u_i^\top \bSigma^* \u_i)=O\left\{(p^{*})^{-1}\right\}$. By Lemma \ref{lemma:moments} and Assumption \ref{ass:eigenvalues}--\ref{ass:lambda_inverse}, 
\begin{small}
    \begin{equation*}
        \begin{aligned}
            &\E\left\{\frac{\e_j^\top(\bSigma_k^*)^{1/2}  \mathbf u_i \left(\bSigma^{*1/2}\bOmega\bSigma^{*1/2}\right)_{K-k}\mathbf u_i^\top (\bSigma_k^{*})^{1/2}\e_l}{\u_i^\top \bSigma^*\u_i}\right\}^2\\
            \leq &\E\left\{\e_j^\top(\bSigma_k^*)^{1/2}  \mathbf u_i \left(\bSigma^{*1/2}\bOmega\bSigma^{*1/2}\right)_{K-k}\mathbf u_i^\top (\bSigma_k^{*})^{1/2}\e_l \u_i^\top (\bSigma^*)^{-1}\u_i\right\}^2\\
            =&\E \left\{ \sum_{s_1,t_1=1}^{p_k}[(\bSigma_k^*)^{1/2}]_{j,s_1}[(\bSigma_k^*)^{1/2}]_{t_1,l}\sum_{s_2,t_2=1}^{p^*/p_k}[\mathbf u_i]_{s_1,s_2} [\left(\bSigma^{*1/2}\bOmega\bSigma^{*1/2}\right)_{K-k}]_{s_2,t_2}[\mathbf u_i]_{t_1,t_2}\sum_{s_3,t_3=1}^{p^*}[\u_{i}]_{s_3} [(\bSigma^*)^{-1}]_{s_3,t_3}[\u_{i}]_{t_3}\right\}^2\\
            \leq &O\{(p^*)^{-4}\}\left\{ \sum_{s_1,t_1=1}^{p_k}[(\bSigma_k^*)^{1/2}]_{j,s_1}[(\bSigma_k^*)^{1/2}]_{t_1,l}\sum_{s_2,t_2=1}^{p^*/p_k}[\left(\bSigma^{*1/2}\bOmega\bSigma^{*1/2}\right)_{K-k}]_{s_2,t_2}\sum_{s_3,t_3=1}^{p^*}[(\bSigma^*)^{-1}]_{s_3,t_3}\right\}^2\\
            \leq&O\{(p^*)^{-4}\}\Vert\bSigma^*\Vert_{L_1}^2\left\{\sum_{s_2,t_2=1}^{p^*/p_k}[\left(\bSigma^{*1/2}\bOmega\bSigma^{*1/2}\right)_{K-k}]_{s_2,t_2}\sum_{s_3,t_3=1}^{p^*}[(\bSigma^*)^{-1}]_{s_3,t_3}\right\}^2\\
            \leq &O((p^*)^{-4})\Vert\bSigma^*\Vert_{L_1}^2\left(\frac{p^*}{p_k}\right)^2 \left\Vert\left(\bSigma^{*1/2}\bOmega\bSigma^{*1/2}\right)_{K-k}\right\Vert_F^2 (p^*)^2\Vert(\bSigma^*)^{-1}\Vert_{L_1}^2\\
            =&O(p_k^{-2})\prod_{l\neq k}^K\tr(\bOmega_l\bSigma_l^*)^2=O(p_k^{-2}).
        \end{aligned}
    \end{equation*}
\end{small}
    Thus, we have,
    \begin{equation}\label{eq:Elambda_rate}
        p_k^{-1}\left|\E([\S_{k}]_{j,l})-\E\left\{ \frac{v_i^{-2}\e_j^\top[\T_i-\bmu]_{(k)}\bOmega_{K-k}[\T_i-\bmu]_{(k)}^\top\e_l  }{\E \left(v_i^{-2}\Vert \T_i-\bmu\Vert^{2}\right)}\right\}\right|\leq O\{p_k^{-3/2}{(\prod_{l\neq k}^K p_l)^{-1/2}}\}\\
    \end{equation}
    We then calculate the term $\E \left\{\frac{v_i^{-2}\e_j^\top[\T_i-\bmu]_{(k)}\bOmega_{K-k}[\T_i-\bmu]_{(k)}^\top\e_l  }{\E \left(v_i^{-2}\Vert \T_i-\bmu\Vert^{2}\right)}\right\}$.
    
For $\E \left(v_i^{-2}\Vert \T_i-\bmu\Vert^{2}\right)$,
\begin{equation*}
    \begin{aligned}
        \E \left(v_i^{-2}\Vert \T_i-\bmu\Vert^{2}\right)=\E(\u_i^\top\bSigma^*\u_i)=\frac{\tr(\bSigma^*)}{p^{*}}.
    \end{aligned}
\end{equation*}
For $\E \left\{v_i^{-2}\e_j^\top[\T_i-\bmu]_{(k)}(\bOmega_K\otimes\cdots\otimes\bOmega_{k+1}\otimes\bOmega_{k-1}\otimes\cdots\otimes\bOmega_1)[\T_i-\bmu]_{(k)}^\top\e_l  \right\}$,
\begin{equation*}
    \begin{aligned}
        &\E \left\{v_i^{-2}\e_j^\top[\T_i-\bmu]_{(k)}(\bOmega_K\otimes\cdots\otimes\bOmega_{k+1}\otimes\bOmega_{k-1}\otimes\cdots\otimes\bOmega_1)[\T_i-\bmu]_{(k)}^\top\e_l  \right\}\\
        =&\E\left\{ \e_j^\top (\bSigma_k^*)^{1/2}\mathbf u_i \left(\bSigma^{*1/2}\bOmega\bSigma^{*1/2}\right)_{K-k} \mathbf u_i^\top (\bSigma_k^{*})^{1/2}\e_l\right\}\\
        =&\frac{\tr\left\{\left(\bSigma^{*1/2}\bOmega\bSigma^{*1/2}\right)_{K-k}\right\}}{p^*}(\bSigma_k^*)_{jl}\\
        =&\frac{\prod_{s\neq k}\tr(\bSigma_s^*\bOmega_s)}{p^*}(\bSigma_k^*)_{jl}.
    \end{aligned}
\end{equation*}
Combing Equation \eqref{eq:Elambda_rate}, the proof for Equation \eqref{lemmaeq:Elambda_rate} is completed.

For Equation \eqref{lemmaeq:Sk_rate}, we define
\begin{equation*}
    f_{j\ell}(\u_i):=\frac{\e_j^\top(\bSigma_k^*)^{1/2}  \mathbf u_i \left(\bSigma^{*1/2}\bOmega\bSigma^{*1/2}\right)_{K-k}\mathbf u_i^\top (\bSigma_k^{*})^{1/2}\e_{\ell}}{\u_i^\top \bSigma^*\u_i},
\end{equation*}
and its  {derivative} function is 
\begin{equation*}
\begin{aligned}
        \nabla f_{j\ell}(\u_i)=&\frac{2\u_i^\top \bSigma^*\u_i\cdot \left(\bSigma^{*1/2}\bOmega\bSigma^{*1/2}\right)_{K-k}\otimes (\{(\bSigma_k^{*})^{1/2}_{j\cdot}\}^\top(\bSigma_k^{*})^{1/2}_{\ell\cdot})}{(\u_i^\top \bSigma^*\u_i)^2}\\
        &-\frac{2\e_j^\top(\bSigma_k^*)^{1/2}  \mathbf u_i \left(\bSigma^{*1/2}\bOmega\bSigma^{*1/2}\right)_{K-k}\mathbf u_i^\top (\bSigma_k^{*})^{1/2}\e_{\ell}\bSigma^*\u_i}{(\u_i^\top \bSigma^*\u_i)^2},
\end{aligned}
\end{equation*}
with 
\begin{equation*}
    \begin{aligned}
        \left\|\nabla f_{j\ell}(\u_i)\right\|_2\leq&\frac{2 \left\|\left(\bSigma^{*1/2}\bOmega\bSigma^{*1/2}\right)_{K-k}\right\|_2\left\vert[\bSigma_k^*]_{j,j}\right\vert\left\vert[\bSigma_k^*]_{\ell,\ell}\right\vert}{\u_i^\top \bSigma^*\u_i}\\
        &+\frac{2\left\|\left(\bSigma^{*1/2}\bOmega\bSigma^{*1/2}\right)_{K-k}\right\|_2\left\vert[\bSigma_k^*]_{j,j}\right\vert\left\vert[\bSigma_k^*]_{\ell,\ell}\right\vert\left\|\bSigma^*\u_i\right\|_2}{(\u_i^\top \bSigma^*\u_i)^2}\\
        &\leq 2\left[\{\lambda_{\min}(\bSigma^*)\}^{-1}+\{\lambda_{\min}(\bSigma^*)\}^{-2}\lambda_{\max}(\bSigma^*)\right]\prod_{s=1,s\neq k}^K\left(\|\bSigma_s\|_2\|\bOmega_s\|_2\|\bSigma^*_k\|_\infty^2\right)\\
        &\lesssim \prod_{s=1,s\neq k}^K\|\bOmega_s\|_2, 
    \end{aligned}
\end{equation*}
since $\lambda_{\min}(\bSigma^*)\leq \u_i^\top \bSigma^*\u_i\leq \lambda_{\max}(\bSigma^*)$, $\left\|\bSigma^*\u_i\right\|_2\leq \lambda_{\max}(\bSigma^*)$ 
and $$\left\vert\e_j^\top(\bSigma_k^*)^{1/2}  \mathbf u_i \left(\bSigma^{*1/2}\bOmega\bSigma^{*1/2}\right)_{K-k}\mathbf u_i^\top (\bSigma_k^{*})^{1/2}\e_{\ell}\right\vert\leq \left\|\left(\bSigma^{*1/2}\bOmega\bSigma^{*1/2}\right)_{K-k}\right\|_2\left\vert[\bSigma_k^*]_{j,j}\right\vert\left\vert[\bSigma_k^*]_{\ell,\ell}\right\vert.$$ 

By Lemma \ref{lemma:lipschitz-concentration}, we have
\begin{equation*}
    \begin{aligned}
       &\left\|  \e_j^\top[U(\T_i-\bmu)]_{(k)}(\bOmega_{K-k})[U(\T_i-\bmu)]_{(k)}^\top\e_l-p_k^{-1}\E ([\S_{k}]_{j\ell})\right\|_{\psi_2}\\
      \lesssim &(p^*)^{-1/2}\|f\|_{Lip}\leq (p^*)^{-1/2}\sup_{\u_i}\left\|\nabla f_{j\ell}(\u_i)\right\|_2\lesssim (p^*)^{-1/2}\prod_{s=1,s\neq k}^K\|\bOmega_s\|_2.
    \end{aligned}
\end{equation*}
By Cauchy inequality,
\begin{equation*}
    \begin{aligned}
        &\left\|  p_k\e_j^\top[U(\T_i-\bmu)]_{(k)}(\bOmega_{K-k})[U(\T_i-\bmu)]_{(k)}^\top\e_l-\E ([\S_{k}]_{j,\ell})\right\|_{\psi_1}\\
        \leq &2 p_k\left\|  \e_j^\top[U(\T_i-\bmu)]_{(k)}(\bOmega_{K-k})[U(\T_i-\bmu)]_{(k)}^\top\e_l-p_k^{-1}\E ([\S_{k}]_{j,\ell})\right\|_{\psi_2}\\
        \lesssim &p_k(p^*)^{-1/2}\prod_{s=1,s\neq k}^K\|\bOmega_s\|_2.
    \end{aligned}
\end{equation*}

By Lemma \ref{lemma:bernstein} (Berstein’s inequality), as $t\lesssim n p_k(p^*)^{-1/2}\prod_{s=1,s\neq k}^K\|\bOmega_s\|_2 $, we have
\begin{equation*}
    \begin{aligned}
        &\pr\left(\left|p_k n^{-1}\sum_{i=1}^n \e_j^\top[U(\T_i-\bmu)]_{(k)}(\bOmega_{K-k})[U(\T_i-\bmu)]_{(k)}^\top\e_\ell-\E([S_{k}]_{j,\ell})\right|\geq t \right)\\
        \leq & 2\exp\left\{ -c\frac{n t^2}{\left\{p_k(p^*)^{-1/2}\prod_{s=1,s\neq k}^K\|\bOmega_s\|_2\right\}^2}\right\},
    \end{aligned}
\end{equation*}
for some constant $c>0$.
Then we have, as $t\lesssim n p_k(p^*)^{-1/2}\prod_{s=1,s\neq k}^K\|\bOmega_s\|_2\lesssim n p_k^{3/2}(p^*)^{-1}$, 
\begin{equation*}
    \begin{aligned}
        \pr(\Vert \S_k-\E(\S_k)\Vert_\infty\geq t)\leq 2p_k^2 \exp\left\{ -c\frac{n t^2}{\left\{p_k(p^*)^{-1/2}\prod_{s=1,s\neq k}^K\|\bOmega_s\|_2\right\}^2}\right\},
    \end{aligned}
\end{equation*}

Therefore, we have,
\begin{equation*}
    \Vert \S_k-\E(\S_k)\Vert_\infty =O_p\left(\sqrt{\frac{p_k^3\log p_k}{n (p^*)^2}}\right).
\end{equation*}

Combing the Equation \eqref{lemmaeq:Elambda_rate}, the proof is completed.
\end{proof}

Recall that
\begin{equation*}
    M_k:=\argmin_{\bOmega_k} q(\bOmega_1,\ldots,\bOmega_K)=\frac{p^*}{p_k\prod_{j\neq k}\tr(\bLambda_j^*\bOmega_j)}(\bLambda_k^*)^{-1},
\end{equation*}
and
\begin{equation*}
    \breve{M}_k=\argmin_{\bOmega_k\succ 0,(\bOmega_k)_{\mathbb S^c}=0} L(\bOmega_k),   
\end{equation*}
 $\bDelta_k=\breve{M}_k-M_k$, $\W_k= \S_k-\frac{p_k\prod_{j\neq k}\tr(\bLambda_j^*\bOmega_j)}{p^*}\bLambda_k^*$ and $R_k(\bDelta_k)=(M_k+\bDelta_k)^{-1}-M_k^{-1}+M_k^{-1}\bDelta_k M_k^{-1}$.

\begin{lemma}\label{lemma:lu2025_l11}
$R_k(\bDelta_k)$ can be represented as $M_k^{-1}\bDelta_k M_k^{-1}\bDelta_k\J M_k^{-1}$, where $\J=\sum_{k=0}^\infty (-1)^k (M_k^{-1}\bDelta_k)^k$. Furthermore, under as $C_3 d_k(p^*/p_k)^{-1/2}\|\bDelta_k\|_\infty<1/3$, the element-wise maximum norm of $R_k(\bDelta_k)$ is bounded as
\begin{equation*}
    \|R_k(\bDelta_k)\|_\infty\leq 3/2 (C_3)^3d_k(p^*/p_k)^{-3/2}\|\bDelta_k\|_\infty^2.
\end{equation*}
\end{lemma}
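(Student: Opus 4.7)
My plan is to handle the two claims separately: first derive the Neumann-series representation of $R_k(\bDelta_k)$, then bound its element-wise max norm via a carefully chosen decomposition that uses the sparsity factor $d_k$ only once.

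\textbf{Step 1 (the identity).} I factor $M_k+\bDelta_k=M_k(\I+M_k^{-1}\bDelta_k)$ and invert. The hypothesis $C_3 d_k(p^*/p_k)^{-1/2}\|\bDelta_k\|_\infty<1/3$, combined with $\|M_k^{-1}\|_{L_\infty}\le C_3(p^*/p_k)^{-1/2}$ (read off from the closed form $M_k^{-1}=\frac{p_k\prod_{j\neq k}\tr(\bLambda_j^*\bOmega_j)}{p^*}\bLambda_k^*$, using $\kappa_k^*\le C_3$ from Assumption~\ref{ass:bounded_complexity} and the scale constraint $\|\bOmega_j\|_2\lesssim p_j^{-1/2}$) and the sparsity bound $\|\bDelta_k\|_{L_\infty}\le d_k\|\bDelta_k\|_\infty$, gives $\|M_k^{-1}\bDelta_k\|_{L_\infty}<1/3<1$. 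Consequently the Neumann series $\J=\sum_{m=0}^\infty(-1)^m(M_k^{-1}\bDelta_k)^m$ converges and $(M_k+\bDelta_k)^{-1}=\J M_k^{-1}$. Substituting into the definition of $R_k(\bDelta_k)$ and noting that $\J-\I+M_k^{-1}\bDelta_k=\sum_{m\ge 2}(-1)^m(M_k^{-1}\bDelta_k)^m=(M_k^{-1}\bDelta_k)^2\J$ yields the required identity $R_k(\bDelta_k)=M_k^{-1}\bDelta_k M_k^{-1}\bDelta_k\J M_k^{-1}$.

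\textbf{Step 2 (the norm bound).} I split the five-factor product as $R_k(\bDelta_k)=(M_k^{-1}\bDelta_k M_k^{-1})\,(\bDelta_k\J M_k^{-1})$ and apply the mixed-norm inequality $\|XY\|_\infty\le\|X\|_\infty\|Y\|_{L_1}$. The left factor is bounded by $\|M_k^{-1}\bDelta_k M_k^{-1}\|_\infty\le\|M_k^{-1}\|_{L_\infty}\|\bDelta_k\|_\infty\|M_k^{-1}\|_{L_1}\le C_3^2(p^*/p_k)^{-1}\|\bDelta_k\|_\infty$, so crucially only the element-wise max of $\bDelta_k$ enters here. The right factor satisfies $\|\bDelta_k\J M_k^{-1}\|_{L_1}\le\|\bDelta_k\|_{L_1}\|\J\|_{L_1}\|M_k^{-1}\|_{L_1}\le d_k\|\bDelta_k\|_\infty\cdot(3/2)\cdot C_3(p^*/p_k)^{-1/2}$, where the sparsity enters through $\|\bDelta_k\|_{L_1}\le d_k\|\bDelta_k\|_\infty$, the factor $3/2$ comes from $\|\J\|_{L_1}\le\sum_m(1/3)^m=3/2$ (itself a consequence of $\|M_k^{-1}\bDelta_k\|_{L_1}<1/3$, using symmetry $\|M_k^{-1}\|_{L_1}=\|M_k^{-1}\|_{L_\infty}$), and $\|M_k^{-1}\|_{L_1}=\|M_k^{-1}\|_{L_\infty}$ again by symmetry. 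Multiplying the two factors delivers $\|R_k(\bDelta_k)\|_\infty\le(3/2)\,C_3^3\, d_k\,(p^*/p_k)^{-3/2}\|\bDelta_k\|_\infty^2$.

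\textbf{Main obstacle.} The only real subtlety lies in the grouping. The naive bound $\|R_k\|_\infty\le\|M_k^{-1}\bDelta_k\|_{L_\infty}^2\|\J M_k^{-1}\|_\infty$ would charge the sparsity factor $d_k$ once for each of the two copies of $\bDelta_k$, producing $d_k^2$ instead of the desired $d_k$. The trick is to control one copy of $\bDelta_k$ through $\|\bDelta_k\|_\infty$ directly (via $\|XY\|_\infty\le\|X\|_\infty\|Y\|_{L_1}$) and only the other copy through $\|\bDelta_k\|_{L_1}\le d_k\|\bDelta_k\|_\infty$; identifying this pairing is what makes the stated bound tight.
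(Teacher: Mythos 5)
Your proof is correct. Note that the paper itself does not prove this lemma at all: its ``proof'' is a one-line citation to Lemma~11 of \cite{lu2025robust}, which is in turn the tensor/spatial-sign adaptation of the classical remainder bound in Ravikumar et al.\ (2011, Lemma~5). What you have written is essentially the standard argument behind that cited result, correctly transported to the present scaling: the Neumann factorization $(M_k+\bDelta_k)^{-1}=\J M_k^{-1}$ with $\J-\I+M_k^{-1}\bDelta_k=(M_k^{-1}\bDelta_k)^2\J$ gives the identity, and your grouping $R_k=(M_k^{-1}\bDelta_k M_k^{-1})(\bDelta_k\J M_k^{-1})$ together with $\|XY\|_\infty\le\|X\|_\infty\|Y\|_{L_1}$ is exactly the device that charges the sparsity factor $d_k$ only once and produces the stated $\tfrac{3}{2}(C_3)^3 d_k(p^*/p_k)^{-3/2}\|\bDelta_k\|_\infty^2$; you are right that this grouping is the crux. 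Two minor points worth flagging. First, your bound $\|M_k^{-1}\|_{L_\infty}\le C_3(p^*/p_k)^{-1/2}$ hides an implicit constant coming from $p_k\prod_{j\neq k}\tr(\bLambda_j^*\bOmega_j)/p^*\lesssim (p^*/p_k)^{-1/2}$ under the constraint $\|\bOmega_j\|_2\lesssim p_j^{-1/2}$; this is consistent with how the paper itself uses these quantities (e.g., in the definition of $r$ in the proof of Theorem~\ref{thm:estimation_error_spectral}), but strictly speaking the constant in the final bound absorbs it. Second, the step $\|\bDelta_k\|_{L_1}\le d_k\|\bDelta_k\|_\infty$ relies on $\bDelta_k=\breve{M}_k-M_k$ being supported on $\mathbb S_k$ (both matrices share the support of $\bOmega_k^*$, with at most $d_k$ nonzeros per row/column by symmetry); it would be worth stating this explicitly, since the lemma as written does not mention the support restriction.
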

\begin{proof}
    See the Lemma 11 in \cite{lu2025robust}.
\end{proof}

\bibliographystyle{chicago}
\bibliography{ref}

\begin{thebibliography}{}

\bibitem[\protect\citeauthoryear{Arashi}{Arashi}{2023}]{arashi-2023-tensor_elliptical}
Arashi, M. (2023).
\newblock Some theoretical results on the tensor elliptical distribution.
\newblock {\em Journal of Data Science and Modeling\/}~{\em 1\/}(2), 29--41.

\bibitem[\protect\citeauthoryear{Baba, Shibata, and Sibuya}{Baba
  et~al.}{2004}]{baba2004partial}
Baba, K., R.~Shibata, and M.~Sibuya (2004).
\newblock Partial correlation and conditional correlation as measures of
  conditional independence.
\newblock {\em Australian \& New Zealand Journal of Statistics\/}~{\em
  46\/}(4), 657--664.

\bibitem[\protect\citeauthoryear{Bickel and Levina}{Bickel and
  Levina}{2008}]{bickel2008covariance}
Bickel, P.~J. and E.~Levina (2008).
\newblock Covariance regularization by thresholding.
\newblock {\em The Annals of Statistics\/}~{\em 36\/}(6), 2577--2604.

\bibitem[\protect\citeauthoryear{Cai, Liu, and Luo}{Cai
  et~al.}{2011}]{cai2011constrained}
Cai, T., W.~Liu, and X.~Luo (2011).
\newblock A constrained l1 minimization approach to sparse precision matrix
  estimation.
\newblock {\em Journal of the American Statistical Association\/}~{\em
  106\/}(494), 594--607.

\bibitem[\protect\citeauthoryear{Cai, Liu, and Zhou}{Cai
  et~al.}{2016}]{cai2016estimating}
Cai, T.~T., W.~Liu, and H.~H. Zhou (2016).
\newblock Estimating sparse precision matrix: Optimal rates of convergence and
  adaptive estimation.
\newblock {\em The Annals of Statistics\/}~{\em 44\/}(2), 455--488.

\bibitem[\protect\citeauthoryear{Chen, Han, and Li}{Chen
  et~al.}{2024}]{chen2024high}
Chen, E., Y.~Han, and J.~Li (2024).
\newblock High-dimensional tensor classification with cp low-rank discriminant
  structure.
\newblock {\em arXiv preprint arXiv:2409.14397\/}.

\bibitem[\protect\citeauthoryear{Chen and Fan}{Chen and
  Fan}{2023}]{chen2023statistical}
Chen, E.~Y. and J.~Fan (2023).
\newblock Statistical inference for high-dimensional matrix-variate factor
  models.
\newblock {\em Journal of the American Statistical Association\/}~{\em
  118\/}(542), 1038--1055.

\bibitem[\protect\citeauthoryear{Cheng, Peng, and Zou}{Cheng
  et~al.}{2023}]{cheng2023statistical}
Cheng, G., L.~Peng, and C.~Zou (2023).
\newblock Statistical inference for ultrahigh dimensional location parameter
  based on spatial median.
\newblock {\em arXiv preprint arXiv:2301.03126\/}.

\bibitem[\protect\citeauthoryear{Fan, Feng, and Wu}{Fan
  et~al.}{2009}]{fan2009SCAD}
Fan, J., Y.~Feng, and Y.~Wu (2009).
\newblock Network exploration via the adaptive lasso and scad penalties.
\newblock {\em The Annals of Applied Statistics\/}~{\em 3\/}(2), 521.

\bibitem[\protect\citeauthoryear{Fang}{Fang}{2018}]{fang-2018-symmetric}
Fang, K.~W. (2018).
\newblock {\em Symmetric multivariate and related distributions}.
\newblock Chapman and Hall/CRC.

\bibitem[\protect\citeauthoryear{Feng, Zou, and Wang}{Feng
  et~al.}{2016}]{feng2016multivariate}
Feng, L., C.~Zou, and Z.~Wang (2016).
\newblock Multivariate-sign-based high-dimensional tests for the two-sample
  location problem.
\newblock {\em Journal of the American Statistical Association\/}~{\em
  111\/}(514), 721--735.

\bibitem[\protect\citeauthoryear{Friedman, Hastie, and Tibshirani}{Friedman
  et~al.}{2008}]{friedman+Hastie+Tibshirani-2008-glasso}
Friedman, J., T.~Hastie, and R.~Tibshirani (2008).
\newblock Sparse inverse covariance estimation with the graphical lasso.
\newblock {\em Biostatistics\/}~{\em 9\/}(3), 432--441.

\bibitem[\protect\citeauthoryear{Han, Luo, Wang, and Zhang}{Han
  et~al.}{2022}]{han2022exact}
Han, R., Y.~Luo, M.~Wang, and A.~R. Zhang (2022).
\newblock Exact clustering in tensor block model: Statistical optimality and
  computational limit.
\newblock {\em Journal of the Royal Statistical Society Series B: Statistical
  Methodology\/}~{\em 84\/}(5), 1666--1698.

\bibitem[\protect\citeauthoryear{Han and Zhang}{Han and Zhang}{2023}]{Han2023}
Han, Y. and C.-H. Zhang (2023).
\newblock The tucker low-rank classification on tensor data.
\newblock {\em IEEE Transactions on Information Theory\/}~{\em 69\/}(2),
  1147–1167.

\bibitem[\protect\citeauthoryear{Hao, Wang, Wang, Zhang, Yang, and Sun}{Hao
  et~al.}{2021}]{hao2021sparsereg}
Hao, B., B.~Wang, P.~Wang, J.~Zhang, J.~Yang, and W.~W. Sun (2021).
\newblock Sparse tensor additive regression.
\newblock {\em Journal of machine learning research\/}~{\em 22\/}(64), 1--43.

\bibitem[\protect\citeauthoryear{He, Yin, Li, and Wang}{He
  et~al.}{2014}]{he2014graphical}
He, S., J.~Yin, H.~Li, and X.~Wang (2014).
\newblock Graphical model selection and estimation for high dimensional tensor
  data.
\newblock {\em Journal of Multivariate Analysis\/}~{\em 128}, 165--185.

\bibitem[\protect\citeauthoryear{Hore, Vinuela, Buil, Knight, McCarthy, Small,
  and Marchini}{Hore et~al.}{2016}]{hore2016tensor}
Hore, V., A.~Vinuela, A.~Buil, J.~Knight, M.~I. McCarthy, K.~Small, and
  J.~Marchini (2016).
\newblock Tensor decomposition for multiple-tissue gene expression experiments.
\newblock {\em Nature genetics\/}~{\em 48\/}(9), 1094--1100.

\bibitem[\protect\citeauthoryear{Hou, Mai, and Zou}{Hou et~al.}{2024}]{hou2024}
Hou, X., Q.~Mai, and H.~Zou (2024).
\newblock Tensor mixture discriminant analysis with applications to sensor
  array data analysis.
\newblock {\em The Annals of Applied Atatistics\/}~{\em 18\/}(1), 626--641.

\bibitem[\protect\citeauthoryear{Karahan, Rojas-Lopez, Bringas-Vega,
  Vald{\'e}s-Hern{\'a}ndez, and Valdes-Sosa}{Karahan
  et~al.}{2015}]{karahan2015tensor}
Karahan, E., P.~A. Rojas-Lopez, M.~L. Bringas-Vega, P.~A.
  Vald{\'e}s-Hern{\'a}ndez, and P.~A. Valdes-Sosa (2015).
\newblock Tensor analysis and fusion of multimodal brain images.
\newblock {\em Proceedings of the IEEE\/}~{\em 103\/}(9), 1531--1559.

\bibitem[\protect\citeauthoryear{Kolda and Bader}{Kolda and
  Bader}{2009}]{kolda+Bader-2009-tensor}
Kolda, T.~G. and B.~W. Bader (2009).
\newblock Tensor decompositions and applications.
\newblock {\em SIAM review\/}~{\em 51\/}(3), 455--500.

\bibitem[\protect\citeauthoryear{Li, Kim, and Altman}{Li
  et~al.}{2010}]{li2010dimension}
Li, B., M.~K. Kim, and N.~Altman (2010).
\newblock On dimension folding of matrix-or array-valued statistical objects.
\newblock {\em Ann. Statist.\/}~{\em 38\/}(1), 1094--1121.

\bibitem[\protect\citeauthoryear{Li, Mai, and Zhang}{Li et~al.}{2025}]{Li2025}
Li, J., Q.~Mai, and X.~Zhang (2025).
\newblock The tucker low-rank classification on tensor data.
\newblock {\em Statistica Sinica\/}~{\em 35}, 1111--1132.

\bibitem[\protect\citeauthoryear{Li and Zhang}{Li and
  Zhang}{2017}]{li2017parsimonious}
Li, L. and X.~Zhang (2017).
\newblock Parsimonious tensor response regression.
\newblock {\em Journal of the American Statistical Association\/}~{\em
  112\/}(519), 1131--1146.

\bibitem[\protect\citeauthoryear{Lu and Feng}{Lu and Feng}{2025}]{lu2025robust}
Lu, Z. and L.~Feng (2025).
\newblock Robust sparse precision matrix estimation and its application.
\newblock {\em arXiv preprint arXiv:2503.03575\/}.

\bibitem[\protect\citeauthoryear{Luo and Zhang}{Luo and Zhang}{2022}]{Luo2022}
Luo, Y. and A.~R. Zhang (2022).
\newblock Tensor clustering with planted structures: Statistical optimality and
  computational limits.
\newblock {\em The Annals of Statistics\/}~{\em 50}, 584–613.

\bibitem[\protect\citeauthoryear{Luo and Zhang}{Luo and Zhang}{2024}]{Luo2024}
Luo, Y. and A.~R. Zhang (2024).
\newblock Tensor-on-tensor regression: Riemannian optimization,
  over-parameterization, statistical-computational gap and their interplay.
\newblock {\em The Annals of Statistics\/}~{\em 52\/}(6), 2583–2612.

\bibitem[\protect\citeauthoryear{Lyu, Lock, and Eberly}{Lyu
  et~al.}{2017}]{lyu2017discriminating}
Lyu, T., E.~F. Lock, and L.~E. Eberly (2017).
\newblock Discriminating sample groups with multi-way data.
\newblock {\em Biostatistics\/}~{\em 18\/}(3), 434--450.

\bibitem[\protect\citeauthoryear{Lyu, Sun, Wang, Liu, Yang, and Cheng}{Lyu
  et~al.}{2019}]{lyu2019tensor}
Lyu, X., W.~W. Sun, Z.~Wang, H.~Liu, J.~Yang, and G.~Cheng (2019).
\newblock Tensor graphical model: Non-convex optimization and statistical
  inference.
\newblock {\em IEEE transactions on pattern analysis and machine
  intelligence\/}~{\em 42\/}(8), 2024--2037.

\bibitem[\protect\citeauthoryear{Mai, Zhang, Pan, and Deng}{Mai
  et~al.}{2022}]{Mai02102022}
Mai, Q., X.~Zhang, Y.~Pan, and K.~Deng (2022).
\newblock A doubly enhanced em algorithm for model-based tensor clustering.
\newblock {\em Journal of the American Statistical Association\/}~{\em
  117\/}(540), 2120--2134.

\bibitem[\protect\citeauthoryear{Min, Mai, and Li}{Min
  et~al.}{2023}]{min2023optimality}
Min, K., Q.~Mai, and J.~Li (2023).
\newblock Optimality in high-dimensional tensor discriminant analysis.
\newblock {\em Pattern Recognition\/}~{\em 143}, 109803.

\bibitem[\protect\citeauthoryear{Min, Mai, and Zhang}{Min
  et~al.}{2022}]{Min02012022}
Min, K., Q.~Mai, and X.~Zhang (2022).
\newblock Fast and separable estimation in high-dimensional tensor gaussian
  graphical models.
\newblock {\em Journal of Computational and Graphical Statistics\/}~{\em
  31\/}(1), 294--300.

\bibitem[\protect\citeauthoryear{Oja}{Oja}{2010}]{oja2010multivariate}
Oja, H. (2010).
\newblock {\em Multivariate nonparametric methods with R: an approach based on
  spatial signs and ranks}.
\newblock Springer Science \& Business Media.

\bibitem[\protect\citeauthoryear{Pan, Mai, and Zhang}{Pan
  et~al.}{2019}]{pan2019covariate}
Pan, Y., Q.~Mai, and X.~Zhang (2019).
\newblock Covariate-adjusted tensor classification in high dimensions.
\newblock {\em Journal of the American Statistical Association\/}~{\em
  114\/}(527), 1305--1319.

\bibitem[\protect\citeauthoryear{Ravikumar, Wainwright, Raskutti, and
  Yu}{Ravikumar et~al.}{2011}]{ravikumar2011high}
Ravikumar, P., M.~J. Wainwright, G.~Raskutti, and B.~Yu (2011).
\newblock High-dimensional covariance estimation by minimizing l1-penalized
  log-determinant divergence.
\newblock {\em Electronic Journal of Statistics\/}~{\em 5}, 935.

\bibitem[\protect\citeauthoryear{Sun, Wang, Liu, and Cheng}{Sun
  et~al.}{2015}]{sun+wang+liu+cheng-2015-Tlasso}
Sun, W., Z.~Wang, H.~Liu, and G.~Cheng (2015).
\newblock Non-convex statistical optimization for sparse tensor graphical
  model.
\newblock {\em Advances in neural information processing systems\/}~{\em 28},
  1081--1089.

\bibitem[\protect\citeauthoryear{Tibshirani}{Tibshirani}{1996}]{tibshirani-1996-lasso}
Tibshirani, R. (1996).
\newblock Regression shrinkage and selection via the lasso.
\newblock {\em Journal of the Royal Statistical Society Series B: Statistical
  Methodology\/}~{\em 58\/}(1), 267--288.

\bibitem[\protect\citeauthoryear{Vershynin}{Vershynin}{2018}]{vershynin2018high}
Vershynin, R. (2018).
\newblock {\em High-dimensional probability: An introduction with applications
  in data science}, Volume~47.
\newblock Cambridge university press.

\bibitem[\protect\citeauthoryear{Wang, Zhou, Yang, and Mai}{Wang
  et~al.}{2024}]{wang2024}
Wang, B., L.~Zhou, J.~Yang, and Q.~Mai (2024).
\newblock Density-convoluted tensor support vector machines.
\newblock {\em Statistics and Its Interface\/}~{\em 17\/}(2), 231--247.

\bibitem[\protect\citeauthoryear{Wang, Wang, and Zhang}{Wang
  et~al.}{2024}]{wang2024parsimonious}
Wang, N., W.~Wang, and X.~Zhang (2024).
\newblock Parsimonious tensor discriminant analysis.
\newblock {\em Statistica Sinica\/}.

\bibitem[\protect\citeauthoryear{Wang and Zhang}{Wang and
  Zhang}{2024}]{wang2024robust}
Wang, N. and X.~Zhang (2024).
\newblock Robust and covariance-assisted tensor response regression.
\newblock {\em Statistics and Its Interface\/}~{\em 17\/}(2), 291--303.

\bibitem[\protect\citeauthoryear{Wang, Zhang, and Li}{Wang
  et~al.}{2022}]{wangning2022}
Wang, N., X.~Zhang, and B.~Li (2022).
\newblock Likelihood-based dimension folding for tensor data.
\newblock {\em Statistica Sinica\/}~{\em 32}, 2405--2429.

\bibitem[\protect\citeauthoryear{Wang, Zhang, and Mai}{Wang
  et~al.}{2023}]{wang2023highreg_tdist}
Wang, N., X.~Zhang, and Q.~Mai (2023).
\newblock High-dimensional tensor response regression using the t-distribution.
\newblock {\em arXiv preprint arXiv:2306.12125\/}.

\bibitem[\protect\citeauthoryear{Xu, Zhang, and Gu}{Xu et~al.}{2017}]{xu2017gd}
Xu, P., T.~Zhang, and Q.~Gu (2017).
\newblock Efficient algorithm for sparse tensor-variate gaussian graphical
  models via gradient descent.
\newblock In {\em Artificial Intelligence and Statistics}, pp.\  923--932.
  PMLR.

\bibitem[\protect\citeauthoryear{Zhang and Han}{Zhang and
  Han}{2019}]{zhang2019optimalavd}
Zhang, A.~R. and R.~Han (2019).
\newblock Optimal sparse singular value decomposition for high-dimensional
  high-order data.
\newblock {\em Journal of the American Statistical Association\/}~{\em
  114\/}(528), 1708--1725.

\bibitem[\protect\citeauthoryear{Zhang and Xia}{Zhang and
  Xia}{2018}]{zhang2018tensorsvd}
Zhang, A.~R. and D.~Xia (2018).
\newblock Tensor svd: Statistical and computational limits.
\newblock {\em IEEE Transactions on Information Theory\/}~{\em 64\/}(11),
  7311--7338.

\bibitem[\protect\citeauthoryear{Zhao, Wang, and Feng}{Zhao
  et~al.}{2024}]{feng-2024-SSPCA}
Zhao, P., H.~Wang, and L.~Feng (2024).
\newblock Spatial sign based principal component analysis for high dimensional
  data.
\newblock {\em arXiv preprint arXiv:2409.13267\/}.

\bibitem[\protect\citeauthoryear{Zhao and Yu}{Zhao and
  Yu}{2006}]{zhao2006model}
Zhao, P. and B.~Yu (2006).
\newblock On model selection consistency of lasso.
\newblock {\em Journal of Machine learning research\/}~{\em 7\/}(Nov),
  2541--2563.

\bibitem[\protect\citeauthoryear{Zhou, Li, and Zhu}{Zhou
  et~al.}{2013}]{zhou2013tensor}
Zhou, H., L.~Li, and H.~Zhu (2013).
\newblock Tensor regression with applications in neuroimaging data analysis.
\newblock {\em Journal of the American Statistical Association\/}~{\em
  108\/}(502), 540--552.

\bibitem[\protect\citeauthoryear{Zhou, Sun, Zhang, and Li}{Zhou
  et~al.}{2023}]{zhou2023partially}
Zhou, J., W.~W. Sun, J.~Zhang, and L.~Li (2023).
\newblock Partially observed dynamic tensor response regression.
\newblock {\em Journal of the American Statistical Association\/}~{\em
  118\/}(541), 424--439.

\bibitem[\protect\citeauthoryear{Zhou}{Zhou}{2014}]{zhou2014gemini}
Zhou, S. (2014).
\newblock Gemini: Graph estimation with matrix variate normal instances.
\newblock {\em The Annals of Statistics\/}~{\em 42\/}(2), 532--562.

\bibitem[\protect\citeauthoryear{Zou, Peng, Feng, and Wang}{Zou
  et~al.}{2014}]{zou2014multivariate}
Zou, C., L.~Peng, L.~Feng, and Z.~Wang (2014).
\newblock Multivariate sign-based high-dimensional tests for sphericity.
\newblock {\em Biometrika\/}~{\em 101\/}(1), 229--236.

\end{thebibliography}
\end{document}